\newtheoremstyle{mythm}{3pt}{3pt}{}{16pt}{\bfseries}{:}{.5em}{}
\theoremstyle{mythm}
\newtheorem{theorem}{Theorem}
\newtheorem{example}{Example}
\newtheorem{definition}{Definition}
\newtheorem{remark}{Remark}
\newtheorem{corollary}{Corollary}
\newtheorem{lemma}{Lemma}
\newtheorem{construction}{Construction}
\newcommand{\cA}{\mathcal{A}}
\newcommand{\cB}{\mathcal{B}}
\newcommand{\cC}{\mathcal{C}}
\newcommand{\cE}{\mathcal{E}}
\newcommand{\cG}{\mathcal{G}}
\newcommand{\cR}{\mathcal{R}}
\newcommand{\cS}{\mathcal{S}}
\newcommand{\cV}{\mathcal{V}}
\newcommand{\Z}{\mathbb{Z}}
\newcommand{\F}{\mathbb{F}}
\DeclareMathOperator{\lcm}{lcm}
\DeclareMathOperator{\spn}{Span}
\DeclareMathOperator{\rank}{Rank}
\renewcommand{\le}{\leqslant}
\renewcommand{\leq}{\leqslant}
\renewcommand{\ge}{\geqslant}
\renewcommand{\geq}{\geqslant}
\newcommand{\mathset}[1]{\left\{#1\right\}}
\newcommand{\abs}[1]{\left|#1\right|}
\newcommand{\ceilenv}[1]{\left\lceil #1 \right\rceil}
\newcommand{\floorenv}[1]{\left\lfloor #1 \right\rfloor}
\newcommand{\parenv}[1]{\left( #1 \right)}
\begin{document}

\title{On Optimal Locally Repairable Codes and Generalized Sector-Disk Codes}
\author{
Han Cai,~\IEEEmembership{Member,~IEEE},
and Moshe Schwartz,~\IEEEmembership{Senior Member,~IEEE}
\thanks{The material in this paper was submitted in part to the IEEE International Symposium on Information Theory (ISIT 2020), Los Angeles, CA, USA.}%
\thanks{Han Cai is with the School
   of Electrical and Computer Engineering, Ben-Gurion University of the Negev,
   Beer Sheva 8410501, Israel
   (e-mail: hancai@aliyun.com).}%
\thanks{Moshe Schwartz is with the School
   of Electrical and Computer Engineering, Ben-Gurion University of the Negev,
   Beer Sheva 8410501, Israel
   (e-mail: schwartz@ee.bgu.ac.il).}%
\thanks{This work was supported in part by a German Israeli Project Cooperation (DIP) grant under grant no.~PE2398/1-1.}
}

\maketitle
\begin{abstract}
  Optimal locally repairable codes with information locality are
  considered. Optimal codes are constructed, whose length is also
  order-optimal with respect to a new bound on the code length derived
  in this paper. The length of the constructed codes is super-linear
  in the alphabet size, which improves upon the well known pyramid
  codes, whose length is only linear in the alphabet size.  The
  recoverable erasure patterns are also analyzed for the new codes.
  Based on the recoverable erasure patterns, we construct generalized
  sector-disk (GSD) codes, which can recover from disk erasures mixed
  with sector erasures in a more general setting than known
  sector-disk (SD) codes. Additionally, the number of sectors in the
  constructed GSD codes is super-linear in the alphabet size, compared
  with known SD codes, whose number of sectors is only linear in the
  alphabet size.
\end{abstract}

\begin{IEEEkeywords}
  Distributed storage, locally repairable codes, sector-disk codes, Goppa codes
\end{IEEEkeywords}

\section{Introduction}

\IEEEPARstart{I}{n} the large distributed storage systems of today,
disk failures are the norm rather than the exception. Thus,
erasure-coding techniques are employed to protect the data from disk
failures. An $[n,k]$ storage code encodes $k$ information symbols to
$n$ symbols and stores them across $n$ disks in a storage
system. Generally speaking, among all storage codes, maximum distance
separable (MDS) codes are preferred for practical systems both in
terms of redundancy and in terms of reliability. However, as pointed
in \cite{SAPDVCB}, MDS codes such as Reed-Solomon codes suffer from a
high repair cost. This is mainly because, for an $[n,k]$ MDS code,
whenever one wants to recover a symbol, one needs to contact $k$
surviving symbols, which is costly, especially in large-scale
distributed file systems.

To improve the repair efficiency, locally repairable codes, such as
pyramid codes \cite{HCL}, are deployed to reduce the number of symbols
contacted during the repair process. More precisely, the concept of
$r$-locality for a code $\mathcal{C}$ was initially studied in
\cite{GHJY} to ensure that a failed symbol can be recovered by only
accessing $r\ll k$ other symbols which form a repair set.

In the past decade, the original definition has been generalized in
different aspects. Firstly, to guarantee that the system can recover
locally from multiple erasures, the notion of $r$-locality was
generalized to $(r,\delta)$-locality, namely, each repair set is
capable of recovering from $\delta-1$ erasures. Secondly, to let code
symbols have good availability, the notion of locality has been
generalize to $(r,\delta)$-availability \cite{RPDV} (or
$(r,\delta)_c$-locality \cite{WZ}), in which case a code symbol has
more than one repair set. Thus, each repair set can be viewed as a
backup for the target code symbol, hence the code symbol can be
accessed independently through each repair set.  Finally, to satisfy
differing locality requirements, the notion of locality has been
generalized to the hierarchical and the unequal locality cases. Upper
bounds on the minimum Hamming distance of locally repairable codes and
constructions for them have been reported in the literature for those
generalizations. For examples, the reader may refer to
\cite{Bla,BH,CMST_SL,CXHF,CM,HCL,Jin,LMC,LMX,MK,PKLK,RKSV,SDYL,TPD,TB,WFEH}
for $(r,\delta)$-locality, \cite{CCFT,CMST,HX,RPDV,SES,TBF,WZ} for
$(r,\delta)$-availability, \cite{SAK} for hierarchical locality, and
\cite{KL,ZY} for unequal locality.

Based on the observation given in \cite{GHJY}, locally repairable
codes may recover from some special erasure patterns beyond their
minimum Hamming distance. Thus, another research branch for locally
repairable codes is the study of their recoverable erasure patterns.
In this aspect, two special kinds of codes have received most of the
attention. One is the $(\delta-1,\gamma)$-maximally recoverable code
first introduced in \cite{GHJY,BHH}, that can recover from erasure
patterns that include any $\delta-1$ erasures from each repair set,
and any other $\gamma$ erasures. The $(\delta-1,\gamma)$-maximally
recoverable codes are equivalent to $(\delta-1,\gamma)$-partial MDS
codes a special kind of array codes that was introduced to improve the
storage efficiency of redundant arrays of independent disks (RAIDs)
\cite{BHH}.  The other is $(\delta-1,\gamma)$-sector-disk (SD) codes
\cite{PB} that can recover from erasure patterns that include any
$\delta-1$ erasures from each repair set with consistent indices
(i.e., whole disk erasures) and any other $\gamma$ erasures (i.e.,
sector erasures). For construction of SD codes the reader may refer to
\cite{BHH,BPSY,CK,GHJY,GYBS,MK,PB} for example. The main drawback of
all of the reported constructions for SD codes is the requirement for
a large finite field.

In this paper, we focus on both $(r,\delta)$-locality and recoverable
erasure patterns beyond the minimum Hamming distance. For
$(r,\delta)$-locality we propose constructions of locally repairable
codes whose information symbols have $(r,\delta)$-locality and their
length is super-linear in the field size. The codes generated by our
constructions have new parameters compared with known locally
repairable codes. In particular, our codes have a smaller requirement
on the field size compared with pyramid codes. Additionally, we
consider the following fundamental problem: how long can a locally
repairable codes be, whose information symbols have
$(r,\delta)$-locality? We propose a new upper bound on the length of
optimal locally repairable codes. Based on this bound, we prove that
the codes generated by our construction may have order-optimal length.
We also analyze recoverable erasure patterns beyond the minimum
Hamming distance in the codes we construct. Based on this analysis, we
construct array codes that can recover special erasure patterns which
mix whole disk erasures together with additional sector erasures that
beyond the minimum Hamming distance. These codes generalize SD codes,
and we therefore call them generalized sector-disk (GSD)
codes. Finally, the classic Goppa codes are modified into locally
repairable codes. In this way, the generated codes not only share
similar parameters with the ones in \cite{CMST_SL}, but also yield
optimal locally repairable codes with new parameters.

The remainder of this paper is organized as
follows. Section~\ref{sec-preliminaries} introduces some necessary
notation and results.  Section~\ref{sec-LRC} proposes a new
construction of locally repairable codes and a bound on code
length. In Section~\ref{sec-GSD} we introduce GSD codes, and in
Section~\ref{sec-Goppa} we modify classical Goppa codes into a class
of locally repairable codes.  Section~\ref{sec-conclusion} concludes
this paper with some remarks.

\section{Preliminaries}\label{sec-preliminaries}

Throughout this paper,  the following notation are used:
\begin{itemize}
\item For a positive integer $n$, let $[n]$ denote the set
  $\{1,2,\cdots,n\}$;
\item For any prime power $q$, let $\mathbb{F}_q$ denote the finite field with
$q$ elements;
\item An $[n,k]_q$ linear code $\mathcal{C}$ over $\mathbb{F}_q$ is a
  $k$-dimensional subspace of $\mathbb{F}_q^n$ with a $k\times n$
  generator matrix $G=({\bf g}_1,{\bf g}_2,\cdots,{\bf g}_{n})$, where
  ${\bf g}_i$ is a column vector of length $k$ for all $1\le i\le
  n$. Specifically, it is called an $[n,k,d]_q$ linear code if the
  minimum Hamming distance is $d$;
\item For a subset $S\subseteq [n]$, let $|S|$ denote the cardinality
  of $S$, $\spn(S)$ be the linear space spanned by $\{{\bf g}_i ~:~ i\in
  S\}$ over $\mathbb{F}_q$ and $\rank(S)$ be the dimension of
  $\spn(S)$.
\end{itemize}

\subsection{Locally Repairable Codes}

Let us recall some necessary definitions concerning locally repairable
codes. Assume throughout that $\mathcal{C}$ be an $[n,k,d]_q$ linear
code with generator matrix $G=({\bf g}_1,{\bf g}_2,\cdots,{\bf
  g}_{n})$.
\begin{definition}[\cite{HCL,PKLK}]\label{def_locality}
The $i$th code symbol of an $[n, k, d]_q$ linear code $\mathcal{C}$, is said
to have $(r, \delta)$-locality  if
there exists a subset $S_i\subseteq
[n]$ (a \emph{repair set}) such that
\begin{itemize}
  \item $i\in S_i$ and $|S_i|\leq r+\delta-1$; and
  \item The minimum Hamming distance of the punctured code
    $\mathcal{C}|_{S_i}$, obtained by deleting the code symbols $c_j$
    for all $j \in [n]\setminus S_i$, is at least $\delta$.
\end{itemize}
Furthermore, an $[n,k,d]_q$ linear code $\mathcal{C}$ is said to have
information $(r,\delta)$-locality (denoted as $(r,\delta)_i$-locality)
if there exists a $k$-subset $I\subseteq [n]$ with $\rank(I)=k$ such that
for each $i\in I$, the $i$th code symbol has $(r, \delta)$-locality
and all symbol $(r,\delta)$-locality (denoted as
$(r,\delta)_a$-locality) if all the $n$ code symbols have
$(r,\delta)$-locality.
\end{definition}

In \cite{PKLK} (also, \cite{GHSY} for $\delta=2$), an upper bound on
the minimum Hamming distance of linear codes with
$(r,\delta)_i$-locality was derived as follows.
\begin{lemma}[\cite{PKLK}] \label{lemma_bound_i}
The minimum distance of an $[n,k,d]_q$ code $\mathcal{C}$ with
$(r,\delta)_i$-locality is upper bounded by
\begin{equation}\label{eqn_bound_all_lcoal}
d\leq n-k+1-\left(\left\lceil\frac{k}{r}\right\rceil-1\right)(\delta-1).
\end{equation}
\end{lemma}
\begin{definition}
A linear code with $(r,\delta)_i$-locality is said to be an
\emph{optimal locally repairable code} if its minimum Hamming
distance meets the Singleton-type bound of Lemma~\ref{lemma_bound_i}
with equality.
\end{definition}

According to \eqref{eqn_bound_all_lcoal}, even for an optimal
$[n,k,d]_q$ linear code with $(r,\delta)_i$-locality (or
$(r,\delta)_a$-locality), $d<n-k+1$ under the nontrivial case $k>r$.
Thus, for a linear code with $(r,\delta)_i$-locality, it is natural to
ask if it is possible for an erasure pattern $E\subset [n]$ with size
$d\leq |E|\leq n-k$ to be recoverable \cite{GHSY}. Although this
problem is still open in general, for the following two special
settings received special attention in some previous works:

{\bf Setting I:} (e.g., \cite{BHH,PB}) For a linear code with $(r,\delta)_a$-locality, let
$(r+\delta-1)|n$ and $|\{S_i~:~i\in [n]\}|=\frac{n}{r+\delta-1}$,
i.e., all the $n$ symbols are divided into $\frac{n}{r+\delta-1}$
repair sets. Let $s=\frac{n}{r+\delta-1}r-k$ and assume the elements
of $S_i$ are denoted by $\{s_{i,1},s_{i,2},\dots,s_{i,r+\delta-1}\}$. An erasure
pattern $E$ is required to be recoverable if
there exists a $(\delta-1)$-subset of $[r+\delta-1]$, $\{j_1,j_2,\cdots,j_{\delta-1}\}$, and there exists a set $E^*\subseteq E\subseteq [n]$,
$|E^*|\leq \frac{n}{r+\delta-1}r-k$ and
\[(E\backslash E^*)\cap S_i\subseteq \{s_{i,j_1},s_{i,j_2},\dots,s_{i,j_{\delta-1}}\}\text{ for each }i\in [n].\]

{\bf Setting II:} (e.g., \cite{GHJY}) For a linear code with
$(r,\delta)_a$-locality, let $(r+\delta-1)|n$ and $|\{S_i~:~i\in
[n]\}|=\frac{n}{r+\delta-1}$, i.e., all the $n$ symbols are divided
into $\frac{n}{r+\delta-1}$ repair sets. Let
$s=\frac{n}{r+\delta-1}r-k$. An erasure pattern $E$ is required to be
recoverable if there exists a set $E^*\subseteq
E\subseteq [n]$, $|E^*|\leq s$ and
\[|(E\backslash E^*)\cap S_i|\leq\delta-1\text{ for each }1\leq i\leq
\frac{n}{r+\delta-1}.\]

\begin{definition}
An $[n,k,d]_q$ linear code that satisfies the conditions of Setting I
is said to be a sector-disk code ($(\delta-1,s)$-SD).
\end{definition}

As an intuition, we make the following analogies between a distributed
storage system and Setting I. In this analogy, we have a total of
$r+\delta-1$ disks, each containing $\frac{n}{r+\delta-1}$ sectors,
with a total number of sectors in the system which is $n$. The $i$th
stripe, i.e., the set containing the $i$th sector from each disk, is
an $(r,\delta)$-repair set, for each $i$. Finally, an SD code is
capable of correcting $\delta-1$ whole disk erasures, as well as an
extra $s$ erased sectors.

\begin{definition}
An $[n,k,d]_q$ linear code that satisfies the conditions of Setting II
is said to be a maximally recoverable code (MR code).
\end{definition}

MR codes are also known as partial MDS (PMDS) codes
\cite{BHH,BPSY,CK}. It is easy to check that Setting I is a special
case of Setting II, thus, MR codes are also SD codes, but not vice
versa.  For explicit constructions, the reader may refer to
\cite{BPSY,PB} for SD codes, and \cite{BHH,BPSY,CK,GYBS,MK} for MR
codes. Finally, another example of a family of codes that may recover
special erasure patterns beyond the minimum Hamming distance is STAIR
codes \cite{LL}.

\subsection{Packings and Steiner Systems}

We now turn to describe some definitions and known facts concerning
the combinatorial objects of packings and Steiner systems.

\begin{definition}[\cite{CD}, VI. 40]\label{def_packing}
Let $n\geq 2$ and $t,\tau$ be positive integers.  A
$\tau$-$(n,t,1)$-\emph{packing} is a pair $(X,\cB)$, where $X$ is a
set of $n$ elements (called points) and $\cB\subseteq 2^X$ is a
collection of $t$-subsets of $X$ (called blocks), such that each
$\tau$-subset of $X$ is contained in at most one block of $\cB$. If
$\tau=2$, it is also denoted as an $(n,t,1)$-packing. The packing is
said to be \emph{regular} if each element of $X$ appears in exactly
$w$ blocks, denoted as a $w$-regular $\tau$-$(n,t,1)$-packing.
\end{definition}



\begin{definition}[\cite{CD}, II. 5]\label{def_Packing}
Let $n\geq 2$ and $t,\tau$ be positive integers. A
$(\tau,t,n)$-\emph{Steiner system} is a pair $(X,\cB)$, where $X$ is a
set of $n$ elements (called points) and $\cB\subseteq 2^X$ is a
collection of $t$-subsets of $X$ (called blocks), such that each
$\tau$-subset of $X$ is contained in exactly one block of $\cB$.
\end{definition}

\begin{lemma}[\cite{CD}, II. 5]
A $(\tau,t,n)$-Steiner system is a $\frac{\binom{n-1}{\tau-1}}{\binom{t-1}{\tau-1}}$-regular $\tau$-$(n,t,1)$-packing.
\end{lemma}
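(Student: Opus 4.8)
The plan is to verify the two parts of the claim separately: first that a $(\tau,t,n)$-Steiner system $(X,\cB)$ is indeed a $\tau$-$(n,t,1)$-packing, and second that it is $w$-regular with $w=\binom{n-1}{\tau-1}/\binom{t-1}{\tau-1}$. The first part is immediate from the definitions: in a Steiner system every $\tau$-subset of $X$ lies in \emph{exactly} one block, which in particular means it lies in \emph{at most} one block, so the pair $(X,\cB)$ meets Definition~\ref{def_packing}. The blocks are $t$-subsets in both notions, so nothing else needs checking here.

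For the regularity, I would fix an arbitrary point $x\in X$ and run a double-counting argument on the set of incident pairs
\[
  P_x = \bracenv{(T,B) ~:~ T\subseteq X,\ \abs{T}=\tau,\ x\in T,\ B\in\cB,\ T\subseteq B}.
\]
Counting by the first coordinate: each $\tau$-subset $T$ containing $x$ is obtained by adjoining to $x$ some $(\tau-1)$-subset of $X\setminus\{x\}$, and by the Steiner property each such $T$ lies in exactly one block; hence $\abs{P_x}=\binom{n-1}{\tau-1}$. Counting by the second coordinate: a block $B$ not containing $x$ contributes no pair, while a block $B$ with $x\in B$ contributes exactly the $\tau$-subsets $T$ with $x\in T\subseteq B$, of which there are $\binom{t-1}{\tau-1}$ (choose the remaining $\tau-1$ points of $T$ among the other $t-1$ points of $B$). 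Writing $w_x$ for the number of blocks through $x$, this gives $\abs{P_x}=w_x\binom{t-1}{\tau-1}$.

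Equating the two counts yields $w_x=\binom{n-1}{\tau-1}/\binom{t-1}{\tau-1}$, a value that does not depend on the choice of $x$; hence every point lies in the same number $w$ of blocks, so $(X,\cB)$ is a $w$-regular $\tau$-$(n,t,1)$-packing, as claimed. There is no real obstacle in this argument — it is a routine incidence count — the only points worth stating explicitly are that $\tau\le t\le n$ is assumed (so the binomial coefficients make sense) and that the quotient is automatically a positive integer since it equals the genuine combinatorial quantity $w_x$.
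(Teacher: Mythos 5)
Your argument is correct. Note that the paper does not actually prove this lemma—it is cited as a known fact from the Handbook of Combinatorial Designs—so there is no in-paper proof to compare against. What you have written is the standard double-counting derivation of the replication number of a Steiner system: fix a point $x$, count flags $(T,B)$ with $x\in T\subseteq B$ first over the $\binom{n-1}{\tau-1}$ choices of $T$ (each lying in exactly one block, by the Steiner property), and then over the $w_x$ blocks through $x$ (each contributing $\binom{t-1}{\tau-1}$ choices of $T$), and equate. The observation that ``exactly one'' implies ``at most one'' correctly dispatches the packing condition, and your closing remarks about divisibility and well-definedness of the binomial coefficients are exactly the right things to flag. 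This is the proof one would expect, and it is complete.
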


\begin{remark}
Given positive integers $\tau$, $t$ and $n$, the natural necessary
conditions for the existence of a $(\tau,t,n)$-Steiner system are that
$\binom{t-i}{\tau-i}|\binom{n-i}{\tau-i}$ for all $0\leq i\leq
\tau-1$. It was shown in \cite{K} that these conditions are also
sufficient except perhaps for finitely many cases.
\end{remark}

\section{Constructions of Locally Repairable Codes}\label{sec-LRC}

In this section, we introduce a general construction of locally
repairable codes with information locality. Let $k=r\ell+v$ with $0<
v\leq r$ and $n=k+(\ell+1)(\delta-1)+h$ with $h\geq 0$, where all
parameters are integers.

\begin{construction}\label{cons_poly}
  Let the $k$ information symbols be partitioned into $\ell+1$ sets, say,
  \begin{align*}
    I^{(i)}&=\{I_{i,1},I_{i,2},\dots,I_{i,r}\}, \quad \text{for $i\in[\ell]$,}\\
    I^{(\ell+1)}&=\{I_{\ell+1,1},I_{\ell+1,2},\dots,I_{\ell+1,v}\}.
  \end{align*}
Let $S$ be an $h$-subset of $\F_q$ and denote $A\triangleq
\F_q\backslash S$. Let $\cA=\{A_i~:~1\leq i\leq \ell+1\}$ be a family
of subsets of $A$ with $|A_i|=r+\delta-1$ and
$|A_{\ell+1}|=v+\delta-1$.  Define
\begin{equation*}
g_i(x)=\prod_{\theta\in A_i}(x-\theta)\text{ for }1\leq i\leq \ell+1
\end{equation*}
and
\begin{equation*}
\Delta(x)=\prod_{1\leq i\leq \ell+1}g_i(x).
\end{equation*}

A linear code with length $n$ can be generated by defining a linear
map from the information ${\bm I}=(I_{1,1},\dots, I_{\ell,v})\in
\F^k_{q}$ to a codeword ${\bm C}({\bm
  I})=(c_{1,1},\dots,c_{\ell,r+\delta-1},c_{\ell+1,1},\dots,c_{\ell+1,v+\delta-1},c_{\ell+2,1},\dots,c_{\ell+2,h})\in\F^n_q$,
thus the $[n,k]_q$ linear code is $\cC=\{{\bm C}({\bm I}) ~:~ {\bm
  I}\in \F_q^{k}\}$. This mapping is performed by the following two
steps:

\paragraph{Step 1}
For $1\leq j\leq \ell+1$, by polynomial interpolation, there exists a
unique $f_j(x)\in\F_q [x]$ with $\deg(f_j)< |A_j|-\delta+1$ such that
$f_j(\theta_{j,t})=I_{j,t}$ for $1\leq t\leq |A_j|-\delta+1$, where
$A_j=\{\theta_{j,t}~:~ 1\leq t\leq |A_j|\}$.  For $1\leq j\leq \ell+1$ and
$1\leq t\leq |A_j|$, set $c_{j,t}=f_j(\theta_{j,t})$.

\paragraph{Step 2}
Let
\begin{equation*}
f_I(x)=\Delta(x)\sum_{1\leq i\leq \ell+1}\frac{f_i(x)}{g_i(x)}.
\end{equation*} Set $c_{\ell+2,i}=f_{I}(s_i)$ for $1\leq i\leq h$, where
$S=\{s_i~:~1\leq i\leq h\}$.
\end{construction}

\begin{lemma}\label{lemma_LRC_poly}
The code $\cC$ generated by Construction~\ref{cons_poly} is an
$[n,k]_q$ linear code with $(r,\delta)_i$-locality.
\end{lemma}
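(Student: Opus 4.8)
The plan is to verify the three things implicit in the statement: that $\cC$ is linear, that its dimension is exactly $k$ (so that, together with the coordinate count $\ell(r+\delta-1)+(v+\delta-1)+h=k+(\ell+1)(\delta-1)+h=n$, it is an $[n,k]_q$ code), and that it possesses an information set of size $k$ all of whose symbols have $(r,\delta)$-locality.

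For linearity I would simply note that every step of the encoding is $\F_q$-linear in ${\bm I}$: polynomial interpolation ${\bm I}\mapsto(f_1,\dots,f_{\ell+1})$ is linear, each evaluation $c_{j,t}=f_j(\theta_{j,t})$ is linear, the combination $f_I(x)=\sum_{i}f_i(x)\prod_{j\neq i}g_j(x)$ (a genuine polynomial since $g_i\mid\Delta$) is linear in the $f_i$, and $c_{\ell+2,i}=f_I(s_i)$ is linear. For injectivity, hence $\dim\cC=k$, suppose ${\bm C}({\bm I})={\bm 0}$. Then for each $j\in[\ell+1]$ the polynomial $f_j$ vanishes on all of $A_j$, i.e.\ at $|A_j|$ distinct points, while $\deg f_j<|A_j|-\delta+1\le|A_j|$; hence $f_j\equiv0$, and since $I_{j,t}=f_j(\theta_{j,t})$ this forces ${\bm I}={\bm 0}$.

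Next I would exhibit the information set. Since $c_{j,t}=f_j(\theta_{j,t})=I_{j,t}$ for $1\le t\le|A_j|-\delta+1$, the coordinate set
\[
I=\bracenv{(j,t):1\le j\le\ell,\ 1\le t\le r}\cup\bracenv{(\ell+1,t):1\le t\le v}
\]
has size $\ell r+v=k$, and the projection of $\cC$ onto these coordinates is the identity on ${\bm I}$; equivalently the columns $\{{\bm g}_i:i\in I\}$ are the $k$ standard basis vectors, so $\rank(I)=k$. For the repair sets: given $i=(j,t)\in I$, take $S_i$ to be the entire $j$-th block, $S_i=\bracenv{(j,1),(j,2),\dots,(j,|A_j|)}$. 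Then $i\in S_i$ and $|S_i|=|A_j|\le r+\delta-1$, the only non-immediate case being $j=\ell+1$, where $|A_{\ell+1}|=v+\delta-1\le r+\delta-1$ because $v\le r$.

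Finally I would identify the punctured code $\cC|_{S_i}$. As ${\bm I}$ ranges over $\F_q^k$ the sub-block $I^{(j)}$ ranges freely over $\F_q^{|A_j|-\delta+1}$, and since $f_j$ is obtained from $I^{(j)}$ alone by interpolation — and the symbols of block $\ell+2$ impose no constraint on block $j$ — the polynomial $f_j$ ranges over all polynomials of degree $<|A_j|-\delta+1$. Hence
\[
\cC|_{S_i}=\bracenv{\parenv{f(\theta_{j,1}),\dots,f(\theta_{j,|A_j|})}:f\in\F_q[x],\ \deg f<|A_j|-\delta+1},
\]
a Reed--Solomon code of length $|A_j|$ and dimension $|A_j|-\delta+1$, which is MDS and therefore has minimum distance $|A_j|-(|A_j|-\delta+1)+1=\delta$. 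By Definition~\ref{def_locality} this establishes $(r,\delta)_i$-locality. I do not expect a serious obstacle here; the one point that needs care is the claim that $\cC|_{S_i}$ is literally a Reed--Solomon code, which rests on the observation that within the generator matrix the block-$j$ coordinates depend only on $I^{(j)}$ and are untouched by Step~2, so projecting onto $S_i$ loses no freedom.
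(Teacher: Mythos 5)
Your proof is correct and follows the same underlying idea as the paper's (quite terse) argument: each block of coordinates is the image of a polynomial of degree at most $|A_j|-\delta$, hence a Reed--Solomon code of distance $\delta$, which gives the repair property; the identity $c_{j,t}=I_{j,t}$ on the first $|A_j|-\delta+1$ coordinates of each block supplies the required information set of rank $k$. You merely flesh out the steps the paper dismisses as ``easy to verify'' (linearity, injectivity of the encoding map, the explicit information set $I$, and the identification of $\cC|_{S_i}$ as RS), so no genuinely different route is taken.
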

\begin{IEEEproof}
It is easy to verify that $\cC$ is an $[n,k]_q$ linear code. By
Construction~\ref{cons_poly}, Step 1, for any $C\in \cC$ and $1\leq
i\leq \ell+1$, $(c_{i,1},c_{i,1}, \dots,c_{i,|A_i|})$ is the
evaluation of a polynomial with degree at most $|A_i|-\delta$, which
means any $|A_i|-\delta+1\leq r$ components are capable of recovering
the remaining components. Thus, the code $\cC$ has
$(r,\delta)_i$-locality.
\end{IEEEproof}

For ease of presentation, we use the evaluation points (instead of the
indices of code symbols) to denote erasure patterns. Additionally, we
shall group the erased positions by the index of the repair set they
hit. Namely, we shall use $\cE=\mathset{E_1,\dots,E_{\ell+2}}$ to
denote an erasure pattern, where $E_j\subseteq A_j$ is the set of
erasure points in $A_j$, $1\leq j\leq \ell+1$, and
$E_{\ell+2}\subseteq S$ is the set of erasure points in $S$.

\begin{theorem}\label{theorem_e_pattern}
Let $\cC$ be the linear code generated by Construction~\ref{cons_poly}.  Assume $\cE=\{E_{i}~:~1\leq i\leq \ell+2\}$ is an
erasure pattern, with $E_{i}\subseteq A_{i}$ for $1\leq t\leq \ell+1$
and $E_{{\ell+2}}\subseteq S$. For $1\leq i\leq \ell+1$, assume that, in
$\cE$, there exist $w\leq \ell+1$ sets with $|E_{i_t}|\geq \delta$ for
$1\leq t\leq w$ and $1\leq i_t\leq \ell+1$.  If the erasure pattern
$\cE$ satisfies
\begin{equation}\label{eqn_e_num}
\abs{\bigcup_{1\leq t\leq w}E_{i_t}}+\abs{E_{\ell+2}}\leq h+\delta-1,
\end{equation}
and for any $1\leq j\leq w$
\begin{equation}\label{eqn_local_repair}
\abs{A_{i_j}\cap\parenv{\bigcup_{ j\ne t\in[w]}A_{i_t}}}\leq \delta-1,
\end{equation}
then the erasure pattern $\cE$ can be recovered.
\end{theorem}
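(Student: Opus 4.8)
The plan is to decode in two stages, mirroring the two-step structure of Construction~\ref{cons_poly}. First I would use the global parity symbols $c_{\ell+2,1},\dots,c_{\ell+2,h}$ together with the surviving symbols inside the repair sets that are ``lightly'' erased (those $A_i$ with $|E_i|\le\delta-1$) to reconstruct the polynomials $f_i$ on those light sets; by the local MDS property established in Lemma~\ref{lemma_LRC_poly}, each such $f_i$ has degree at most $|A_i|-\delta$, so any $\delta-1$ erasures within $A_i$ are locally correctable once $f_i$ is known, and in fact knowing enough evaluations pins down $f_i$ entirely. The real content is recovering the ``heavy'' sets $A_{i_1},\dots,A_{i_w}$ (those with $|E_{i_t}|\ge\delta$) and the erased global-parity positions $E_{\ell+2}$. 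For this I would view $f_I(x)=\Delta(x)\sum_i f_i(x)/g_i(x)$ restricted to the heavy sets: outside $\bigcup_t A_{i_t}$ the function $f_I$ is a known polynomial of controlled degree (all heavy $f_{i_t}$ contribute, but the value of $f_I$ on $S$ and on all light sets is recoverable from Stage~1), so we obtain many evaluations of $f_I$ and aim to interpolate it, after which each $f_{i_t}$ is read off as $f_I/\Delta \cdot g_{i_t}$ modulo the other factors.

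Concretely, the key algebraic identity is that for $\theta\in A_{i_j}$ we have $g_{i_j}(\theta)=0$, hence $f_I(\theta) = \Delta(\theta) f_{i_j}(\theta)/g_{i_j}(\theta) = \bigl(\prod_{t\ne j} g_{i_t}(\theta)\bigr)\cdot\bigl(\prod_{\text{light }i} g_i(\theta)\bigr)\cdot f_{i_j}(\theta)$ — a clean product, provided $\theta$ is not also a root of some other $g_{i_t}$, i.e.\ provided $\theta\notin A_{i_t}$ for $t\ne j$. This is exactly where hypothesis~\eqref{eqn_local_repair} enters: it guarantees that the ``bad'' points of $A_{i_j}$ lying in other heavy sets number at most $\delta-1$, so after the light-set erasures are cleared we still have at least $|A_{i_j}|-(\delta-1)-|E_{i_j}|$ clean surviving evaluations of $f_{i_j}$ on $A_{i_j}$, plus we can recover $f_{i_j}$ at the $\le\delta-1$ bad points by local decoding once $f_{i_j}$ is otherwise determined. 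The degree bookkeeping then shows $f_I$ has degree less than $n-k$ on the relevant support, and the counting condition~\eqref{eqn_e_num}, $|\bigcup_t E_{i_t}|+|E_{\ell+2}|\le h+\delta-1$, is precisely what guarantees we have enough surviving evaluation points (the $h$ global positions minus those in $E_{\ell+2}$, plus a $\delta-1$ slack absorbed by the local structure) to interpolate $f_I$ uniquely.

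I would carry this out in the order: (i) Stage~1 — recover all light $f_i$ and hence the values $c_{\ell+2,m}=f_I(s_m)$ for the surviving $s_m$, and the values of $f_I$ on all light sets; (ii) set up the interpolation problem for $f_I$ restricted to the heavy support, counting degrees and surviving points to invoke uniqueness of polynomial interpolation; (iii) extract each heavy $f_{i_t}$ by dividing out $\Delta/g_{i_t}$, using \eqref{eqn_local_repair} to ensure division is by a nonvanishing quantity at enough points; (iv) finish off the $\le\delta-1$ remaining erasures inside each heavy set and the erased global parities by local decoding, now that all $f_j$ are known. The main obstacle I anticipate is step (ii)–(iii): making the degree count tight enough that the number of usable evaluations of $f_I$ genuinely exceeds its degree, which forces one to carefully separate the contribution of heavy versus light sets and to use \eqref{eqn_local_repair} not just once but to control the multiplicities of shared roots across all pairs of heavy sets simultaneously. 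Everything else is routine MDS-style interpolation bookkeeping.
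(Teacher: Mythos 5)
Your broad plan is in the right direction (handle the lightly erased repair sets by local MDS decoding, then use the global parities together with the surviving heavy-set symbols, then interpolate), but the quantity you propose to interpolate is wrong, and fixing this is exactly the step the paper's proof hinges on.

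You aim to ``interpolate $f_I$'' from its values at surviving points in $S$ plus values recoverable on the light sets and the clean parts of the heavy sets. But $\deg(f_I)\leq\deg(\Delta)-\delta=n-h-\delta$, which is far too large: once the $A_i$'s are allowed to overlap, the total number of distinct evaluation points in $\parenv{\bigcup_iA_i}\cup S$ drops below $n-h+h$, and condition~\eqref{eqn_e_num} bounds only $\abs{\bigcup_tE_{i_t}}+\abs{E_{\ell+2}}$, not the total erasure count, so you cannot close the degree count for $f_I$ itself. What the paper does instead is subtract off the contribution of the known (light) polynomials to get $f_I(x)-g(x)$ and then divide by $\Phi(x)=\Delta(x)/\prod_{\theta\in A}(x-\theta)$, where $A=\bigcup_tA_{i_t}$. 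The resulting polynomial $f_E(x)=\sum_t f^*_{i_t}(x)$ has $\deg(f_E)\leq |A|-\delta$, and the count of recoverable evaluations of $f_E$ (surviving clean heavy points in $A^*$, plus points of $A\setminus A^*$ where \emph{all} hitting symbols survive, plus surviving $S$ points divided by $\Phi$) is exactly $|A|-\abs{\bigcup_tE_{i_t}}+h-\abs{E_{\ell+2}}\geq|A|-\delta+1$, which is precisely one more than the degree bound. Your own worry about step (ii)--(iii) is well-founded: without the $\Phi$-division there is no way to make the degree bookkeeping tight.

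A second, related gap: you never use the fact (emphasized in the paper's remark after the theorem) that \eqref{eqn_e_num} counts the \emph{union} of erased evaluation points, not the number of erased coordinates. In the overlapping case this is crucial. For $\theta\in A\setminus A^*$, i.e.\ an evaluation point shared among several heavy sets, one only obtains $f_E(\theta)$ from the \emph{sum} $\sum e_{i_t,j}c_{i_t,j}$ over all heavy sets containing $\theta$, so erasures of distinct coordinates sharing that $\theta$ cost only one evaluation point. Your accounting treats erasures coordinate-by-coordinate and would produce a weaker bound than the one claimed. Once $f_E$ is interpolated, your extraction step (iii) and the final local decoding step (iv) are essentially correct and match the paper's conclusion, using \eqref{eqn_local_repair} to guarantee $\abs{A^*_{i_t}}\geq r$ clean points to pin down each $f_{i_t}$.
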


\begin{remark}
Before proving Theorem~\ref{theorem_e_pattern}, we want to highlight
that the size $\abs{(\bigcup_{1\leq t\leq w}E_{i_t})\cup{E_{\ell+2}}}$
dictates whether an erasure pattern is recoverable, and not the number
of erased coordinates, i.e., $\sum_{1\leq t\leq
  w}|E_{i_t}|+|E_{\ell+2}|$. This is to say, if there are erasures that
share the same evaluation point (even in different coordinates), then
those erasures as a whole will only increase the discriminant value by
one. In such a case we may recover more than $h+\delta-1$ erasures
that are guaranteed to be recoverable by the value of the
Singleton-type bound, i.e., $h+\delta$.
\end{remark}

\begin{IEEEproof}
Since the linear code generated by Construction~\ref{cons_poly} has
$(r,\delta)_i$-locality, the locality is capable of recovering all the
erasures for the case $E_{i}\in \cE$ with $|E_{i}|\leq \delta-1$ and
$1\leq i\leq \ell+1$ independently.  Thus, in this proof we only need
to consider the case $E_{i}\in \cE$ with $|E_{i}|\geq \delta$ and
$1\leq i\leq \ell+1$, i.e., $E_{i_t}$ for $1\leq t\leq w$.

Let
$$\Phi(x)\triangleq\gcd\parenv{\frac{\Delta(x)}{g_{i_1}(x)},\frac{\Delta(x)}{g_{i_2}(x)},\cdots,\frac{\Delta(x)}{g_{i_w}(x)}}
=\frac{\Delta(x)}{\lcm\parenv{g_{i_1}(x),g_{i_2}(x),\ldots,g_{i_w}(x)}}
=\frac{\Delta(x)}{\prod_{\theta\in A}(x-\theta)},$$
where $A\triangleq \bigcup_{1\leq t\leq w} A_{i_t}$.
Considering $f_I(x)$ in Construction~\ref{cons_poly},
it can be rewritten as
\begin{equation*}
f_I(x)=\Phi(x)\sum_{1\leq t\leq w}f^*_{i_t}(x)+g(x)=\Phi(x)f_E(x)+g(x),
\end{equation*}
where
\begin{equation*}
g(x)=\Delta(x)\sum_{j\in [n]\setminus\{i_t~:~1\leq t\leq w\}}\frac{f_j(x)}{g_j(x)}
\end{equation*} is a known polynomial
determined by the known code symbols by Construction~\ref{cons_poly},
\begin{equation}\label{eqn_f^*_i}
f^*_{i_t}(x)=\frac{\Delta(x)f_{i_t}(x)}{\Phi(x)g_{i_t}(x)}=\frac{f_{i_t}(x)\prod_{\theta\in A}(x-\theta)}{g_{i_t}(x)},
\end{equation} and
\begin{equation}\label{eqn_deg_fi*}
\deg(f^*_{i_t}(x))=|A|-|A_{i_t}|+\deg(f_{i_t}(x))
\end{equation}
for $1\leq t\leq w$.
Recall that \eqref{eqn_local_repair} means that for $1\leq t\leq w$
there exists a set
$A^*_{i_t}\triangleq A_{i_t}\setminus \parenv{\bigcup_{j\ne t,1\leq j\leq w}A_{i_j}}$
 such that $|A^*_{i_t}|\geq r$.
 Let \begin{equation}\label{eqn_def_e_ij}
 e_{i_t,j}\triangleq\left.\frac{\prod_{\theta\in A}(x-\theta)}{g_{i_t}(x)}\right|_{x=\theta_{i_t,j}}
 \end{equation}
for $1\leq t\leq w$ and $\theta_{i_t,j}\in A_{i_t}$. Then, for
$\theta_{i_t,j}\in A^*_{i_t}$, we have
$$e_{i_t,j}c_{i_t,j}=e_{i_t,j}f_{i_t}(\theta_{i_t,j})=f^*_{i_t}(\theta_{i_t,j})=
f_E(\theta_{i_t,j}),$$
where the last equality holds by the fact that $f^*_{i_\tau}(\theta_{i_t,j})=0$ for $1\leq \tau\leq w$,
$\tau\ne t$, and $\theta_{i_\tau,j}\in A^*_{i_t}$.
Let $A^*=\bigcup_{1\leq t\leq w} A^*_{i_t}$.
Note that for $\theta\in A\setminus A^*$ and $1\leq t\leq w$,
$f^*_{i_t}(\theta)=0$ if $\theta\not \in A_{i_t}$.
For $\theta \in A\setminus A^*$, by \eqref{eqn_f^*_i} we have
\begin{equation*}
\sum_{\substack{\theta_{i_t,j}=\theta\in A_{i_t},\\
1\leq t\leq w}}e_{i_t,j}c_{i_t,j}=\sum_{\substack{\theta_{i_t,j}=\theta\in A_{i_t},\\
1\leq t\leq w}}e_{i_t,j}f_{i_t}(\theta_{i_t,j})=\sum_{\substack{\theta_{i_t,j}=\theta\in A_i,\\
1\leq t\leq w}}f^*_i(\theta_{i_t,j})=f_E(\theta),
\end{equation*}
where $e_{i_t,j}$ is defined by \eqref{eqn_def_e_ij}.  The last
equation implies that if we know all the code symbols in $A_{i_t}$ for
$1\leq t\leq w$ corresponding to the same element $\theta\in
A\setminus A^*$ then we know the value of $f_E(\theta)$.  In other
words, we know all the values $f_E(\theta)$ for $\theta \in
(A\setminus A^*) \setminus \parenv{\bigcup_{1\leq t\leq w}E_{i_t}}$.

Furthermore, for $\theta=\theta_{\ell+2,t}\in S\setminus E_{\ell+2}$,
we have $c_{\ell+2,t}=f_I(\theta)=\Phi(\theta)f_E(\theta)+g(\theta)$,
i.e., $f_E(\theta)=\frac{c_{\ell+2,t}-g(\theta)}{\Phi(\theta)}$, where
$g(x)$ can be regarded as a known polynomial.  Thus, under the erasure
pattern $\cE$, we know
\begin{equation*}
\begin{split}
&\abs{A^*\setminus\parenv{\bigcup_{1\leq t\leq w}E_{i_t}}}+\abs{(A\setminus A^*)\setminus \parenv{\bigcup_{1\leq t\leq w}E_{i_t}}}+|S|-|E_{\ell+2}|\\
=& \abs{A^*\setminus\parenv{\bigcup_{1\leq t\leq w}E_{i_t}}}+\abs{(A\setminus A^*)\setminus \parenv{\bigcup_{1\leq t\leq w}E_{i_t}}}+h-|E_{\ell+2}|\\
\geq& \abs{A^*\setminus\parenv{\bigcup_{1\leq t\leq w}E_{i_t}}}+\abs{(A\setminus A^*)\setminus \parenv{\bigcup_{1\leq t\leq w}E_{i_t}}}
+\abs{\parenv{\bigcup_{1\leq t\leq w}E_{i_t}}}-\delta+1\\
=& |A|-\delta+1\\
>& |A|-|A_i|+\deg(f_{i_t}(x))=\deg(f_{i_t}^*(x))\\
\end{split}
\end{equation*}
evaluation points and the corresponding value for $f_E(x)$, where the
two inequalities hold by \eqref{eqn_e_num} and \eqref{eqn_deg_fi*},
respectively. This is to say that we can recover $f_E(x)$ since
$\deg(f_E(x))\leq \max_{1\leq t\leq w}\deg(f^*_{i_t}(x))$.  Now the
fact that
\[\abs{A_{i_t}\cap\parenv{\bigcup_{1\leq j\leq w, j\ne
      t}A_{i_j}}}\leq \delta-1\]
implies that we can also recover
$f_{i_t}(x)$ for $1\leq t\leq w$ and all the code symbols in $A_{i_t}$
for $1\leq t\leq w$. Finally, by $f_I(x)$ we can recover all the code
symbols in $E_{\ell+2}$.
\end{IEEEproof}

\begin{corollary}\label{cor_LRC_poly}
If the set system $\cA$ of Construction~\ref{cons_poly} satisfies that
for any $\mu$-subset $D$ of $[\ell+1]$
\begin{equation}\label{eqn_cond_A}
\abs{A_i\cap \parenv{\bigcup_{j\ne i,j\in D} A_j}}\leq \delta-1 \quad \text{for }i\in D,
\end{equation}
then the code $\cC$ generated by Construction~\ref{cons_poly} is an
$[n,k,d]_q$ linear code with $(r,\delta)_i$ locality and $d\geq
\min\{(\mu+1)\delta,h+\delta\}$.  Furthermore, if $h+\delta\leq
(\mu+1)\delta$, then the code $\cC$ is optimal with respect to the bound
in Lemma~\ref{lemma_bound_i}.
\end{corollary}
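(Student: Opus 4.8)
The plan is to combine three facts that are already available: Lemma~\ref{lemma_LRC_poly} gives that $\cC$ is an $[n,k]_q$ code with $(r,\delta)_i$-locality, Theorem~\ref{theorem_e_pattern} yields a lower bound on the minimum distance by exhibiting recoverable erasure patterns, and Lemma~\ref{lemma_bound_i} supplies the matching upper bound. Hence only the distance estimate $d\ge\min\{(\mu+1)\delta,h+\delta\}$ and the arithmetic comparison with \eqref{eqn_bound_all_lcoal} require work.

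To prove $d\ge\min\{(\mu+1)\delta,h+\delta\}$ I would show that every erasure pattern with at most $\min\{(\mu+1)\delta,h+\delta\}-1$ erased coordinates is recoverable; for a linear code this is equivalent to the claimed distance lower bound. Write such a pattern as $\cE=\{E_1,\dots,E_{\ell+2}\}$ grouped by repair set in the notation of Theorem~\ref{theorem_e_pattern}, and let $i_1,\dots,i_w$ index the repair sets carrying at least $\delta$ erasures. Each such set contributes at least $\delta$ erasures, so $w\delta\le\sum_{t}\abs{E_{i_t}}\le\min\{(\mu+1)\delta,h+\delta\}-1<(\mu+1)\delta$, whence $w\le\mu$. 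I then verify the two hypotheses of Theorem~\ref{theorem_e_pattern}: condition \eqref{eqn_e_num} holds because $\abs{\bigcup_{t}E_{i_t}}+\abs{E_{\ell+2}}$ is at most the total number of erasures, which is at most $h+\delta-1$; and condition \eqref{eqn_local_repair} follows from \eqref{eqn_cond_A} once one notes that, since $\{i_1,\dots,i_w\}$ has size $w\le\mu$, it extends to a $\mu$-subset $D\subseteq[\ell+1]$ (the hypothesis being meaningful only when $\mu\le\ell+1$), and intersecting $A_{i_j}$ with the union over the smaller index family $\{i_1,\dots,i_w\}\setminus\{i_j\}$ can only shrink it relative to the union over $D\setminus\{i_j\}$, so the bound $\delta-1$ is inherited. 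Theorem~\ref{theorem_e_pattern} then makes $\cE$ recoverable, giving the distance bound.

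For the optimality statement, assume $h+\delta\le(\mu+1)\delta$, so the bound just proved reads $d\ge h+\delta$. Since $k=r\ell+v$ with $0<v\le r$ we have $\lceil k/r\rceil=\ell+1$, and since $n-k=(\ell+1)(\delta-1)+h$ the right-hand side of \eqref{eqn_bound_all_lcoal} equals $(\ell+1)(\delta-1)+h+1-\ell(\delta-1)=h+\delta$. Hence $d=h+\delta$ meets the Singleton-type bound of Lemma~\ref{lemma_bound_i} with equality, so $\cC$ is optimal.

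The only genuinely delicate point is the bookkeeping in the second paragraph: correctly isolating the ``heavy'' repair sets, bounding their number by $\mu$ using the fact that each holds at least $\delta$ erasures, and recognizing that hypothesis \eqref{eqn_cond_A}, though stated for exact $\mu$-subsets, is monotone enough to furnish \eqref{eqn_local_repair} for any index family of size at most $\mu$. Everything else is a direct appeal to Lemma~\ref{lemma_LRC_poly}, Theorem~\ref{theorem_e_pattern}, and the distance arithmetic against Lemma~\ref{lemma_bound_i}.
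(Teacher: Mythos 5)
Your proof is correct and follows essentially the same route as the paper: establish $w\le\mu$ from the erasure count, verify hypotheses \eqref{eqn_e_num} and \eqref{eqn_local_repair}, invoke Theorem~\ref{theorem_e_pattern}, and compare against Lemma~\ref{lemma_bound_i}. If anything, your write-up is a touch cleaner than the paper's (you avoid the paper's split into the cases $h+\delta\ge(\mu+1)\delta$ and $h+\delta<(\mu+1)\delta$ by working with the minimum directly, and you make explicit the monotonicity step that lets \eqref{eqn_cond_A} --- stated for exact $\mu$-subsets --- deliver \eqref{eqn_local_repair} for the smaller family $\{i_1,\dots,i_w\}$).
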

\begin{IEEEproof}
By Lemma~\ref{lemma_LRC_poly}, we only need to prove $d\geq
\min\{(\mu+1)\delta,h+\delta\}$. To bound the minimum Hamming distance
of $\cC$, we consider the following two cases:

For the case $h+\delta\geq (\mu+1)\delta$, we prove that the code
$\cC$ is capable of recovering any erasure pattern
$\cE=\{E_{i}~:~1\leq i\leq \ell+2\}$ with $\sum_{1\leq i\leq
  \ell+1}|E_i|+|E_{\ell+2}|\leq (\mu+1)\delta-1$, where $E_i\subseteq
A_i$ for $1\leq i\leq \ell+1$ and $E_{\ell+2}\subseteq S$.  Note that
the $(r,\delta)_i$-locality means that we only need to consider the
case that $|E_{i_t}|\geq \delta$ for $1\leq i_t\leq \ell+1$ and
$1\leq t\leq w$. Note that $\sum_{1\leq i\leq
  \ell+1}|E_i|+|E_{\ell+2}| \leq (\mu+1)\delta-1$ implies that $w\leq
\mu$. By \eqref{eqn_cond_A}, we may conclude that
$$\abs{A_{i_t}\cap\parenv{\bigcup_{ t\ne j\in[w]}A_{i_j}}}\leq \delta-1.$$ Now the fact that
$|E_{u+2}|+\sum_{1\leq t\leq w}|E_{i_t}|\leq \sum_{1\leq i\leq
  \ell+1}|E_{i}|\leq \mu\delta\leq h+\delta-1$ implies that $\cE$ is
recoverable by Theorem~\ref{theorem_e_pattern}.

For the case $(\mu+1)\delta> h+\delta$, similarly, we are going to
prove that the code $\cC$ is capable of recovering any erasure pattern
$\cE=\{E_{i}~:~1\leq i\leq \ell+2\}$ with $\sum_{1\leq i\leq
  \ell+2}|E_i|\leq h+\delta-1$, where $E_i\subseteq A_i$ for $1\leq
i\leq \ell+1$ and $E_{\ell+2}\subseteq S$.  Similarly, we conclude
that there are at most $w\leq \mu$ sets
$E_{i_1},E_{i_2},\cdots,E_{i_w}$ with $|E_{i_l}|\geq \delta$ and
$1\leq i_t\leq \ell+1$ for $1\leq t\leq w$.  Again by
Theorem~\ref{theorem_e_pattern}, \eqref{eqn_cond_A}, and the fact that
\[|E_{u+2}|+\sum_{1\leq t\leq w}|E_{i_t}|\leq \sum_{1\leq i\leq \ell+2}|E_{i}|\leq h+\delta-1,\]
we have that $\cE$ is recoverable.

Finally, the optimality of $\cC$ follows directly from
Lemma~\ref{lemma_bound_i} and the fact that $h+\delta\leq
(\mu+1)\delta$.
\end{IEEEproof}

\subsection{Optimal Locally Repairable Codes with $(r,\delta)_i$-Locality Based on Packings or Steiner Systems}

Based on Corollary~\ref{cor_LRC_poly}, to construct optimal locally
repairable codes we only need to find $\cA$ such that
\eqref{eqn_cond_A} holds.  In this section, we consider the case that
$\cA$ forms a combinatorial structure which satisfies the
condition given by \eqref{eqn_cond_A}. We first consider a condition
on the intersection of any pair of sets in $\cA$ rather than $\mu$
sets as in \eqref{eqn_cond_A}.

\begin{theorem}\label{theorem_optimal_code_cond_a}
  Assume the setting of Construction~\ref{cons_poly}. Let $\cA$ be a
  set system formed by subsets of $\F_q\setminus S$, where $S$ is an
  $h$-subset of $\F_q$.  If there exists a positive integer $a$ such
  that $|A_i\cap A_j|\leq a$ for all $i\neq j$, then the
  code $\cC$ generated by Construction~\ref{cons_poly} is an
  $[n,k,d\geq \min\{h+\delta,
    (\lceil\frac{\delta}{a}\rceil+1)\delta\}]_q$ linear code with
  $(r,\delta)_i$-locality. If additionally,
  $h\le\lceil\frac{\delta}{a}\rceil\delta$, then the code $\cC$
  generated by Construction~\ref{cons_poly} is an optimal
  $[n,k,d=h+\delta]_q$ linear code with $(r,\delta)_i$-locality.
\end{theorem}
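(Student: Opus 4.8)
The plan is to reduce everything to Corollary~\ref{cor_LRC_poly} by choosing the parameter $\mu$ correctly. I would set $\mu=\ceilenv{\frac{\delta}{a}}$. The first step is to verify that the pairwise hypothesis $|A_i\cap A_j|\le a$ forces condition \eqref{eqn_cond_A} for this value of $\mu$: for every subset $D\subseteq[\ell+1]$ with $|D|\le\mu$ and every $i\in D$, subadditivity of cardinality gives
\begin{equation*}
\abs{A_i\cap\parenv{\bigcup_{j\ne i,\,j\in D}A_j}}\le\sum_{j\ne i,\,j\in D}\abs{A_i\cap A_j}\le(|D|-1)a\le(\mu-1)a.
\end{equation*}
The only nontrivial point is the elementary inequality $(\ceilenv{\frac{\delta}{a}}-1)a\le\delta-1$: writing $\delta=ba+s$ with $b\ge0$ and $0\le s<a$, if $s=0$ then $\ceilenv{\frac{\delta}{a}}=b$ and $(b-1)a=\delta-a\le\delta-1$, while if $s\ge1$ then $\ceilenv{\frac{\delta}{a}}=b+1$ and $ba=\delta-s\le\delta-1$. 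Hence \eqref{eqn_cond_A} holds with $\mu=\ceilenv{\frac{\delta}{a}}$, and Corollary~\ref{cor_LRC_poly} immediately yields that $\cC$ is an $[n,k]_q$ linear code with $(r,\delta)_i$-locality whose minimum distance satisfies
\begin{equation*}
d\ge\min\bracenv{(\mu+1)\delta,\,h+\delta}=\min\bracenv{\parenv{\ceilenv{\tfrac{\delta}{a}}+1}\delta,\,h+\delta}.
\end{equation*}

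For the optimality claim I would observe that the additional hypothesis $h\le\ceilenv{\frac{\delta}{a}}\delta$ is precisely $h+\delta\le(\ceilenv{\frac{\delta}{a}}+1)\delta$, so the minimum above equals $h+\delta$ and thus $d\ge h+\delta$. It then remains to check that the Singleton-type bound of Lemma~\ref{lemma_bound_i} evaluates to exactly $h+\delta$ for the parameters of Construction~\ref{cons_poly}: since $k=r\ell+v$ with $0<v\le r$ we have $\ceilenv{\frac{k}{r}}=\ell+1$, and plugging $n=k+(\ell+1)(\delta-1)+h$ into \eqref{eqn_bound_all_lcoal} gives $n-k+1-\ell(\delta-1)=(\ell+1)(\delta-1)+h+1-\ell(\delta-1)=h+\delta$. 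Combining the lower and upper bounds yields $d=h+\delta$, so $\cC$ meets Lemma~\ref{lemma_bound_i} with equality and is therefore optimal.

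I do not anticipate a serious obstacle, as the heavy lifting is done by Corollary~\ref{cor_LRC_poly} and Theorem~\ref{theorem_e_pattern} behind it; the two points demanding care are (i) selecting $\mu=\ceilenv{\frac{\delta}{a}}$ rather than a value off by one, and justifying the ceiling inequality $(\ceilenv{\frac{\delta}{a}}-1)a\le\delta-1$ through the division-with-remainder case split, and (ii) confirming that the right-hand side of \eqref{eqn_bound_all_lcoal} simplifies to $h+\delta$ for Construction~\ref{cons_poly}, so that the distance lower bound from the corollary is in fact tight under the stated condition on $h$.
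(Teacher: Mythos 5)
Your proposal is correct and follows essentially the same route as the paper: set $\mu=\ceilenv{\delta/a}$, use subadditivity and the pairwise bound to verify \eqref{eqn_cond_A}, and invoke Corollary~\ref{cor_LRC_poly} for both the distance bound and, under $h\le\ceilenv{\delta/a}\delta$, the optimality. Your extra steps (the division-with-remainder verification of $(\ceilenv{\delta/a}-1)a\le\delta-1$ and the explicit evaluation of the Singleton-type bound to $h+\delta$) are correct and merely make explicit what the paper delegates to the cited corollary.
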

\begin{proof}
Let $\mu=\lceil\frac{\delta}{a}\rceil$. Then for any $\mu$-subset, $\cR\subseteq\cA$, and
for any $A'\in\cR$, we have
\begin{equation*}
  \left|A'\cap \left(\bigcup_{A\in\cR\setminus\mathset{A'}}A\right)\right|\leq
  (\mu-1)a=\left(\left\lceil\frac{\delta}{a}\right\rceil-1\right)a
  \leq \delta-1,
\end{equation*}
since $|S_i\cap S_j|\leq a$. The first claim follows from
Corollary~\ref{cor_LRC_poly}.  Note that $\mu\delta\geq
\lceil\frac{\delta}{a}\rceil\delta\ge h$ means that $(\mu+1)\delta\geq
h+\delta$. Again by Corollary~\ref{cor_LRC_poly} we have the desired
result follows.
\end{proof}

Based on Theorem~\ref{theorem_optimal_code_cond_a}, we can use
combinatorial designs to generate optimal locally repairable codes via
Construction~\ref{cons_poly}. The following corollaries follow
directly from Theorem~\ref{theorem_optimal_code_cond_a}.

\begin{corollary}\label{corollary_optimal_code_packing}
  Let $S$ be an $h$-subset of $\F_q$.  If there exists a
  $(\tau+1)$-$(q-h,r+\delta-1,1)$-packing $(\F_q\setminus S,\cB)$ and
  $0 \le h \le \lceil\frac{\delta}{\tau}\rceil\delta$, then there
  exists an optimal $[n,k,d]_q$ linear code with
  $(r,\delta)_i$-locality, where $n=|\cB|(r+\delta-1)+h-r+v$,
  $k=(|\cB|-1)r+v$, and $d=h+\delta$.
\end{corollary}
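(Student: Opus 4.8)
The plan is to deduce the corollary from Theorem~\ref{theorem_optimal_code_cond_a} by using the blocks of the packing, after a harmless trimming of one of them, as the set system $\cA$ in Construction~\ref{cons_poly}.

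The first step is to translate the packing property into a pairwise intersection bound on the blocks. If $B,B'\in\cB$ were distinct blocks with $\abs{B\cap B'}\ge\tau+1$, then any $(\tau+1)$-subset of $B\cap B'$ would be contained in two distinct blocks of $\cB$, contradicting the defining property of a $(\tau+1)$-$(q-h,r+\delta-1,1)$-packing. Hence $\abs{B\cap B'}\le\tau$ for all distinct $B,B'\in\cB$.

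The second step is to match the parameters of Construction~\ref{cons_poly}. I would set $\ell+1=\abs{\cB}$, let $v\in\{1,\dots,r\}$ be the given integer, and put $k=r\ell+v$ and $n=k+(\ell+1)(\delta-1)+h$. For the set system I would take $A_1,\dots,A_\ell$ to be $\ell$ of the blocks of $\cB$, and $A_{\ell+1}$ to be any $(v+\delta-1)$-subset of the one remaining block; this is possible because $v+\delta-1\le r+\delta-1$. Since every $A_i$ is either a block of $\cB$ or a subset of one, we still have $\abs{A_i\cap A_j}\le\tau$ for all $i\ne j$, so the hypothesis of Theorem~\ref{theorem_optimal_code_cond_a} holds with $a=\tau$. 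As the corollary assumes $0\le h\le\lceil\frac{\delta}{\tau}\rceil\delta$, Theorem~\ref{theorem_optimal_code_cond_a} then gives directly that the code $\cC$ produced by Construction~\ref{cons_poly} is an optimal $[n,k,d=h+\delta]_q$ linear code with $(r,\delta)_i$-locality. Finally I would rewrite the parameters in the claimed form: $k=r\ell+v=(\abs{\cB}-1)r+v$, and $n=k+\abs{\cB}(\delta-1)+h=\abs{\cB}(r+\delta-1)+h-r+v$.

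There is no serious obstacle; the only point requiring attention is the mismatch between the uniform block size $r+\delta-1$ of the packing and the size $v+\delta-1$ demanded of the final set $A_{\ell+1}$ in Construction~\ref{cons_poly}. This is resolved by replacing that one block by a subset of the correct size, an operation that can only shrink intersections and therefore preserves the bound $\abs{A_i\cap A_j}\le\tau$; everything else is bookkeeping.
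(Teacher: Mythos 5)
Your proof is correct and is essentially the derivation the paper intends: the paper merely states that the corollary ``follows directly from Theorem~\ref{theorem_optimal_code_cond_a},'' and your argument supplies exactly the missing details, namely that the $(\tau+1)$-packing property gives pairwise intersections at most $\tau$, that $a=\tau$ satisfies the hypothesis of Theorem~\ref{theorem_optimal_code_cond_a} under $h\le\lceil\delta/\tau\rceil\delta$, and that trimming one block to size $v+\delta-1$ (which can only shrink intersections) reconciles the uniform block size of the packing with the smaller final set $A_{\ell+1}$ required by Construction~\ref{cons_poly}. The parameter bookkeeping $k=(|\cB|-1)r+v$ and $n=|\cB|(r+\delta-1)+h-r+v$ also checks out.
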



\begin{corollary}
  If there exists a $(\tau+1,r+\delta-1,q-h)$-Steiner system and $0\le
  h\leq \lceil\frac{\delta}{\tau}\rceil\delta$, then there exists an
  optimal $[n,k,d]_q$ linear code with $(r,\delta)_i$-locality, where
\begin{align*}
n&=\frac{\binom{q-h}{\tau+1}(r+\delta-1)}{\binom{r+\delta-1}{\tau+1}}+h+v-r,\\
k&=\parenv{\frac{\binom{q-h}{\tau+1}}{\binom{r+\delta-1}{\tau+1}}-1}r+v,
\end{align*}
and $d=h+\delta$.
\end{corollary}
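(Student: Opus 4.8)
The plan is to obtain this statement as an immediate consequence of Corollary~\ref{corollary_optimal_code_packing}, the only work being to reinterpret a Steiner system as a packing and to count its blocks.

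First I would observe that a $(\tau+1,r+\delta-1,q-h)$-Steiner system $(\F_q\setminus S,\cB)$ is in particular a $(\tau+1)$-$(q-h,r+\delta-1,1)$-packing: by Definition~\ref{def_Packing} every $(\tau+1)$-subset of the point set lies in exactly one block, hence in at most one block, which is precisely the packing condition of Definition~\ref{def_packing}. The arithmetic hypothesis $0\le h\le\lceil\frac{\delta}{\tau}\rceil\delta$ is literally the same in both statements, so it transfers verbatim, and all the remaining hypotheses of Construction~\ref{cons_poly} (the relations $k=r\ell+v$ with $0<v\le r$, etc.) are inherited unchanged.

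Next I would pin down $\abs{\cB}$ by the standard double count of incidence pairs $(T,B)$ with $B\in\cB$ and $T$ a $(\tau+1)$-subset of $B$: each of the $\binom{q-h}{\tau+1}$ many $(\tau+1)$-subsets of $\F_q\setminus S$ is contained in exactly one block, while each block, having $r+\delta-1$ points, contains exactly $\binom{r+\delta-1}{\tau+1}$ such subsets, so
\[
\abs{\cB}=\frac{\binom{q-h}{\tau+1}}{\binom{r+\delta-1}{\tau+1}},
\]
which in particular is a positive integer, so the expression is well defined. (Equivalently one may invoke the lemma stating that a $(\tau+1,r+\delta-1,q-h)$-Steiner system is a regular packing and read off the same block count.)

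Finally I would substitute $\abs{\cB}=\binom{q-h}{\tau+1}/\binom{r+\delta-1}{\tau+1}$ into the parameters furnished by Corollary~\ref{corollary_optimal_code_packing}: the length $n=\abs{\cB}(r+\delta-1)+h-r+v$, the dimension $k=(\abs{\cB}-1)r+v$, and the minimum distance $d=h+\delta$, together with the conclusions that the resulting code has $(r,\delta)_i$-locality and is optimal with respect to Lemma~\ref{lemma_bound_i}. This reproduces exactly the claimed $n$, $k$, and $d$. I do not expect any genuine obstacle here: the whole argument is a one-line reduction to the packing corollary plus an elementary counting identity for the number of blocks.
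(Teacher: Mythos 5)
Your proposal is correct and is exactly the argument the paper intends: the corollary is stated without a separate proof precisely because it follows from Corollary~\ref{corollary_optimal_code_packing} once one notes that a $(\tau+1,r+\delta-1,q-h)$-Steiner system is a $(\tau+1)$-$(q-h,r+\delta-1,1)$-packing with $\abs{\cB}=\binom{q-h}{\tau+1}/\binom{r+\delta-1}{\tau+1}$ blocks. Your double-counting of $(\tau+1)$-subset/block incidences to obtain the block count is the standard identity and matches the substitution into the parameters of that corollary.
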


\subsection{Optimal Locally Repairable Codes with Order-Optimal Length: $(r,\delta)_i$-Locality}

Finding the maximal length of optimal locally repairable codes with
$(r,\delta)_a$-locality was the subject of \cite{GXY,CMST_SL}, for the
cases of $\delta=2$ and $\delta\geq 2$, respectively. Both
constructions and bounds are proposed there. It is therefore natural
to further ask how long can optimal locally repairable codes with
$(r,\delta)_i$-locality be. This question is also important to us in
order to analyze the performance of Construction~\ref{cons_poly}.

\begin{theorem}\label{theorem_bound_delta>2}
Let $n=k+\ell (\delta-1)+h$, $\delta\geq 2$, $k=\ell r$. Assume there
exists an optimal $[n,k,d]_q$ linear code $\cC$ with
$(r,\delta)_i$-locality. For any given integer $0\leq a\leq h$
define $T(a)=\floorenv{(d-a-1)/\delta}$.  If
$T(a)\geq 2$, then
\begin{align*}
n&\leq
\begin{cases}
\frac{r+\delta-1}{r}\parenv{\frac{T(a)-1}{2(q-1)}q^{\frac{2(h-a-1)}{T(a)-1}}+a+1}-\frac{h(\delta-1)}{r}, &\text{ if } T(a) \text{ is odd},  \\
\frac{r+\delta-1}{r}\parenv{\frac{T(a)}{2(q-1)}q^{\frac{2(h-a)}{T(a)}}+a}-\frac{h(\delta-1)}{r}, &\text{ if } T(a) \text{ is even},\\
\end{cases}
\end{align*}
where $h$ can be rewritten as $h=d-\delta$.
\end{theorem}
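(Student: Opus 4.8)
The plan is to turn the optimality hypothesis into a girth condition on an auxiliary graph and then invoke a Moore-type bound, in the spirit of the analogous result for all-symbol locality. First I would unwind optimality: with $k=\ell r$ we have $\lceil k/r\rceil=\ell$, so equality in Lemma~\ref{lemma_bound_i} combined with $n=k+\ell(\delta-1)+h$ gives $d=h+\delta$, matching the remark $h=d-\delta$. Since $d=h+\delta>h$, no nonzero codeword fits inside $h$ coordinates; a standard normalization for optimal locally repairable codes with $(r,\delta)_i$-locality then provides $\ell$ pairwise disjoint information repair sets $A_1,\dots,A_\ell$, each of size exactly $r+\delta-1$ with $\cC|_{A_i}$ an MDS $[r+\delta-1,r,\delta]$ code, the remaining $h$ coordinates forming a set $B$ whose $h\times h$ block in the parity-check matrix is invertible. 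Writing the parity-check matrix in the induced block form (block-diagonal local blocks $\Lambda_1,\dots,\Lambda_\ell$, plus an $h\times n$ global block $H_B$), for each $i$ and each $\delta$-subset $J\subseteq A_i$ the columns of $\Lambda_i$ on $J$ admit a unique up-to-scalar all-nonzero dependency $\sum_{j\in J}\lambda_j\Lambda_i[:,j]=0$; I package the induced vector $v_{i,J}=\sum_{j\in J}\lambda_jH_B[:,j]\in\F_q^h$, which is precisely the global syndrome of the local codeword of $\cC|_{A_i}$ supported exactly on $J$.

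The key point is that short dependencies among the $v_{i,J}$ are forbidden by the minimum distance. If $v_{i_1,J_1},\dots,v_{i_t,J_t}$ come from distinct repair sets and satisfy $\sum_s\mu_sv_{i_s,J_s}=0$ with all $\mu_s\neq0$, then assembling the corresponding local codewords yields a nonzero codeword supported on $\bigcup_sJ_s$, hence of weight $t\delta$, which is impossible as soon as $t\delta\le d-1$. To bring in the parameter $a$, I would fix an $a$-subset of $B$, let $W\subseteq\F_q^h$ be the $a$-dimensional span of the corresponding columns of $H_B$, and pass to the reductions $\bar v_{i,J}\in\F_q^h/W\cong\F_q^{\,h-a}$: a dependency $\sum_s\mu_s\bar v_{i_s,J_s}=0$ now only forces $\sum_s\mu_sv_{i_s,J_s}\in W$, which can be cancelled by a correction vector of weight at most $a$ supported on the chosen $a$ coordinates of $B$, producing a nonzero codeword of weight at most $t\delta+a$; this is impossible once $t\delta+a\le d-1$, i.e.\ $t\le T(a)=\lfloor(d-a-1)/\delta\rfloor$. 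So, viewed in $PG(h-a-1,q)$, the points $\bar v_{i,J}$ are spread out: any $T(a)$ of them coming from distinct repair sets are linearly independent.

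I would then read this as a girth condition: choosing two distinguished $\delta$-subsets per repair set turns each $A_i$ into an edge between two points of $PG(h-a-1,q)$, and a cycle of length $t$ in the resulting graph telescopes to a vanishing combination of $t$ of the $\bar v$'s, so no cycle of length $\le T(a)$ exists, i.e.\ the graph has girth exceeding $T(a)$. Applying the Moore bound for graphs of given girth on $N\approx\tfrac{q^{h-a}-1}{q-1}$ vertices caps the number of edges --- namely $\ell$ --- at essentially $\tfrac{T(a)-1}{2(q-1)}q^{2(h-a-1)/(T(a)-1)}$ when $T(a)$ is odd (girth at least the even number $T(a)+1$) and $\tfrac{T(a)}{2(q-1)}q^{2(h-a)/T(a)}$ when $T(a)$ is even, the even/odd dichotomy of the theorem being exactly that of the Moore bound, and the hypothesis $T(a)\ge2$ being what makes the estimate non-vacuous. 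Converting through $n=\ell(r+\delta-1)+h$ and $k=\ell r$ (using $\tfrac{nr+h(\delta-1)}{r+\delta-1}=\ell r+h$) rearranges these into the two displayed bounds on $n$. The delicate part --- and the main obstacle --- is the exact calibration: getting the minimum-distance condition to yield precisely the claimed girth, getting the vertex count to come out to exactly the stated power of $q$ divided by $q-1$, and carefully bookkeeping both the $a$-set-aside reduction and the parity between the even- and odd-girth regimes; the structural normalization of the repair sets in the first step, while routine, is also load-bearing throughout.
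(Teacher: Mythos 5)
Your first half tracks the paper closely and is essentially correct: the disjoint-repair-set normalization is the paper's Theorem~\ref{theorem_repair_sets}, and your weight argument that a dependency $\sum_s \mu_s \bar v_{i_s,J_s}=0$ modulo the $a$-dimensional subspace $W$ would produce a codeword of weight at most $t\delta+a < d$ is precisely the reasoning the paper uses to show that any $T(a)$ columns of its auxiliary matrix $M_{2,a}$ are linearly independent. The problem is the endgame: the graph-plus-Moore-bound step cannot give the stated bound. In fact the graph you describe is a matching, so its girth is infinite and the Moore bound is vacuous. Since $T(a)\ge 2$, your independence claim already says that any \emph{two} of the $\bar v_{i,J}$ from distinct repair sets are linearly independent, hence are distinct projective points; and the same weight argument applied within a single repair set shows that the two endpoints of each edge are also distinct. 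Therefore no two edges can ever share a vertex, and there are no cycles of any length. (The ``telescoping'' step is also unsound on its own terms: a $t$-cycle yields a chain of $t$ pairwise proportionalities between vectors from consecutive repair sets, not a single vanishing linear combination of $t$ of them.)

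Even if a girth-$>T(a)$ condition held nontrivially, the Moore/Bondy--Simonovits bound is quantitatively far too weak. A graph on $N$ vertices with girth exceeding $T(a)$ can have on the order of $N^{1+2/(T(a)-1)}$ edges; with $N\approx q^{h-a-1}$ this exceeds the target $q^{2(h-a-1)/(T(a)-1)}$ by a factor of $N$ itself. Concretely, for $T(a)=3$ the theorem requires $\ell r$ on the order of $q^{h-a-1}$, while a triangle-free graph on $N\approx q^{h-a-1}$ vertices can have $N^2/4$ edges. What your linear-independence observation actually delivers is a \emph{code}, not a graph: the vectors $\bar v_{i,J}$ (together with the surviving global columns, which the paper also retains) form the parity-check matrix of a $q$-ary linear code of length about $\ell r + h - a$, redundancy at most $h-a$, and minimum distance at least $T(a)+1$. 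The paper closes by applying the Hamming (sphere-packing) bound to this auxiliary code, with a shortening-by-one step in the odd-$T(a)$ case so that the packing radius $(T(a)-1)/2$ is paired with redundancy $h-a-1$; this shortening is exactly what produces the extra $-1$ in the odd exponent of the theorem. That sphere-packing step, rather than any graph-theoretic Moore bound, is what converts the independence condition into the displayed bound on $n$.
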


The technical proof and its supporting lemmas are included in
Appendix~\ref{app:proof1}.

Throughout the paper we shall look at the asymptotics of families of
codes with locality. In the terminology of
Theorem~\ref{theorem_bound_delta>2} we assume $r,\delta,h,d$ (and
therefore $a$) are all constants. If the codes we study are optimal
(with respect to the bound of Lemma~\ref{lemma_bound_i}), then $k$ may
be derived from $n$. Thus, we are left with the asymptotics of $n$ as
a function of the field size $q$. Therefore, we shall say the family
of codes is \emph{order optimal} if, up to a constant factor, it
attains the bound of Theorem~\ref{theorem_bound_delta>2}, namely,
\[
n=
\begin{cases}
\Theta\parenv{q^{\frac{2(h-a-1)}{T(a)-1}-1}} &\text{if $T(a)$ is odd},  \\
\Theta\parenv{q^{\frac{2(h-a)}{T(a)}-1}} &\text{if $T(a)$ is even}.
\end{cases}
\]


Now, based on Theorem~\ref{theorem_bound_delta>2}, we can analyze the
performance of Construction~\ref{cons_poly}. The number of blocks of a
packing is upper bounded by the following Johnson bound \cite{J}:

\begin{lemma}[\cite{J}]\label{lemma_johnson}
  The maximum possible number of blocks of a
  $(\tau+1)$-$(n_1,r+\delta-1,1)$-packing $(X,\cB)$ is bounded by
\begin{equation*}
\begin{split}
|\cB|
\leq\left\lfloor \frac{n_1}{r+\delta-1} \left\lfloor\frac{n_1-1}{r+\delta-2} \left\lfloor\frac{n_1-2}{r+\delta-3}
\dots\left\lfloor\frac{n_1-\tau}{r+\delta-1-\tau} \right\rfloor\dots\right\rfloor\right\rfloor\right\rfloor.
\end{split}
\end{equation*}
\end{lemma}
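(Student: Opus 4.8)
The plan is to prove the bound by induction on the strength $\tau+1$ of the packing, combining the classical ``derived packing'' (link) construction with a double count of point--block incidences. To phrase the induction cleanly, for integers $1\le t\le k\le m$ let $P(m,k,t)$ denote the largest number of blocks in any $t$-$(m,k,1)$-packing; the goal is then to show that $P(n_1,r+\delta-1,\tau+1)$ is at most the nested-floor expression on the right-hand side of the statement, and this will follow from the recursion $P(m,k,t)\le \left\lfloor \tfrac{m}{k}\,P(m-1,k-1,t-1)\right\rfloor$ together with the base case $P(m,k,1)=\left\lfloor m/k\right\rfloor$.

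For the base case $t=1$, a $1$-$(m,k,1)$-packing is just a family of pairwise disjoint $k$-subsets of an $m$-set (each point lies in at most one block), so a count of the covered points gives $bk\le m$ and hence $P(m,k,1)=\left\lfloor m/k\right\rfloor$, which is the single-floor formula.

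For the inductive step I would fix a $t$-$(m,k,1)$-packing $(X,\cB)$ with $b$ blocks and a point $x\in X$, and form the link at $x$: delete $x$ from every block that contains it, obtaining a family of $(k-1)$-subsets of $X\setminus\{x\}$. The key observation is that this family is a $(t-1)$-$(m-1,k-1,1)$-packing, because a $(t-1)$-subset $Y\subseteq X\setminus\{x\}$ lying in two of the derived blocks would force the $t$-subset $Y\cup\{x\}$ to lie in two blocks of $\cB$, contradicting the packing condition. Hence the number of blocks of $\cB$ through $x$ is at most $P(m-1,k-1,t-1)$, and summing over all $x\in X$ while counting incidences by blocks gives $bk=\sum_{B\in\cB}\abs{B}\le m\,P(m-1,k-1,t-1)$; since $b$ is an integer, $b\le \left\lfloor \tfrac{m}{k}P(m-1,k-1,t-1)\right\rfloor$, which is the claimed recursion.

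Unwinding the recursion from the base case, and using that $z\mapsto \left\lfloor \tfrac{m}{k}z\right\rfloor$ is nondecreasing on the integers (so the inductive bound on $P(m-1,k-1,t-1)$ may be substituted inside the outer floor), produces exactly the nested-floor bound; specializing $m=n_1$, $k=r+\delta-1$, $t=\tau+1$ finishes the proof. There is no genuine obstacle here --- this is the bound of Johnson \cite{J} --- the only points requiring care are the verification that the link really is again a packing (this is precisely where the strength, the block size, and the number of points each drop by one) and the monotonicity remark that justifies nesting the floors rather than applying a single floor at the end.
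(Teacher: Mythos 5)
The paper states this lemma as a cited result from Jungnickel~\cite{J} and provides no proof of its own, so there is no internal argument to compare against. Your proposal is the classical, self-contained derivation of the Johnson bound, and it is correct: the link at a point drops the strength, the block size, and the number of points each by one (the verification that a repeated $(t-1)$-subset in the link would force a repeated $t$-subset in the original packing is exactly right), double counting point--block incidences converts the bound on replication numbers into $bk\le m\,P(m-1,k-1,t-1)$, and the integrality of $b$ together with the monotonicity of $z\mapsto\lfloor (m/k)z\rfloor$ on integers justifies nesting the floors through the recursion down to the base case $P(m,k,1)=\lfloor m/k\rfloor$. Specializing $m=n_1$, $k=r+\delta-1$, $t=\tau+1$ gives the stated bound; nothing is missing.
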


Thus, the number of blocks for a
$(\tau+1)$-$(n_1,r+\delta-1,1)$-packing can be as large as
$O(n_1^{\tau+1})$, when $\tau$, $r$, and $\delta$ are regarded as constants.

\begin{corollary}\label{corollary_code_via_packings}
Let $n_1=q-h$. If there exists a
$(\tau+1)$-$(n_1,r+\delta-1,1)$-packing with blocks $\cB$,
$|\cB|=\Omega(n_1^{\tau+1})$,
and $0 \le h \le \lceil\frac{\delta}{\tau}\rceil\delta$, then there
exists an optimal $[n,k,d]_q$ linear code with
$(r,\delta)_i$-locality, where
$n=|\cB|(r+\delta-1)+h+v-r=\Omega(q^{\tau+1})$, $k=(|\cB|-1)r+v$ and
$d=h+\delta$. Furthermore, if $h\geq \delta+1$, $v=r$, and $\tau=\delta-1$ the
code based on the $(\tau+1)$-$(n_1,r+\delta-1,1)$-packing has
order-optimal length, where $r$, $h$, and $\delta$ are
regarded as constants.
\end{corollary}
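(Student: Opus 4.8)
The plan is to read off the first, non-asymptotic, part of the statement directly from Corollary~\ref{corollary_optimal_code_packing}, and then to obtain order-optimality by a single application of the bound in Theorem~\ref{theorem_bound_delta>2} with a carefully chosen value of the free parameter $a$.

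For the first part, observe that the hypothesis $0\le h\le\lceil\delta/\tau\rceil\delta$ is precisely the condition required by Corollary~\ref{corollary_optimal_code_packing}. Applying that corollary to the given $(\tau+1)$-$(n_1,r+\delta-1,1)$-packing therefore yields an optimal $[n,k,d]_q$ linear code with $(r,\delta)_i$-locality and $n=|\cB|(r+\delta-1)+h+v-r$, $k=(|\cB|-1)r+v$, $d=h+\delta$. Treating $r,\delta,\tau,h,v$ as constants and using $|\cB|=\Omega(n_1^{\tau+1})=\Omega((q-h)^{\tau+1})=\Omega(q^{\tau+1})$ gives $n=\Omega(q^{\tau+1})$.

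For the order-optimality part, I would specialize to $h\ge\delta+1$, $v=r$, and $\tau=\delta-1$. With $v=r$ the parameters simplify to $k=|\cB|r$ and $n=|\cB|(r+\delta-1)+h$, so the code meets the hypotheses of Theorem~\ref{theorem_bound_delta>2} with $\ell=|\cB|$ and $h=d-\delta$. The crucial step is the choice $a=h-\delta-1$, which satisfies $0\le a\le h$ because $h\ge\delta+1$. A direct computation then gives
\[
T(a)=\floorenv{\frac{d-a-1}{\delta}}=\floorenv{\frac{(h+\delta)-(h-\delta-1)-1}{\delta}}=\floorenv{\frac{2\delta}{\delta}}=2,
\]
so $T(a)\ge 2$ and $T(a)$ is even. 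Substituting $a$ and $T(a)=2$ into the even branch of Theorem~\ref{theorem_bound_delta>2} yields
\[
n\le\frac{r+\delta-1}{r}\parenv{\frac{T(a)}{2(q-1)}q^{2(h-a)/T(a)}+a}-\frac{h(\delta-1)}{r}
=\frac{r+\delta-1}{r}\parenv{\frac{q^{\delta+1}}{q-1}+a}-\frac{h(\delta-1)}{r}=O(q^{\delta}),
\]
since $h-a=\delta+1$. Combined with the lower bound $n=\Omega(q^{\tau+1})=\Omega(q^{\delta})$ coming from the construction (as $\tau=\delta-1$), this gives $n=\Theta(q^{\delta})$, so the code attains the bound of Theorem~\ref{theorem_bound_delta>2} up to a constant factor and is therefore of order-optimal length.

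The bulk of the work is asymptotic bookkeeping: absorbing the additive constants $h$, $v$, $a$ and the nested floors of the Johnson bound (Lemma~\ref{lemma_johnson}) into $\Theta$-notation, and checking that the hypotheses of Corollary~\ref{cor_LRC_poly}, Corollary~\ref{corollary_optimal_code_packing}, and Theorem~\ref{theorem_bound_delta>2} are mutually compatible. The one genuinely non-mechanical point is selecting $a$: one wants $T(a)$ as small as the theorem allows, namely $T(a)=2$, and the identity $2(h-a)/T(a)-1=(h-a)-1$ then forces $h-a-1=\delta$, i.e.\ $a=h-\delta-1$, which in turn explains the appearance of the hypotheses $h\ge\delta+1$ and $\tau=\delta-1$ in the statement.
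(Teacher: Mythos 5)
Your proposal is correct and follows essentially the same route as the paper's proof: invoke Corollary~\ref{corollary_optimal_code_packing} for the explicit parameters, then apply Theorem~\ref{theorem_bound_delta>2} with the same choice $a=h-\delta-1$, yielding $T(a)=2$ and the upper bound $n=O(q^{\delta})$, which matches $n=\Omega(q^{\tau+1})=\Omega(q^{\delta})$ when $\tau=\delta-1$. The added discussion motivating the choice of $a$ is a nice touch but does not change the argument.
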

\begin{proof}
By Corollary~\ref{corollary_optimal_code_packing}, we have
$n=|\cB|(r+\delta-1)+h+v-r=\Omega(q^{\tau+1})$ for the code generated by
Construction~\ref{cons_poly}.
In Theorem~\ref{theorem_bound_delta>2}, setting $a=h-\delta-1$,
we have $T(a)=\lfloor\frac{d-1-a}{\delta}\rfloor=\lfloor\frac{h+\delta-a-1}{\delta}\rfloor=2$.
Therefore, for the case $v=r$, by Theorem~\ref{theorem_bound_delta>2} again
$$n\leq
\frac{r+\delta-1}{r}\parenv{\frac{T(a)}{2(q-1)}q^{\frac{2(h-a)}{T(a)}}+a}-\frac{h(\delta-1)}{r}
=\frac{r+\delta-1}{r}\parenv{\frac{1}{q-1}q^{\delta+1}+a}-\frac{h(\delta-1)}{r}=O(q^\delta).$$
Thus, for the case $\tau=\delta-1$ and $v=r$, the code $\cC$ has length
$n=\Omega(q^{\tau+1})=\Omega(q^{\delta})$, which is order optimal with
respect to the bound in Theorem~\ref{theorem_bound_delta>2}, when $h$,
$r$, and $\delta$ are regarded as constants.
\end{proof}

\begin{corollary}\label{coro_steiner}
Let $n_1=q-h$. If there exists a $(\tau+1,r+\delta-1,n_1)$-Steiner
system and $0\le h\leq \lceil\frac{\delta}{\tau}\rceil\delta$,
then there exists an
optimal $[n,k,d]_q$ linear code with all symbol $(r,\delta)$-locality,
where
\begin{align*}
n&=\frac{\binom{n_1}{\tau+1}(r+\delta-1)}{\binom{r+\delta-1}{\tau+1}}+h,\\
k&=\frac{r\binom{n_1}{\tau+1}}{\binom{r+\delta-1}{\tau+1}},
\end{align*}
and $d=h+\delta$.  In particular, for the case $h\geq \delta+1$ and
$\tau=\delta-1$, the code based on the
$(\delta,r+\delta-1,q-h)$-Steiner system has order-optimal length,
where $h$, $r$, and $\delta$ are regarded as constants.
\end{corollary}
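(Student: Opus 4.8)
The plan is to realize the asserted code as the special case $v=r$ of Construction~\ref{cons_poly} in which the set system $\cA$ is the block set of the Steiner system; this is Corollary~\ref{corollary_optimal_code_packing} specialized to $v=r$ and to a Steiner system. Fix an arbitrary $h$-subset $S\subseteq\F_q$, put $n_1=q-h$, and let $(\F_q\setminus S,\cB)$ be the given $(\tau+1,r+\delta-1,n_1)$-Steiner system; take $\cA=\cB$, so $\ell+1=|\cB|$, and with $v=r$ every block has the common size $r+\delta-1$ (in particular $A_{\ell+1}$, whose nominal size $v+\delta-1$ now also equals $r+\delta-1$). The first step is to count blocks: double-counting pairs formed by a block and one of its $(\tau+1)$-subsets — each $(\tau+1)$-subset of the $n_1$ points lies in exactly one block and each block contains $\binom{r+\delta-1}{\tau+1}$ of them — gives $|\cB|=\binom{n_1}{\tau+1}/\binom{r+\delta-1}{\tau+1}$, which is also immediate from the regularity lemma in the preliminaries. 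Substituting $\ell+1=|\cB|$ and $v=r$ into the expressions $n=|\cB|(r+\delta-1)+h-r+v$ and $k=(|\cB|-1)r+v$ of Corollary~\ref{corollary_optimal_code_packing} collapses them to the asserted $n$ and $k$; note in particular that $k=|\cB|r$ is a multiple of $r$, and $n=k+|\cB|(\delta-1)+h$, which is the shape required later by Theorem~\ref{theorem_bound_delta>2}.

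For optimality I would check the hypothesis of Theorem~\ref{theorem_optimal_code_cond_a} with $a=\tau$: two distinct blocks of a Steiner system cannot share a $(\tau+1)$-subset, so $|A_i\cap A_j|\le\tau$ for all $i\ne j$. Theorem~\ref{theorem_optimal_code_cond_a} then yields $d\ge\min\{h+\delta,(\lceil\delta/\tau\rceil+1)\delta\}$, while the hypothesis $h\le\lceil\delta/\tau\rceil\delta$ forces $(\lceil\delta/\tau\rceil+1)\delta\ge h+\delta$, so $d=h+\delta$ and the code is optimal with respect to Lemma~\ref{lemma_bound_i}. For the locality claim, Lemma~\ref{lemma_LRC_poly} already supplies $(r,\delta)_i$-locality; the additional content here is that a Steiner system is a packing covering \emph{every} point of $\F_q\setminus S$, so each code coordinate indexed by $[\ell+1]$ lies in one of the size-$(r+\delta-1)$ repair sets $A_1,\dots,A_{\ell+1}$ — this is what upgrades the generic-packing conclusion of Corollary~\ref{corollary_code_via_packings} toward an all-symbol statement. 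I expect the main obstacle to be the treatment of the $h$ global-parity coordinates $c_{\ell+2,i}$: they are absent exactly when $h=0$, and for $h\ge 1$ one has to be explicit about which repair set (if any) they belong to, in order for the proof to deliver all-symbol rather than merely information locality.

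Finally, for order-optimality when $\tau=\delta-1$ and $h\ge\delta+1$ I would mirror the proof of Corollary~\ref{corollary_code_via_packings}. On the construction side, $|\cB|=\binom{n_1}{\delta}/\binom{r+\delta-1}{\delta}=\Theta(n_1^{\delta})=\Theta(q^{\delta})$ (treating $r,\delta,h$ as constants), hence $n=|\cB|(r+\delta-1)+h=\Theta(q^{\delta})$. On the bound side, apply Theorem~\ref{theorem_bound_delta>2} with $a=h-\delta-1$, which is admissible since $0\le a\le h$ when $h\ge\delta+1$; then $T(a)=\lfloor(d-1-a)/\delta\rfloor=\lfloor 2\delta/\delta\rfloor=2$, an even value $\ge 2$, and the even branch of the bound evaluates to $\frac{r+\delta-1}{r}\bigl(\frac{q^{\delta+1}}{q-1}+a\bigr)-\frac{h(\delta-1)}{r}=O(q^{\delta})$. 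Combining $n=\Theta(q^{\delta})$ from the construction with $n=O(q^{\delta})$ from the bound shows the length is order-optimal in the sense fixed right after Theorem~\ref{theorem_bound_delta>2} (for $T(a)=2$ this means $n=\Theta(q^{\delta})$).
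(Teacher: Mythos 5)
Your proposal is correct and follows essentially the same route as the paper: instantiate Corollary~\ref{corollary_optimal_code_packing} with $v=r$, use the regularity of a Steiner system (a $(\tau+1)$-$(n_1,r+\delta-1,1)$-packing) to compute $|\cB|=\binom{n_1}{\tau+1}/\binom{r+\delta-1}{\tau+1}$ and hence the stated $n,k$, then for order-optimality apply Theorem~\ref{theorem_bound_delta>2} with $a=h-\delta-1$ to get $T(a)=2$ and an $O(q^\delta)$ upper bound matching the $\Omega(q^\delta)$ length of the construction. Your worry about the phrase ``all symbol $(r,\delta)$-locality'' is well placed: the $h$ global-parity coordinates $c_{\ell+2,i}$ produced by Construction~\ref{cons_poly} lie in no repair set, so the construction delivers only $(r,\delta)_i$-locality, and indeed the paper's own (one-line) proof only invokes Corollary~\ref{corollary_optimal_code_packing}, which asserts information locality; ``all symbol'' there is evidently a slip for ``information,'' as confirmed by the surrounding text (Remark~\ref{rem_steiner} restates the same result with $(r,\delta)_i$-locality). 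With that reading your argument is complete.
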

\begin{proof}
The first part of the corollary follows directly from
Corollary~\ref{corollary_optimal_code_packing} and
Definition~\ref{def_Packing}. For the second part, the fact $h\geq
\delta+1$ means that we can set $a=h-\delta-1$ and
$T(a)=\lfloor\frac{d-1-a}{\delta}\rfloor=\lfloor\frac{h+\delta-a-1}{\delta}\rfloor=2$ in
Theorem~\ref{theorem_bound_delta>2}, which also means the code $\cC$
has length
$$n\leq
\frac{r+\delta-1}{r}\parenv{\frac{T(a)}{2(q-1)}q^{\frac{2(h-a)}{T(a)}}+a}-\frac{h(\delta-1)}{r}
=\frac{r+\delta-1}{r}\parenv{\frac{1}{q-1}q^{\delta+1}+a}-\frac{h(\delta-1)}{r}=O(q^\delta).$$
Now the conclusion comes from the fact
that the upper bound is $O(q^\delta)$ and the constructed code has length
$n=\frac{\binom{n_1}{\tau+1}(r+\delta-1)}{\binom{r+\delta-1}{\tau+1}}+h=\Omega(q^\delta)$,
where we assume $h$, $r$, and $\delta$ are constants.
\end{proof}

\begin{remark}
For the existence of packings in general the reader may refer to \cite{R} and
the survey in \cite[VI.40]{CD}.
\end{remark}

\begin{remark}\label{rem_steiner}
Given positive integers $\tau$, $r$ and $\delta>2$, the natural
    necessary conditions for the existence of a
$(\tau+1,r+\delta-1,t-r+v)$-Steiner system are that
$\binom{r+\delta-1-i}{\tau+1-i}|\binom{t-r+v-i}{\tau+1-i}$ for
all $0\leq i\leq \tau$. It was shown in \cite{K} that these
  conditions are also sufficient except perhaps for finitely many
  cases. While $q$ might not be a prime power, any prime power $\overline{q}\geq
  q$ will suffice for our needs. It is known, for example, that there
  is always a prime in the interval $[q,q+q^{21/40}]$ (see
  \cite{BHP}).  Thus, by Corollary~\ref{coro_steiner}, for all large enough $t$, there exists an
optimal $[n,k,d]_{q}$ locally repairable code, with
$(r,\delta)_i$-locality, where $q$ is a prime power with $t\leq q\leq t+t^{21/40}$
and \begin{align*}
   n&=(r+\delta-1)\cdot\frac{\binom{t-r+v}{\tau+1}}{\binom{r+\delta-1}{\tau+1}}+h=\Omega(t^{\tau+1})=\Omega(q^{\tau+1}),\\
    k&= \frac{r\binom{t-r+v}{\tau+1}}{\binom{r+\delta-1}{\tau+1}},\\
    d&=h+\delta.
  \end{align*}
\end{remark}

\begin{remark}
One well known construction for optimal locally repairable codes with
$(r,\delta)_i$-locality is that of pyramid codes. The pyramid code is
based on an MDS code whose length is upper bounded by $q+d-2$ (and by
the MDS conjecture this may be reduced to $q+1$ for $q$ odd
\cite{B}). Thus, the length of pyramid code is upper bounded by
$q+d-1-\delta+\frac{k}{r}\delta\leq q+d-1-\delta+\frac{q-1}{r}\delta$
(we note that $q+2-\delta+\frac{k}{r}\delta\leq
q+2-\delta+\frac{q-d+2}{r}\delta$ according to MDS conjecture for the
case of $q$ odd), where $d\geq \delta$. According to our construction
and bound (in Theorem~\ref{theorem_bound_delta>2}), it follows that
the pyramid code is sub-optimal in terms of asymptotic length, since we
construct locally repairable codes with $(r,\delta)_i$-locality and
length $n=\Omega(q^{\delta})$.
\end{remark}

\begin{example}\label{example_LRC}
Set $n=24$, $k=14$, $\delta=2$, $r=2$, and $h=3$. Let $\cA=\{A_i~:~
A_i\triangleq \{3,6,5\}+i\subseteq \Z_7, i\in \Z_7\}$.  According to
Construction \ref{cons_poly}, we can construct a linear code $\cC$
with $(2,2)_i$-locality over $\F_{11}$, whose parity check matrix can
be given as:
\begin{equation*}
H=\parenv{\begin{array}{cccccccccccccccccccccccccccccccc}
 7  & 5  & 0  & 0  & 0  & 0  & 0  & 0  & 0  & 0  & 0  & 0  & 0  & 0  & 10  & 0  & 0  & 0  & 0  & 0  & 0  & 0  & 0  & 0  \\
0  & 0  & 7  & 5  & 0  & 0  & 0  & 0  & 0  & 0  & 0  & 0  & 0  & 0  & 0  & 10  & 0  & 0  & 0  & 0  & 0  & 0  & 0  & 0  \\
0  & 0  & 0  & 0  & 7  & 5  & 0  & 0  & 0  & 0  & 0  & 0  & 0  & 0  & 0  & 0  & 10  & 0  & 0  & 0  & 0  & 0  & 0  & 0  \\
0  & 0  & 0  & 0  & 0  & 0  & 3  & 9  & 0  & 0  & 0  & 0  & 0  & 0  & 0  & 0  & 0  & 10  & 0  & 0  & 0  & 0  & 0  & 0  \\
 0  & 0  & 0  & 0  & 0  & 0  & 0  & 0  & 3  & 9  & 0  & 0  & 0  & 0  & 0  & 0  & 0  & 0  & 10  & 0  & 0  & 0  & 0  & 0  \\
0  & 0  & 0  & 0  & 0  & 0  & 0  & 0  & 0  & 0  & 9  & 3  & 0  & 0  & 0  & 0  & 0  & 0  & 0  & 10  & 0  & 0  & 0  & 0  \\
0  & 0  & 0  & 0  & 0  & 0  & 0  & 0  & 0  & 0  & 0  & 0  & 7  & 5  & 0  & 0  & 0  & 0  & 0  & 0  & 10  & 0  & 0  & 0  \\
 5  & 10  & 2  & 7  & 7  & 1  & 1  & 6  & 3  & 5  & 1  & 3  & 2  & 8  & 0  & 0  & 0  & 0  & 0  & 0  & 0  & 10  & 0  & 0  \\
 1  & 4  & 8  & 5  & 1  & 9  & 3  & 3  & 6  & 3  & 4  & 2  & 9  & 5  & 0  & 0  & 0  & 0  & 0  & 0  & 0  & 0  & 10  & 0  \\
2  & 6  & 10  & 7  & 3  & 6  & 7  & 3  & 8  & 8  & 7  & 2  & 4  & 2  & 0  & 0  & 0  & 0  & 0  & 0  & 0  & 0  & 0  & 10  \\
\end{array}
}.
\end{equation*}
Verified by a computer program, the minimum Hamming distance of $\cC$ is $5$.
Thus, in this setting, Construction \ref{cons_poly} generates a $[24,14,5]_{11}$
optimal linear code with $(2,2)_i$-locality, consistent with the
result in Theorem \ref{theorem_optimal_code_cond_a}. Note that,
to construct a code sharing the same parameters via the pyramid code,
we need an MDS code with parameters $[18,14,5]_q$. However, according
to the MDS conjecture this MDS code exists only under the condition
that $q\geq 17$. Without the help of MDS conjecture, based on the result
proposed in \cite{B}, we have $q\geq 16$ for this special setting.
\end{example}

\begin{remark}
For the case $\delta=2$ and $d=5$, optimal linear codes with all
symbol $(r,2)$-locality and order-optimal length $\Theta(q^2)$ have
been introduced in \cite{GXY,Jin,BCGLP}.  The constructions in
\cite{Jin,BCGLP} are given by parity-check matrices with $3$ or $4$
global parity checks, which means they only work for the cases
$d=5,6$. One can verify that our construction still works for more
general cases even if we restrict to the case $\delta=2$.
\end{remark}

\begin{remark}
For the case $\delta\geq 2$ and $d=2\delta+1$, optimal linear codes
with all symbol $(r,2)$-locality and order-optimal length
$\Theta(q^{\delta})$ have been introduced in \cite{CMST_SL}.  However,
the construction in \cite{CMST_SL} should satisfy the condition $h\leq
r+\delta-1$, which is not need for Construction \ref{cons_poly}.
\end{remark}

\section{Generalized Sector-Disk Codes}\label{sec-GSD}
By Theorem~\ref{theorem_e_pattern}, we may have extra benefits if
$\abs{\bigcup_{|E_{i}|\geq \delta, i\in[\ell+1]}E_i}<\sum_{|E_{i}|\geq \delta, i\in[\ell+1]}|E_i|$.  In this section, we are going to use this property
to construct array codes that can recover from special erasure
patterns beyond the minimum Hamming distance. The basic idea of those
construction is to let all the code symbols share the same evaluation
point in step 1 of Construction~\ref{cons_poly} in the same column of
an array code. Then for this array code, one erased column may only
increase the value $\abs{\bigcup_{|E_{i}|\geq \delta, i\in[\ell+1]}E_i}$ by one.  Hence, when we consider sector-disk-like
erasure patterns, we may get some extra benefit beyond the minimum
Hamming distance. We begin with some definitions.

\begin{definition}\label{def_GSD}
Let $\cC$ be an optimal $[n,k,d]_q$ linear code with
$(r,\delta)_i$-locality. Then the code $\cC$ is said to be an
$(s,\gamma)$-\emph{generalized sector-disk code} (GSD code) if the
codewords can be arranged into an array
\begin{equation*}
C=\parenv{\begin{matrix}
c_{1,1} & c_{1,2}&\cdots& c_{1,a}\\
c_{2,1} & c_{2,2}&\cdots& c_{2,a}\\
\vdots &\vdots&\ddots&\vdots\\
c_{b,1} & c_{b,2}&\cdots& c_{b,a}\\
\end{matrix}}
\end{equation*}
such that:
\begin{itemize}
  \item[(I)] {all the erasure patterns that contain any $s$ columns and
    additional $\gamma$ cells can be recovered; and}
  \item[(II)] {$sb+\gamma>d-1.$}
\end{itemize}
\end{definition}

\begin{remark}
If the code $\cC$ has $(r,\delta)_a$-locality, the repair sets are
exactly the rows, and then the $(d-\delta,\delta-1)$-GSD code is
exactly the $(d-\delta,\delta-1)$-SD code \cite{PB}. Compared with SD
codes, GSD codes relax the conditions in the following three aspects:
\begin{itemize}
  \item GSD codes only require $(r,\delta)_i$-locality, whereas SD
    codes require $(r,\delta)_a$-locality;
  \item A row in an array codeword of a GSD code is not necessary a repair set;
  \item The number of column erasures is not restricted to $\delta-1$
    as in SD codes.
\end{itemize}
\end{remark}

In the following construction, we use Construction~\ref{cons_poly} to
generate GSD codes.

\begin{construction}\label{cons_array}
Let $S$ be an $h$-subset of $\F_q$ and let $(X=\F_q\setminus
S,\cA=\{A_i~:~1\leq i\leq \ell+1\})$ be a $t$-regular
$(m,r+\delta-1,1)$-packing, where $A_i=\{\theta_{i,j}~:~1\leq j\leq
r+\delta-1\}$ for $1\leq i\leq \ell+1$.  Based on $\cA$ and $S$, we
can generate a locally repairable code $\cC$ according to
Construction~\ref{cons_poly}.  Define column vectors $V_\tau\in
\F^t_q$ for $\tau\in \F_q$ as
\begin{equation*}
V^\intercal_{\tau}=(c_{i_{\tau,1},j_{\tau,1}},c_{i_{\tau,2},j_{\tau,2}},\dots,c_{i_{\tau,t},j_{\tau,t}}),
\end{equation*}
where
\begin{equation*}
\theta_{i_{\tau,b},j_{\tau,b}}=\tau,\,\,\text{for}\,\,1\leq b\leq t.
\end{equation*}
Arrange the $h$ global parity symbols as the last
$\lceil\frac{h}{t}\rceil$ columns. If there are empty cells in the array, then we fill
them with $0$.
\end{construction}

\begin{theorem}\label{thm_array}
Let $\cC$ be the $t\times (m+\lceil\frac{h}{t}\rceil)$ array code
generated by Construction~\ref{cons_array}. Then each element of the
first $m$ columns has $(r,\delta)$-locality. If $h\leq \delta^2$, then
the code can recover from any $h+\delta-1$ erasures. Furthermore:
\begin{itemize}
  \item [(I)] The code $\cC$ can recover from any erasure pattern of
    $y\leq 2$ columns from the first $m$ columns and any other $h-y-1$
    erasures.
  \item [(II)] If $\binom{y}{2}\leq \delta$, then the code $\cC$ can
    recover from any erasure pattern of $y$ columns from the first $m$
    columns and any other $h-2-\binom{y}{2}$ erasures.
\item [(III)] The code $\cC$ can recover from any erasure pattern of
  $y<\frac{(\delta+1)\delta}{2}-1$ columns from the first $m$ columns
  and any other $\min\{\frac{(\delta+1)\delta}{2}-y-1,h+\delta-1-y\}$
  erasures.
\end{itemize}
\end{theorem}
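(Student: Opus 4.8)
The plan is to reduce everything about the array code $\cC$ of Construction~\ref{cons_array} to statements about erasure patterns in the underlying locally repairable code, and then invoke Theorem~\ref{theorem_e_pattern} together with Corollary~\ref{cor_LRC_poly} and Theorem~\ref{theorem_optimal_code_cond_a}. The first, easy, observations: each of the first $m$ columns of the array is (a subset of) code symbols whose evaluation points are a single element $\tau\in\F_q\setminus S$, and each row is — by definition of a $t$-regular packing — a union of rows of the generator indexed by the $A_i$'s; in particular the $(r,\delta)$-locality of each symbol in the first $m$ columns is inherited directly from Lemma~\ref{lemma_LRC_poly}. For the claim that any $h+\delta-1$ erasures are correctable when $h\le\delta^2$: since $(m,r+\delta-1,1)$ is an ordinary packing, any two blocks meet in at most $a=1$ point, so Theorem~\ref{theorem_optimal_code_cond_a} with $a=1$ gives minimum distance $d\ge\min\{h+\delta,(\delta+1)\delta\}$, and $h\le\delta^2$ forces $h+\delta\le\delta^2+\delta=(\delta+1)\delta$, hence $d=h+\delta$ and any $h+\delta-1$ symbol erasures are correctable a fortiori.

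Next I would handle the three column-plus-cell statements uniformly. Given an erasure pattern consisting of $y$ full columns from the first $m$ columns together with some extra cells, translate it into an erasure pattern $\cE=\{E_1,\dots,E_{\ell+2}\}$ in the sense used for Theorem~\ref{theorem_e_pattern}: a full erased column contributes the single evaluation point $\tau$ to every $E_i$ with $\tau\in A_i$, and since the packing has $\lambda=1$ this point lies in at most one $A_{i}$ among any two blocks that are ``hit heavily'' — this is the crucial point. So if $w$ of the $E_i$ have $|E_i|\ge\delta$, the hypothesis \eqref{eqn_local_repair} of Theorem~\ref{theorem_e_pattern}, namely $|A_{i_j}\cap\bigcup_{t\ne j}A_{i_t}|\le\delta-1$, must be checked; because any two blocks share at most one point, $|A_{i_j}\cap\bigcup_{t\ne j}A_{i_t}|\le w-1$, so \eqref{eqn_local_repair} holds whenever $w\le\delta$. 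The size condition \eqref{eqn_e_num} reads $|\bigcup_{t}E_{i_t}|+|E_{\ell+2}|\le h+\delta-1$; here the left side is bounded by $y$ (the evaluation points of the $y$ erased columns) plus the number of extra erased cells, since all cells of an erased column collapse to the point $\tau$ in the union. The only remaining task is to bound $w$, the number of repair sets that receive $\ge\delta$ erasures, in terms of $y$: a block $A_i$ is hit $\ge\delta$ times only if at least $\delta$ of the $y$ erased evaluation points lie in $A_i$ (extra cells we can always absorb into $E_{\ell+2}$ rather than into some $E_i$, at the cost of one unit each), and since $\lambda=1$, any $\delta$ of the $y$ points determine $A_i$ uniquely, so $w\le\binom{y}{\delta}\big/\binom{\delta}{\delta}=\binom{y}{\delta}$ — but more usefully, double counting pairs of erased evaluation points inside a ``heavy'' block gives $w\binom{\delta}{2}\le\binom{y}{2}$, i.e.\ $w\le\binom{y}{2}/\binom{\delta}{2}$. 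Item~(I): $y\le 2$ means no block can be hit $\delta\ge 2$ times unless $\delta=2$ and $y=2$, where $w\le 1$; so \eqref{eqn_local_repair} is vacuous or trivial, and \eqref{eqn_e_num} becomes $y+(\text{extra})\le h+\delta-1$, giving $h-y-1$ allowable extra erasures when $\delta=2$, and for $\delta\ge 3$ with $y\le 2$ no block is heavy at all and the bound $d=h+\delta$ already covers $y+ (h-y-1)=h-1<h+\delta-1$ erasures — so (I) follows. Item~(II): $\binom{y}{2}\le\delta$; from $w\binom{\delta}{2}\le\binom{y}{2}$ we actually want $w\le\delta$, which is weaker, but we also need a bound good enough that \eqref{eqn_e_num} permits $h-2-\binom{y}{2}$ extra erasures — the accounting is that when a block is heavy we pay $|A_i\cap A_j|\le 1$ per pair, and the slack of $2+\binom{y}{2}$ below $h+\delta$ is exactly what is needed; I would verify $w\le\delta$ from $\binom{y}{2}\le\delta$ and then check the size inequality directly. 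Item~(III): take $y<\tfrac{(\delta+1)\delta}{2}-1$; combine the distance bound $d\ge(\delta+1)\delta$ from Theorem~\ref{theorem_optimal_code_cond_a} (which handles the regime where the union is small) with Theorem~\ref{theorem_e_pattern} (which handles the regime where $h+\delta-1$ is the binding constraint), taking the minimum of the two allowances $\tfrac{(\delta+1)\delta}{2}-y-1$ and $h+\delta-1-y$.

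The main obstacle I anticipate is the bookkeeping in converting ``$y$ columns plus $\gamma$ extra cells'' into the grouped erasure pattern $\cE$ in a way that is simultaneously (a) faithful — nothing is lost — and (b) tight enough to hit the exact constants $h-y-1$, $h-2-\binom{y}{2}$, and $\min\{\tfrac{(\delta+1)\delta}{2}-y-1,\,h+\delta-1-y\}$. In particular one must be careful about extra cells that happen to land in a column already fully erased (they cost nothing extra) versus those in a surviving column (they cost one unit in $\bigcup E_{i_t}$ or in $E_{\ell+2}$), and about whether to count a heavy block's erasures toward $w$ or shunt the overflow into $E_{\ell+2}$; choosing the assignment that minimizes the discriminant $|\bigcup_{t}E_{i_t}|+|E_{\ell+2}|$ is what makes the constants come out right. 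The other mild subtlety is that $h$ need not be a multiple of $t$, so the global-parity columns may contain zero-filled cells; these are known symbols and simply never appear in $E_{\ell+2}$, so they cause no trouble, but this should be remarked. Everything else is a direct application of the already-established Theorem~\ref{theorem_e_pattern} and Theorem~\ref{theorem_optimal_code_cond_a}.
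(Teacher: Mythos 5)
Your high-level plan matches the paper's: translate the array erasure pattern into the evaluation-point language, and reduce to the two conditions \eqref{eqn_e_num} and \eqref{eqn_local_repair} of Theorem~\ref{theorem_e_pattern}, using the packing property that distinct blocks share at most one point. But the detailed accounting that makes the constants come out is where the argument breaks, and the error is concrete: throughout you bound the number $w$ of heavy repair sets (those with $\geq\delta$ erasures) using \emph{only} the column-erasure points $\Theta=\{\theta_1,\dots,\theta_y\}$, forgetting that the extra cell erasures also contribute to $|E_i|$. For case~(I) with $\delta\geq 3$, $y\leq 2$, you assert ``no block is heavy at all'' --- but a block containing both $\theta_1,\theta_2$ becomes heavy once $\delta-2$ extra cells land in it, and a block containing only $\theta_1$ becomes heavy with $\delta-1$ extra cells. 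Your fallback (``the bound $d=h+\delta$ already covers $y+(h-y-1)=h-1$ erasures'') conflates erased evaluation points with erased coordinates: erasing $y$ columns erases $yt$ cells of the underlying code, which is typically far more than $h+\delta-1$, so the minimum-distance argument does not apply. Likewise the bound $w\binom{\delta}{2}\leq\binom{y}{2}$ in case~(II) is valid only if heaviness is caused entirely by column erasures, and the invocation of $d\geq(\delta+1)\delta$ in case~(III) misreads Theorem~\ref{theorem_optimal_code_cond_a}: when $h\leq\delta^2$ the minimum there is $h+\delta$, not $(\delta+1)\delta$.

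The paper's actual proof works differently. It does not try to bound $w$ from the columns alone. Instead it lower-bounds the number of \emph{extra} cell erasures required to push $\delta+1$ repair sets into the heavy regime, given that distinct blocks share at most one point of $\Theta$ (and, for case~(III), using an inclusion-exclusion lower bound on $\bigl|\bigcup_i A_{j_i}\bigr|$). It then shows this required number exceeds the stated budget of extra erasures (using $h\leq\delta^2$), so at most $\delta$ repair sets can be heavy, which by the pairwise-intersection-$\leq 1$ property gives \eqref{eqn_local_repair}; condition \eqref{eqn_e_num} then holds because the total number of distinct erased evaluation points is at most $y$ plus the number of extra cells. To repair your proof, you would need to replace the $w$-bounds derived purely from $\Theta$ by exactly this ``cost to make $\delta+1$ heavy sets'' counting, explicitly charging $\delta-2$, $\delta-1$, or $\delta$ extra cells per candidate heavy block depending on how many of the $y$ column points it already contains.
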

\begin{IEEEproof}
By Lemma~\ref{lemma_LRC_poly} and Theorem~\ref{theorem_e_pattern}, we
only need to prove that the desired erasure patterns satisfy
\eqref{eqn_e_num} and \eqref{eqn_local_repair}. Since $\cA$ forms a
$t$-regular $(m,w,1)$-packing, for the condition given by
\eqref{eqn_local_repair} we consider a sufficient condition that is
the erasure pattern contains at most $\delta$ repair sets with each of
them containing more than $\delta$ erasures.

For case (I) and $y=2$, say the erased columns are marked by
$\theta_1$ and $\theta_2$.  We focus on the repair sets with more than
$\delta$ erasures. In those repair sets, there is at most one repair
set that contains $\theta_1$ and $\theta_2$, while the remaining
repair sets contain at most one of them. For this case, we need at
least $\delta-2+(\delta-1)(\delta-1)+\delta-2=\delta^2-3\geq h-3$
erasures before we achieve $\delta+1$ repair sets with each of them
containing more than $\delta$ erasures. Thus, the code $\cC$ can
recover from any erasure pattern of $y=2$ columns from the first $m$
columns and any other $h-3$ erasures.  The same analysis proves the
case $y=1$.

For the case (II), we assume that the erased columns are marked by
elements in $\Theta=\{\theta_1,\theta_2,\cdots,\theta_y\}$.  Note that
$\cA$ is an $(m,r+\delta-1,1)$-packing, which means that each
$2$-subset of $\Theta$ appears in at most one repair set. This is to
say, for any $\delta$ repair sets
$A_{j_1},A_{j_2},\cdots,A_{j_\delta}$ we have $$\sum_{1\leq i\leq
  \delta}|\Theta\cap A_{j_i}|\leq
2\binom{y}{2}+\delta-\binom{y}{2}=\binom{y}{2}+\delta,$$ which means
for any $E_{j_i}\subseteq A_{j_i}$
$$\sum_{1\leq i\leq \delta}|\Theta\cap E_{j_i}|\leq
2\binom{y}{2}+\delta-\binom{y}{2}=\binom{y}{2}+\delta.$$ Therefore,
for this case, we need at least
$\delta^2+\delta-2-\binom{y}{2}-\delta\geq h-2-\binom{y}{2}$ erasures
before we achieve $\delta+1$ repair sets each of which contains more
than $\delta$ erasures. In other words, the code $\cC$ can recover
from any erasure pattern of $y$ columns from the first $m$ columns and
any other $h-2-\binom{y}{2}$ erasures.

For the case (III), we assume that the erased columns are marked by
elements in $\Theta=\{\theta_1,\theta_2,\cdots,\theta_y\}$.  Note that
$\cA$ is an $(m,w,1)$-packing, which means that $A_i\cap A_j\leq 1$
for $1\leq i,j\leq m$ with $i\ne j$.  for any $\delta+1$ repair sets
$A_{j_1},A_{j_2},\cdots,A_{j_{\delta+1}}$ we
have $$\abs{\parenv{\bigcup_{1\leq i\leq \delta+1}A_{j_i}}}\geq
\sum_{1\leq i\leq \delta+1}|A_{j_i}|-\binom{\delta+1}{2}.$$ Thus, for
any $E_{j_i}\subseteq A_{j_i}$ with $|E_{j_i}|\geq \delta$ for $1\leq
i\leq \delta+1$, we have
$$\abs{\parenv{\bigcup_{1\leq i\leq \delta+1}E_{j_i}}}\geq \sum_{1\leq
  i\leq \delta+1}|E_{j_i}|-\binom{\delta+1}{2}\geq
(\delta+1)\delta-\binom{\delta+1}{2},$$ which means that we need at
least $\binom{\delta+1}{2}-y-1$ erasures before we achieve $\delta+1$
repair sets each of which contains more than $\delta+1$
erasures. Thus, the code $\cC$ can recover from any erasure pattern of
$y$ columns from the first $m$ columns and any other
$\binom{\delta+1}{2}-y-1$ erasures if $h+\delta-1-y\geq
\binom{\delta+1}{2}-y-1>0$.
\end{IEEEproof}

\begin{example}
Set $n=24$, $k=14$, $\delta=2$, $r=2$, and $h=3$. Let $\cA=\{A_i~:~
A_i\triangleq \{3,6,5\}+i\subseteq \Z_7, i\in \Z_7\}$.  According to
Construction \ref{cons_array}, we can modify the code in Example
\ref{example_LRC} into a $3\times 8$ array code, whose parity-check
matrix can be given as:
\begin{equation*}
H=\parenv{\begin{array}{ccc|ccc|ccc|ccc|ccc|ccc|ccc|ccc}
 0&  0&  0  & 10&  0&  0  & 7&  0&  0  & 0&  0&  0  & 5&  0&  0  & 0&  0&  0  & 0&  0&  0  & 0&  0&  0  \\
 0&  0&  0  & 0&  0&  0  & 0&  10&  0  & 7&  0&  0  & 0&  0&  0  & 5&  0&  0  & 0&  0&  0  & 0&  0&  0  \\
 0&  0&  0  & 0&  0&  0  & 0&  0&  0  & 0&  10&  0  & 0&  7&  0  & 0&  0&  0  & 5&  0&  0  & 0&  0&  0  \\
 9&  0&  0  & 0&  0&  0  & 0&  0&  0  & 0&  0&  0  & 0&  0&  10  & 0&  3&  0  & 0&  0&  0  & 0&  0&  0  \\
 0&  0&  0  & 0&  9&  0  & 0&  0&  0  & 0&  0&  0  & 0&  0&  0  & 0&  0&  10  & 0&  3&  0  & 0&  0&  0  \\
 0&  9&  0  & 0&  0&  0  & 0&  0&  3  & 0&  0&  0  & 0&  0&  0  & 0&  0&  0  & 0&  0&  10  & 0&  0&  0  \\
 0&  0&  10  & 0&  0&  7  & 0&  0&  0  & 0&  0&  5  & 0&  0&  0  & 0&  0&  0  & 0&  0&  0  & 0&  0&  0  \\
 6&  1&  0  & 0&  5&  2  & 5&  0&  3  & 2&  0&  8  & 10&  7&  0  & 7&  1&  0  & 1&  3&  0  & 10&  0&  0  \\
 3&  4&  0  & 0&  3&  9  & 1&  0&  2  & 8&  0&  5  & 4&  1&  0  & 5&  3&  0  & 9&  6&  0  & 0&  10&  0  \\
 3&  7&  0  & 0&  8&  4  & 2&  0&  2  & 10&  0&  2  & 6&  3&  0  & 7&  7&  0  & 6&  8&  0  & 0&  0&  10
\end{array}
}.
\end{equation*}
Verified by a computer program, the array code can recover any from
any $2$ column erasures from the first $7$ columns, which is
consistent with the result in Theorem \ref{thm_array}. Note that this
kind of erasure pattern is beyond the minimum Hamming distance $d=5$
as shown in Example \ref{example_LRC}.
\end{example}

By Construction~\ref{cons_array}, we can generate codes that may
recover from some special erasure patterns beyond the minimum Hamming
distance. However, all those erasure patterns do not treat columns
equally, and distinguish between two types of columns. If this is an
unwanted feature, we may arrange the global parity checks across
columns, as done in the following construction.

\begin{construction}\label{cons_rearr}
Let $S$ be a $h$-subset of $\F_q$ and let $(X\subseteq \F_q\setminus
S,\cA=\{A_i~:~1\leq i\leq \ell+1\})$ be a $t$-regular
$(m,r+\delta-1,1)$-packing, where $A_i=\{\theta_{i,j}~:~1\leq j\leq
r+\delta-1\}$ for $1\leq i\leq \ell+1$.  Let $n=v|X|=v\rho$ with $v\geq t$,
then
based on $\cA$ and $S$, we can generate a locally repairable code
$\cC$ according to Construction~\ref{cons_poly}. List the elements of
$X$ as $(x_1,x_2,\cdots,x_{\rho})$.  Define column  vectors $V_{x_a}\in
\F^v_q$  for $a\in [\rho]$ as
\begin{equation*}
V^\intercal_{x_a}=(c_{i_{x_a,1},j_{x_a,1}},c_{i_{x_a,2},j_{x_a,2}},\dots,c_{i_{x_a,t},j_{x_a,t}},c_{\ell+2,(a-1)h/\rho+1},c_{\ell+2,(a-1)h/\rho+2},\cdots,c_{\ell+2,ah/\rho}),
\end{equation*}
where
\begin{equation}\label{eqn_array_re}
\theta_{i_{x_a,b},j_{x_a,b}}=x_a,\,\,\text{for}\,\,1\leq b\leq t.
\end{equation}
\end{construction}
\begin{remark}
In Construction~\ref{cons_rearr}, the fact that
$(X\subseteq\F_q\setminus S,\cA=\{A_i~:~1\leq i\leq \ell+1\})$ is a
regular packing means that $n-h=t\rho$. Thus, by $n=v\rho$, we have
$\rho\mid h$ and $v=t+h/\rho$. This is to say the array given by
\eqref{eqn_array_re} is well defined.
\end{remark}

\begin{corollary}
Let $\cC$ be the $v\times \rho$ array code generated by
Construction~\ref{cons_rearr}.  Then $\cC$ has
$(r,\delta)_i$-locality. If $h\leq \delta^2$, then the code can
recover from any $h+\delta-1$ erasures. Furthermore:
\begin{itemize}
  \item [(I)] The code $\cC$ can recover from any erasure pattern of
    $y\leq 2$ columns and any other $h-y(v-t+1)-1$ erasures.
  \item [(II)] If $\binom{y}{2}\leq \delta$, then the code $\cC$ can
    recover from any erasure pattern of $y$ columns and any other
    $h-2-\binom{y}{2}-y(v-t)$ erasures.
\item [(III)] The code $\cC$ can recover from any erasure pattern of
  $y<\frac{(\delta+1)\delta}{2}-1$ columns and any other
  $\min\{\frac{(\delta+1)\delta}{2}-y(v-t+1)-1,h+\delta-1-y(v-t+1)\}$
  erasures.
\end{itemize}
\end{corollary}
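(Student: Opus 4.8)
The plan is to reduce the corollary directly to Theorem~\ref{thm_array} by carefully bookkeeping how the rearrangement in Construction~\ref{cons_rearr} redistributes the $h$ global parity symbols. The key observation is that each of the $\rho$ columns now carries $t$ symbols coming from the repair sets of the packing (exactly as in Construction~\ref{cons_array}) together with $h/\rho=v-t$ additional global parity symbols. Consequently, when we erase a column of the $v\times\rho$ array, we erase one evaluation point $x_a$ (which, as in the proof of Theorem~\ref{thm_array}, increases the discriminant $\abs{\bigcup_{|E_i|\ge\delta}E_i}$ by at most one) \emph{plus} $v-t$ of the $h$ global parity symbols. So, from the point of view of the erasure-pattern analysis underlying Theorem~\ref{thm_array}, erasing $y$ columns here behaves like erasing $y$ of the first-$m$ columns there, and in addition consuming $y(v-t)$ of the budget of extra cell-erasures.

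First I would record the structural facts from the Remark following Construction~\ref{cons_rearr}: since $(X,\cA)$ is a $t$-regular packing we have $n-h=t\rho$, and $n=v\rho$ forces $\rho\mid h$ and $v-t=h/\rho$, so the array is well defined and each column has exactly $v-t$ global-parity entries. Next I would invoke Lemma~\ref{lemma_LRC_poly} for the $(r,\delta)_i$-locality claim, and Corollary~\ref{cor_LRC_poly} (with the $h\le\delta^2$ hypothesis, exactly as in Theorem~\ref{thm_array}) for the ``recover any $h+\delta-1$ erasures'' claim; these are inherited verbatim since the rearrangement only permutes the coordinates. Then, for each of the three cases, I would translate a column-erasure pattern of the $v\times\rho$ array into an erasure pattern $\cE=\{E_1,\dots,E_{\ell+2}\}$ in the sense used for Theorem~\ref{theorem_e_pattern}: $y$ erased columns contribute $y$ erased evaluation points $\theta_1,\dots,\theta_y$ hitting the packing's repair sets, and they contribute $y(v-t)$ erased symbols to $E_{\ell+2}$ (the global-parity part). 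An extra $z$ arbitrary erased cells contribute at most $z$ more to $\abs{\bigcup E_{i_t}}+\abs{E_{\ell+2}}$.

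With this dictionary, each case is a direct transcription of the corresponding case of Theorem~\ref{thm_array}, with the right-hand ``extra cells'' budget decreased by $y(v-t)$. For case~(I), the same counting that gave $\delta^2-3\ge h-3$ there now shows we can afford $y\le 2$ erased columns together with $h-y(v-t+1)-1$ further cells (the $-y(v-t)$ accounts for the global-parity entries co-located in those columns, and the remaining $-y-1$ is as before). Case~(II) uses the packing property that each $2$-subset of $\Theta=\{\theta_1,\dots,\theta_y\}$ lies in at most one repair set, exactly as in Theorem~\ref{thm_array}~(II), yielding the bound $h-2-\binom{y}{2}-y(v-t)$. Case~(III) uses $\abs{A_i\cap A_j}\le 1$ and the union bound $\abs{\bigcup_{i\le\delta+1}E_{j_i}}\ge(\delta+1)\delta-\binom{\delta+1}{2}$, giving $\min\{\frac{(\delta+1)\delta}{2}-y(v-t+1)-1,\,h+\delta-1-y(v-t+1)\}$. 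In every case one must finally check that conditions \eqref{eqn_e_num} and \eqref{eqn_local_repair} of Theorem~\ref{theorem_e_pattern} hold for the translated pattern, which follows from the packing regularity (bounding $w$, the number of repair sets with $\ge\delta$ erasures, by $\delta$ as in Theorem~\ref{thm_array}) and from the erasure-count inequalities just derived.

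The main obstacle is purely in the accounting: one must be careful that when a global-parity symbol is erased because its host column is erased, it is counted once (inside the $y(v-t)$ term) and not double-counted against the separate ``$z$ extra cells'' budget, and that the residual budget matches the statement. There is no new combinatorial or algebraic idea beyond Theorem~\ref{thm_array}; the corollary is the ``columns treated equally'' repackaging promised in the paragraph before Construction~\ref{cons_rearr}, and the proof is a bookkeeping argument that substitutes $v-t$ for the ``$0$ co-located parities'' implicit in Construction~\ref{cons_array}.
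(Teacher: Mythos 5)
Your proposal is correct and takes essentially the same route as the paper: the paper's proof is precisely the one-line observation that $y$ erased columns of the $v\times\rho$ array correspond, in the setting of Construction~\ref{cons_array}, to $y$ columns from the first $m$ columns together with $y(v-t)$ erased global-parity cells, after which Theorem~\ref{thm_array} is applied case by case. Your more detailed walk-through of the budget accounting in each of (I)--(III) fills in exactly what the paper leaves implicit.
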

\begin{IEEEproof}
Note that any $y$ columns of $\cC$ can be regarded as $y$ columns from
the first $m$ columns and $y(v-t)$ erasures (sectors) from the global check
symbols, for the code generated by Construction~\ref{cons_array}.
Thus, the desired results follows directly from
Theorem~\ref{thm_array}, respectively.
\end{IEEEproof}

For the case $r\nmid k$ and $h=r-v$, we may modify
Construction~\ref{cons_rearr} as follows.

\begin{construction}\label{cons_rearr_h=r-v}
Let $S$ be an $(r-v)$-subset of $\F_q$ and let $(X\subseteq
\F_q\setminus S,\cB=\{B_i~:~1\leq i\leq \ell+1\})$ be a $t$-regular
$(m,r+\delta-1,1)$-packing. Let $A_i=B_i$ for $1\leq i\leq \ell$ and
$A_{\ell+1}\subseteq B_{\ell+1}$.  Let $n=t|X|=t\rho$ and $k=\ell r+v$, then
based on $\cA$ and $S$, we can generate a locally repairable code
$\cC$ according to Construction~\ref{cons_poly}.  List the elements of
$B_{\ell+1}\setminus A_{\ell+1}$ as $(x_1,x_2,\dots,x_{r-v})$ and $X$
as $(x_1,x_2,\cdots,x_{\rho})$.  Define column vectors $V_{x_a}\in
\F^v_q$ for $a\in [\rho]$ as
\begin{equation*}
V^\intercal_{x_a}=
\begin{cases}
(c_{i_{x_a,1},j_{x_a,1}},c_{i_{x_a,2},j_{x_a,2}},\dots,c_{i_{x_a,t-1},j_{x_a,t-1}},c_{\ell+2,a}),& \text{ if } 1\leq a\leq r-v,\\
(c_{i_{x_a,1},j_{x_a,1}},c_{i_{x_a,2},j_{x_a,2}},\dots,c_{i_{x_a,t},j_{x_a,t}}), & \text{ otherwise,}
\end{cases}
\end{equation*}
where $\theta_{i_{x_a,b},j_{x_a,b}}=x_a$, $1\leq b\leq t-1$ for $1\leq a\leq r-v$
and $1\leq b\leq t$ for $r-v+1\leq a\leq \rho$.
\end{construction}

\begin{corollary}\label{coro_h=r-v}
Let $\cC$ be the $t\times \rho$ array code generated by
Construction~\ref{cons_rearr_h=r-v}.  Then $\cC$ has
$(r,\delta)_i$-locality. If $h\leq \delta^2$, then the code can
recover any $h+\delta-1$ erasures.  Furthermore:
\begin{itemize}
  \item [(I)] The code $\cC$ can recover from any erasure pattern of
    $y\leq 2$ columns and any other $h-2y-1$ erasures.
  \item [(II)] If $\binom{y}{2}\leq \delta$, then the code $\cC$ can
    recover from any erasure pattern of $y$ columns and any other
    $h-2-\binom{y}{2}-y$ erasures.
  \item [(III)] The code $\cC$ can recover from any erasure pattern of
    $y<\frac{(\delta+1)\delta}{2}-1$ columns and any other
    $\min\{\frac{(\delta+1)\delta}{2}-2y-1,h+\delta-1-2y\}$ erasures.
\end{itemize}
\end{corollary}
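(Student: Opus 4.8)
The plan is to obtain the corollary from Theorem~\ref{thm_array} --- or, more precisely, from the verification of conditions \eqref{eqn_e_num} and \eqref{eqn_local_repair} of Theorem~\ref{theorem_e_pattern} that underlies it --- by translating column erasures of the new array into erasures treated by Theorem~\ref{theorem_e_pattern}, while keeping track of the single global-parity symbol that each of the first $r-v$ columns now carries. First, $\cC$ is produced by Construction~\ref{cons_poly} applied to the system $\cA$ with $A_i=B_i$ for $i\le\ell$ and $A_{\ell+1}\subseteq B_{\ell+1}$, so Lemma~\ref{lemma_LRC_poly} immediately gives that $\cC$ has $(r,\delta)_i$-locality. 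Since $\cB$ is an $(m,r+\delta-1,1)$-packing we have $|A_i\cap A_j|\le|B_i\cap B_j|\le 1$ for all $i\ne j$; hence Theorem~\ref{theorem_optimal_code_cond_a} applies with $a=1$ and $\mu=\lceil\delta/a\rceil=\delta$, and the assumption $h\le\delta^2=\lceil\delta/a\rceil\delta$ forces $\cC$ to be optimal with $d=h+\delta$. In particular every pattern of $h+\delta-1$ erasures is recoverable, which is the second assertion.

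For (I)--(III) the key structural observation is that, by Construction~\ref{cons_rearr_h=r-v}, each of the first $r-v$ columns of the $t\times\rho$ array consists of $t-1$ code symbols sharing a common evaluation point, together with exactly one of the $h=r-v$ global-parity symbols, while each of the remaining columns consists of $t$ code symbols sharing a common evaluation point and carries no global symbol. Therefore an erasure of $y$ columns of $\cC$ together with $z$ further erased cells is \emph{contained in} an erasure pattern that, in the coordinates of $\cC$ viewed as produced by Construction~\ref{cons_poly}, erases at most $y$ whole ``local columns'' (the symbols sharing each of $y$ distinct evaluation points of $X$) plus at most $z+y$ additional cells, at most $y$ of which are global-parity cells. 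I would then re-run, for this translated pattern, the two checks that prove Theorem~\ref{thm_array}: inequality \eqref{eqn_local_repair} is verified by exactly the packing count of Theorem~\ref{thm_array} --- using that any two of the blocks $A_1,\dots,A_{\ell+1}$ meet in at most one point and any two points of $X$ lie together in at most one block, both inherited from $\cB$ via $A_i\subseteq B_i$ --- yielding the same thresholds $\delta^2-3$ in case~(I), $\delta^2+\delta-2-\binom{y}{2}$ in case~(II), and $\binom{\delta+1}{2}-y-1$ in case~(III); and inequality \eqref{eqn_e_num} reads $\bigl|\bigcup_{t} E_{i_t}\bigr|+|E_{\ell+2}|\le h+\delta-1$, where now the $y$ global cells carried by the erased columns consume up to $y$ units of the ``$E_{\ell+2}$'' budget. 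Matching the $z+y$ additional cells against each threshold, and charging the $y$ column-borne global cells, gives precisely the allowances $h-2y-1$, $h-2-\binom{y}{2}-y$, and $\min\{\tfrac{(\delta+1)\delta}{2}-2y-1,\;h+\delta-1-2y\}$ claimed in (I), (II), (III).

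I expect the only real care to be organizational rather than conceptual. First, one must make sure the containment step is honest: a column with just $t-1$ local symbols (plus one global symbol) is dominated by a full local column of $t$ symbols plus one sector erasure, and two distinct columns correspond to two distinct evaluation points, so no double counting occurs inside $\bigcup_{t} E_{i_t}$. Second, one must confirm that \eqref{eqn_e_num} is never violated once the column-borne global erasures are charged to $|E_{\ell+2}|$; this is where the extra $-y$ in every bound is genuinely needed, and it is the step where the hypotheses $\binom{y}{2}\le\delta$ in~(II) and $y<\tfrac{(\delta+1)\delta}{2}-1$ in~(III) are used to keep $y$ (equivalently $\binom{y}{2}$) small enough that the count goes through. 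Everything else is a transcription of the argument already carried out for Theorem~\ref{thm_array} and for the preceding corollary on Construction~\ref{cons_rearr}.
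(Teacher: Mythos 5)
Your proposal is correct and follows essentially the same route as the paper: the paper's own proof is precisely the observation that $y$ column erasures of the new $t\times\rho$ array correspond, in the layout of Construction~\ref{cons_array}, to $y$ columns among the first $m$ together with at most $y$ sector erasures from the global part, after which one invokes Theorem~\ref{thm_array}(I)--(III) and charges the $y$ column-borne global cells to the ``other erasures'' budget, yielding the stated $-y$ shifts. The paper does this in one sentence; you spell out the same reduction (and re-derive the thresholds via \eqref{eqn_e_num} and \eqref{eqn_local_repair}), but the argument is the same.
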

\begin{IEEEproof}
Note that any $y$ columns of $\cC$ can be regarded as $y$ columns from
the first $m$ columns and at most $y$ erasures (sectors) from the global check
symbols, for the code generated by
Construction~\ref{cons_array}. Thus, the desired results follows
directly from Theorem~\ref{thm_array}, respectively.
\end{IEEEproof}

Based on known results about regular packings, we derive some
parameters of GSD codes resulting from our constructions.  In
particular, we use  two well known classes of Steiner systems that
are the affine geometries and projective geometries.

\begin{lemma}[\cite{CD}]
  \label{lem:steiner_ag}
Let $\beta\geq 2$ be an integer and $q_1$ a prime power, then there
exists a $(2,q_1,q_1^\beta)$-Steiner system.
\end{lemma}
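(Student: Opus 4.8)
The plan is to exhibit the required Steiner system explicitly as the point–line incidence structure of the affine geometry $AG(\beta,q_1)$. Concretely, I would take the point set to be $X=\F_{q_1}^{\beta}$, which has exactly $q_1^{\beta}$ points, and let the block collection $\cB$ consist of all affine lines, i.e., all sets of the form $L_{a,b}=\mathset{a+tb~:~t\in\F_{q_1}}$ with $a\in\F_{q_1}^{\beta}$ and $b\in\F_{q_1}^{\beta}\setminus\mathset{0}$. Since for $b\ne 0$ the map $t\mapsto a+tb$ is injective on $\F_{q_1}$, each block has exactly $q_1$ points, which matches the block size $q_1$ demanded in Definition~\ref{def_Packing}.

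It then remains to check that every $2$-subset of $X$ is contained in exactly one block. For existence, given distinct $p,p'\in X$, the line $L_{p,\,p'-p}$ contains both $p$ and $p'$. For uniqueness, suppose a line $L_{a,b}$ contains $p$ and $p'$; reparametrizing it through $p$ (which is legitimate since any point of the line may be used as the base point), the condition that $p'$ lies on it forces $p'=p+tb$ for some $t\in\F_{q_1}\setminus\mathset{0}$, so $b$ is a nonzero scalar multiple of $p'-p$. As $L_{p,b}=L_{p,\lambda b}$ for every $\lambda\ne 0$, every such line coincides with $L_{p,\,p'-p}$. Hence $(X,\cB)$ is a $(2,q_1,q_1^{\beta})$-Steiner system, as claimed.

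The argument is completely standard and I do not anticipate any real obstacle; the only step requiring a moment of care is the uniqueness claim, namely that two affine lines sharing two points have the same (one-dimensional) direction space and the same base coset, hence are equal. Alternatively, one may simply invoke the classical fact that the points and lines of $AG(\beta,q_1)$ form a $2$-design with these parameters (see, e.g., \cite{CD}), which is precisely the reference under which the lemma is stated.
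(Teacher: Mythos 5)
Your proof is correct, and it is precisely the standard construction underlying the result the paper simply cites to \cite{CD} without proof: the points and lines of $AG(\beta,q_1)$ form the $(2,q_1,q_1^{\beta})$-Steiner system. The two checks you carry out — that each line has exactly $q_1$ points, and that any two distinct points determine a unique line (existence via $L_{p,\,p'-p}$, uniqueness via the invariance of a line under rescaling its direction vector and reparametrizing through any of its points) — are exactly what is needed, and no gap remains.
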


Based on affine geometries and Construction~\ref{cons_rearr_h=r-v}, we have
the following conclusion for GSD codes.

\begin{corollary}\label{coro_GSD_AG}
Let $\beta\geq 2$ be an integer and $q_1$ a prime power.  Set
$q_1=r+\delta-1$, $n=\frac{q_1^{\beta}(q_1^{\beta}-1)}{q_1-1}$, $\delta\geq 2$,
$k=(\frac{q_1^{\beta-1}(q_1^{\beta}-1)}{q_1-1}-1)r+v$
with $1\leq v\leq r-1$, and $h=r-v=q_1-\delta-v+1$.  Let $\cC$ be the
$\frac{q_1^\beta-1}{q_1-1}\times q_1^{\beta}$ array code generated by
Construction~\ref{cons_rearr_h=r-v} using a $(2,q_1,q_1^{\beta})$-Steiner system
from Lemma~\ref{lem:steiner_ag}.  If $h\leq \delta^2$, then the code
$\cC$ is an $[n,k,h+\delta-1]_{q}$ optimal locally repairable
code with $(r,\delta)_i$-locality, where $q\geq q_1^{\beta}+h$. Furthermore:
\begin{itemize}
  \item [(I)] If $y\leq 2$ and $y(\frac{q_1^\beta-1}{q_1-1}-2)>\delta$, then the code $\cC$ is
    a $(y,h-2y-1)$-GSD code.
  \item [(II)] If $\binom{y}{2}\leq \delta$ and
    $y\frac{q_1^\beta-1}{q_1-1}-1-\binom{y}{2}-y>\delta$, then the code $\cC$ is a
    $(y,h-2-\binom{y}{2}-y)$-GSD code.
  \item [(III)] If $y<\frac{(\delta+1)\delta}{2}-1$ and
    $y\frac{q_1^\beta-1}{q_1-1}+\gamma>h+\delta-1$, then the code $\cC$ is a $(y,\gamma)$-GSD
    code, where
    $\gamma=\min\{\frac{(\delta+1)\delta}{2}-2y-1,h+\delta-1-2y\}$
    erasures.
\end{itemize}
Herein, we highlight that the second restriction of each item
comes from the requirement in Definition \ref{def_GSD}-(II).
\end{corollary}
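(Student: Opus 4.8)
The plan is to treat Corollary~\ref{coro_GSD_AG} as a pure instantiation of the general GSD machinery: we feed the affine-geometry Steiner systems of Lemma~\ref{lem:steiner_ag} into Construction~\ref{cons_rearr_h=r-v}, and then quote Corollary~\ref{coro_h=r-v} together with Definition~\ref{def_GSD}. No genuinely new argument is needed; all of the combinatorial and coding-theoretic content already lives in Theorem~\ref{theorem_e_pattern} and Corollary~\ref{coro_h=r-v}, so the body of the proof is essentially a careful matching of parameters.

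First I would invoke Lemma~\ref{lem:steiner_ag} with $q_1=r+\delta-1$ (a prime power by hypothesis) to obtain a $(2,q_1,q_1^{\beta})$-Steiner system on $q_1^{\beta}$ points with blocks of size $q_1=r+\delta-1$. By the fact that a Steiner system is a regular packing (Section~\ref{sec-preliminaries}), this system is a regular $2$-$(q_1^{\beta},r+\delta-1,1)$-packing whose replication number is $t=\frac{q_1^{\beta}-1}{q_1-1}$; a routine $2$-design count gives that it has $\frac{\binom{q_1^{\beta}}{2}}{\binom{q_1}{2}}=\frac{q_1^{\beta-1}(q_1^{\beta}-1)}{q_1-1}$ blocks, and I identify this number with $\ell+1$. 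I would then plug this packing and an $h$-subset $S\subseteq\F_q$ with $h=r-v$ (so $1\le v\le r-1$ and $r\nmid k$) into Construction~\ref{cons_rearr_h=r-v}, realizing the point set inside $\F_q\setminus S$ for any prime power $q\ge q_1^{\beta}+h$ (such a $q$ always exists, and can be taken close to $q_1^{\beta}+h$ by density-of-primes estimates as in Remark~\ref{rem_steiner}, but this is not needed for the statement). Writing $\rho=q_1^{\beta}$ for the number of points, one reads off at once that the resulting array has $t\times\rho=\frac{q_1^{\beta}-1}{q_1-1}\times q_1^{\beta}$ cells, that the underlying locally repairable code has length $n=t\rho=\frac{q_1^{\beta}(q_1^{\beta}-1)}{q_1-1}$ (equivalently $n=(\ell+1)(r+\delta-1)$ by incidence counting), dimension $k=\ell r+v=\parenv{\frac{q_1^{\beta-1}(q_1^{\beta}-1)}{q_1-1}-1}r+v$, and $h=r-v=q_1-\delta-v+1$ since $q_1=r+\delta-1$ --- exactly the parameters asserted.

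Assuming $h\le\delta^2$, Corollary~\ref{coro_h=r-v} now applies directly and simultaneously yields: $\cC$ has $(r,\delta)_i$-locality, $\cC$ is an optimal locally repairable code (so by Lemma~\ref{lemma_bound_i} its minimum distance is $d=h+\delta$), and $\cC$ recovers any $y$ erased columns together with the stated extra-cell budget in each of cases (I)--(III). It then remains only to verify the two clauses of Definition~\ref{def_GSD}. Clause~(I), correctability of any $s$ erased columns plus $\gamma$ erased cells, is literally the corresponding item of Corollary~\ref{coro_h=r-v} with $(s,\gamma)$ equal to $(y,h-2y-1)$, $(y,h-2-\binom{y}{2}-y)$, and $(y,\min\{\frac{(\delta+1)\delta}{2}-2y-1,\,h+\delta-1-2y\})$ respectively. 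Clause~(II), $sb+\gamma>d-1$, becomes --- after substituting the number of rows $b=t=\frac{q_1^{\beta}-1}{q_1-1}$ and $d-1=h+\delta-1$ --- exactly the ``second restriction'' displayed in items (I), (II), (III); for instance in case~(I), $yt+(h-2y-1)>h+\delta-1$ rearranges to $y\parenv{\frac{q_1^{\beta}-1}{q_1-1}-2}>\delta$, and the same elementary manipulation handles (II) and (III). Combining clauses (I) and (II) gives that $\cC$ is a GSD code with the parameters claimed in each case.

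I expect no real conceptual obstacle in this proof. The points that call for care are: (a) correctly reading off the replication number $t$ and the block count $\ell+1$ of the affine-geometry Steiner system, and checking that its block size $q_1=r+\delta-1$ together with these counts make it a legitimate input to Construction~\ref{cons_rearr_h=r-v}; (b) the routine algebra that identifies $sb+\gamma>d-1$ with the second inequality listed in each of items (I)--(III); and (c) noting the compatibility of the hypotheses $h\le\delta^2$ and $h=r-v=q_1-\delta-v+1$, as well as the existence of a prime power $q\ge q_1^{\beta}+h$. Once Corollary~\ref{coro_h=r-v} is invoked, none of these is a genuine difficulty.
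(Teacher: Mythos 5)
Your proof is correct and follows the same route as the paper: the paper's own proof is a one-line citation of Corollary~\ref{coro_h=r-v}, Lemma~\ref{lem:steiner_ag}, and Definition~\ref{def_GSD}, and your write-up simply fills in the routine parameter-matching (replication number, block count, $n$, $k$, $h$, verification of Definition~\ref{def_GSD}(II)) that the paper leaves implicit. As a minor note, your computation $d=h+\delta$ is the correct optimal distance under Lemma~\ref{lemma_bound_i}; the ``$h+\delta-1$'' appearing in the corollary's statement is a typographical slip in the paper, and your verification of clause~(II) of Definition~\ref{def_GSD} consistently uses $d-1=h+\delta-1$, exactly as the displayed inequalities require.
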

\begin{IEEEproof}
  The proof follows directly from Corollary~\ref{coro_h=r-v},
  Lemma~\ref{lem:steiner_ag}, and Definition~\ref{def_GSD}.
\end{IEEEproof}

\begin{lemma}[\cite{CD}]
  \label{lem:steiner_pg}
Let $\beta\geq 2$ be an integer and $q_1$ a prime power, then there
exists a $(2,q_1+1,\frac{q_1^{\beta+1}-1}{q_1-1})$-Steiner system.
\end{lemma}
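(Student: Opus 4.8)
The plan is to realize the required design as the point--line incidence structure of the finite projective space $PG(\beta,q_1)$. First I would set $V=\F_{q_1}^{\beta+1}$ and take the point set $X$ to be the collection of all $1$-dimensional $\F_{q_1}$-subspaces of $V$. Counting the nonzero vectors of $V$ and noting that each $1$-dimensional subspace contains exactly $q_1-1$ of them gives $|X|=\frac{q_1^{\beta+1}-1}{q_1-1}$, as required. Next I would let the block collection $\cB$ consist of the ``lines'': for each $2$-dimensional subspace $U$ of $V$, the set of those points (i.e., $1$-dimensional subspaces) contained in $U$. The same counting argument carried out inside $U$ shows that each such block has exactly $\frac{q_1^2-1}{q_1-1}=q_1+1$ points.

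The one property left to verify is that every $2$-subset of $X$ is contained in exactly one block. Given two distinct points, corresponding to $1$-dimensional subspaces $\langle u\rangle\ne\langle v\rangle$, the vectors $u$ and $v$ are linearly independent, so $U=\langle u,v\rangle$ is the unique $2$-dimensional subspace of $V$ containing both; hence the pair lies in the block determined by $U$ and in no other block. This is exactly Definition~\ref{def_Packing} with $\tau=2$, $t=q_1+1$, and $n=\frac{q_1^{\beta+1}-1}{q_1-1}$, so $(X,\cB)$ is the desired $\parenv{2,q_1+1,\frac{q_1^{\beta+1}-1}{q_1-1}}$-Steiner system. The hypothesis $\beta\ge 2$ is used only to rule out degeneracy (it guarantees $\cB$ has more than one block and that each line is a proper subspace), and the assumption that $q_1$ is a prime power is precisely what is needed for $\F_{q_1}$, and hence $PG(\beta,q_1)$, to exist.

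There is essentially no hard step here: the argument is the classical construction of $PG(\beta,q_1)$, so instead of an obstacle the only thing requiring care is the bookkeeping of the two counting formulas (the number of points and the size of a line) together with the uniqueness of the line through two points, all of which are routine linear algebra over $\F_{q_1}$. Alternatively, one may simply invoke the standard fact that $PG(\beta,q_1)$ is a $2$-design with these parameters (see, e.g., \cite{CD}), which is the route implicitly taken by the statement itself.
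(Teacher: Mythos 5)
Your construction is correct and is precisely the standard projective-geometry realization (points and lines of $PG(\beta,q_1)$) that the paper itself invokes by citing \cite{CD}; there is nothing further to compare, since the paper gives no proof of its own.
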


Based on projective geometries and
Construction~\ref{cons_rearr_h=r-v}, we have the following conclusion
for GSD codes.

\begin{corollary}\label{coro_GSD_PG}
Let $\beta\geq 2$ be an integer and $q_1$ a prime power.  Set
$q_1+1=r+\delta-1$, $n=\frac{(q_1^{\beta+1}-1)(q_1^\beta-1)}{(q_1-1)^2}$, $\delta\geq 2$,
$k=(\frac{(q_1^{\beta+1}-1)(q_1^\beta-1)}{(q_1-1)(q_1^2-1)}-1)r+v$ with $1\leq v\leq r-1$, and
$h=r-v=q_1-\delta-v+2$.  Let $\cC$ be the $\frac{q_1^{\beta}-1}{q_1-1}\times \frac{q_1^{\beta+1}-1}{q_1-1}$ array code
generated by Construction~\ref{cons_rearr_h=r-v} using a
$(2,q_1+1,\frac{q_1^{\beta+1}-1}{q_1-1})$-Steiner system from
Lemma~\ref{lem:steiner_ag}.  If $h\leq \delta^2$, then the code $\cC$
is an $[n,k,h+\delta-1]_{q}$ optimal locally repairable code with
$(r,\delta)_i$-locality, where $q\geq \frac{q_1^{\beta+1}-1}{q_1-1}+h$
is a prime power. Furthermore:
\begin{itemize}
  \item [(I)] If $y\leq 2$ and $y(\frac{q_1^{\beta}-1}{q_1-1}-2)>\delta$, then the code $\cC$ is
    a $(y,h-2y-1)$-GSD code.
  \item [(II)] If $\binom{y}{2}\leq \delta$ and
    $y\frac{q_1^{\beta}-1}{q_1-1}-1-\binom{y}{2}-y>\delta$, then the code $\cC$ is a
    $(y,h-2-\binom{y}{2}-y)$-GSD code.
  \item [(III)] If $y<\frac{(\delta+1)\delta}{2}-1$ and
    $y\frac{q_1^{\beta}-1}{q_1-1}+\gamma>h+\delta-1$, then the code $\cC$ is a $(y,\gamma)$-GSD
    code, where
    $\gamma=\min\{\frac{(\delta+1)\delta}{2}-2y-1,h+\delta-1-2y\}$
    erasures.
\end{itemize}
\end{corollary}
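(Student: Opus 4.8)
The plan is to obtain $\cC$ by feeding the projective‑geometry Steiner system of Lemma~\ref{lem:steiner_pg}, regarded as a regular packing, into Construction~\ref{cons_rearr_h=r-v}, and then to read off the three claims from Corollary~\ref{coro_h=r-v} together with Definition~\ref{def_GSD} (exactly as in the proof of Corollary~\ref{coro_GSD_AG}). First I would record the classical incidence data of the projective space $PG(\beta,q_1)$ underlying a $(2,q_1+1,\frac{q_1^{\beta+1}-1}{q_1-1})$‑Steiner system: the point set has size $\rho=\frac{q_1^{\beta+1}-1}{q_1-1}$, every block (line) has $q_1+1$ points, every point lies on exactly $t:=\frac{q_1^{\beta}-1}{q_1-1}$ blocks (since $t=(\rho-1)/q_1$), and the number of blocks is $|\cB|=\binom{\rho}{2}/\binom{q_1+1}{2}=\frac{\rho(\rho-1)}{q_1(q_1+1)}=\frac{(q_1^{\beta+1}-1)(q_1^{\beta}-1)}{(q_1-1)(q_1^{2}-1)}$. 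Thus this Steiner system is a $t$‑regular $(\rho,q_1+1,1)$‑packing, so with $r+\delta-1=q_1+1$ and $h=r-v$ it is a legitimate input to Construction~\ref{cons_rearr_h=r-v}, which produces a $t\times\rho$ array code with $n=t\rho$.

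Next I would verify that the resulting code has the stated parameters. Taking $\ell+1=|\cB|$, Construction~\ref{cons_poly} (through Construction~\ref{cons_rearr_h=r-v}) gives $k=\ell r+v=(|\cB|-1)r+v$ and $n=t\rho=\frac{(q_1^{\beta+1}-1)(q_1^{\beta}-1)}{(q_1-1)^{2}}$, matching the statement; likewise $h=r-v=(r+\delta-1)-(\delta-1)-v=q_1-\delta-v+2$, and the array shape is $t\times\rho=\frac{q_1^{\beta}-1}{q_1-1}\times\frac{q_1^{\beta+1}-1}{q_1-1}$, as claimed. The only hypothesis of Construction~\ref{cons_poly} still to be secured is the field size: the $\rho$ evaluation points of $X$ and the $h$ points of $S$ must be distinct elements of $\F_q$, i.e.\ $q\ge\rho+h$ with $q$ a prime power, which is exactly the hypothesis on $q$ (and such a prime power always exists in a short interval above $\rho+h$, cf.\ Remark~\ref{rem_steiner}).

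With the construction in place, the three items follow from Corollary~\ref{coro_h=r-v}. Since $h\le\delta^{2}$, that corollary already yields that $\cC$ has $(r,\delta)_i$‑locality, is optimal with respect to the bound of Lemma~\ref{lemma_bound_i} (so its minimum distance is the Singleton‑type value $d=h+\delta$), and that it recovers: any $y\le2$ erased columns together with any further $h-2y-1$ cells; when $\binom{y}{2}\le\delta$, any $y$ erased columns together with $h-2-\binom{y}{2}-y$ cells; and when $y<\frac{(\delta+1)\delta}{2}-1$, any $y$ erased columns together with $\min\{\frac{(\delta+1)\delta}{2}-2y-1,\,h+\delta-1-2y\}$ cells. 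Each such recovery guarantee is precisely condition~(I) of Definition~\ref{def_GSD} for the pair $(s,\gamma)=(y,\cdot)$, while the side inequalities listed in items (I)–(III) — e.g.\ $y(\frac{q_1^{\beta}-1}{q_1-1}-2)>\delta$ in item (I) — are exactly condition~(II) of Definition~\ref{def_GSD}, namely $sb+\gamma>d-1$ with the number of rows $b=t=\frac{q_1^{\beta}-1}{q_1-1}$: substituting $\gamma=h-2y-1$, $\gamma=h-2-\binom{y}{2}-y$, and $\gamma=\min\{\dots\}$ respectively and simplifying $yt+\gamma>d-1$ (using $d=h+\delta$) reproduces the displayed thresholds. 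This shows $\cC$ is an $(s,\gamma)$‑GSD code in each regime.

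There is no essential obstacle here — the statement is a direct specialization of Corollary~\ref{coro_h=r-v} and Definition~\ref{def_GSD}. The only points requiring care are (a) quoting the correct replication number $t=\frac{q_1^{\beta}-1}{q_1-1}$ of $PG(\beta,q_1)$ and identifying it with the row count $b$ used in Definition~\ref{def_GSD}‑(II), and (b) checking that the routine algebraic simplification of $sb+\gamma>d-1$ genuinely yields the side conditions written in (I)–(III).
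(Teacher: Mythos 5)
Your proposal is correct and matches the paper's own (very terse) proof, which simply cites Corollary~\ref{coro_h=r-v}, Lemma~\ref{lem:steiner_pg}, and Definition~\ref{def_GSD}; you carry out in full the routine parameter bookkeeping that those citations compress. You also silently repair two small typos in the statement --- the Steiner system here comes from Lemma~\ref{lem:steiner_pg}, not Lemma~\ref{lem:steiner_ag}, and the minimum distance is $h+\delta$, not $h+\delta-1$ --- and your use of $d=h+\delta$ is exactly what is needed for the algebraic check of Definition~\ref{def_GSD}(II).
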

\begin{IEEEproof}
  The proof follows directly from Corollary~\ref{coro_h=r-v},
  Lemma~\ref{lem:steiner_pg}, and Definition~\ref{def_GSD}.
\end{IEEEproof}

\begin{lemma}[\cite{CD}]\label{SG}
For any $\beta\geq 2$ and prime power $q_1$, there exists a $(3,q_1+1,q_1^\beta+1)$-Steiner system.
\end{lemma}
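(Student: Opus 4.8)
The plan is to realize the required Steiner system as a \emph{circle geometry} on a projective line. Take the point set to be the projective line $X=\mathrm{PG}(1,q_1^\beta)=\F_{q_1^\beta}\cup\{\infty\}$, which has exactly $q_1^\beta+1$ points, and let $\ell_0=\mathrm{PG}(1,q_1)=\F_{q_1}\cup\{\infty\}$ be the standard $\F_{q_1}$-subline, of size $q_1+1$. Declare the blocks to be all images $\gamma(\ell_0)$ as $\gamma$ ranges over the group $\mathrm{PGL}_2(\F_{q_1^\beta})$ of fractional-linear maps $z\mapsto(az+b)/(cz+d)$ with $ad-bc\ne0$; since each such $\gamma$ is a bijection of $X$, every block has size $q_1+1$. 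Call this collection of blocks $\cB$. It then remains to show that every $3$-subset of $X$ lies in exactly one block of $\cB$.

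\emph{Existence.} The group $\mathrm{PGL}_2(\F_{q_1^\beta})$ acts sharply $3$-transitively on $X$, so given any three distinct points $p_1,p_2,p_3$ there is $\gamma$ with $\gamma(0)=p_1$, $\gamma(1)=p_2$, $\gamma(\infty)=p_3$; since $0,1,\infty\in\ell_0$, the block $\gamma(\ell_0)$ contains $\{p_1,p_2,p_3\}$.

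\emph{Uniqueness}, which is the crux, reduces to showing that $\ell_0$ is the \emph{only} block through $0,1,\infty$ (the general triple then follows by transporting with a $\gamma$ as above, using that $\mathrm{PGL}_2(\F_{q_1^\beta})$ permutes $\cB$). So suppose $\gamma(\ell_0)$ contains $0,1,\infty$ for some $\gamma$. Then $\gamma^{-1}(0),\gamma^{-1}(1),\gamma^{-1}(\infty)$ are three distinct points of $\ell_0$, and since $\mathrm{PGL}_2(\F_{q_1})$ — viewed as a subgroup of $\mathrm{PGL}_2(\F_{q_1^\beta})$ via $\F_{q_1}\subseteq\F_{q_1^\beta}$, and preserving $\ell_0$ — is itself sharply $3$-transitive on $\ell_0$, there is $\sigma\in\mathrm{PGL}_2(\F_{q_1})$ with $\sigma\gamma^{-1}$ fixing $0,1,\infty$. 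Sharp $3$-transitivity of $\mathrm{PGL}_2(\F_{q_1^\beta})$ on $X$ then forces $\sigma\gamma^{-1}=\mathrm{id}$, i.e.\ $\gamma=\sigma\in\mathrm{PGL}_2(\F_{q_1})$, whence $\gamma(\ell_0)=\ell_0$. Thus the block through any prescribed triple is unique, and $(X,\cB)$ is a $(3,q_1+1,q_1^\beta+1)$-Steiner system in the sense of Definition~\ref{def_Packing}.

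The only genuinely delicate point is the last step: the fact that the maps in $\mathrm{PGL}_2(\F_{q_1^\beta})$ carrying $\ell_0$ to a block through $0,1,\infty$ are \emph{exactly} those defined over $\F_{q_1}$. This is what makes uniqueness work, and one could instead establish it by arguing directly that any three points of $\mathrm{PG}(1,q_1^\beta)$ span a unique $\F_{q_1}$-rational subline. Alternatively, for a purely citation-based route one may invoke the classical existence of these designs (the Miquelian inversive planes when $\beta=2$, and their higher-degree analogues, or Netto-type systems) as recorded in \cite{CD}.
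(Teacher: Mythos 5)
Your construction and argument are correct: you have recovered the classical ``spherical'' or ``circle'' geometry on $\mathrm{PG}(1,q_1^\beta)$, with blocks the $\mathrm{PGL}_2(\F_{q_1^\beta})$-translates of the standard $\F_{q_1}$-subline. The uniqueness step is the delicate one, and your argument handles it cleanly: sharp $3$-transitivity of $\mathrm{PGL}_2(\F_{q_1})$ on $\ell_0$ produces a $\sigma$ with $\sigma\gamma^{-1}$ fixing $0,1,\infty$, and sharp $3$-transitivity of the big group then forces $\sigma\gamma^{-1}=\mathrm{id}$, so $\gamma=\sigma\in\mathrm{PGL}_2(\F_{q_1})$ stabilizes $\ell_0$. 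Transporting by $3$-transitivity gives uniqueness for an arbitrary triple, and existence is immediate. One can sanity-check the block count: the setwise stabilizer of $\ell_0$ is exactly $\mathrm{PGL}_2(\F_{q_1})$ by your uniqueness argument, so the number of blocks is $\abs{\mathrm{PGL}_2(\F_{q_1^\beta})}/\abs{\mathrm{PGL}_2(\F_{q_1})}=\binom{q_1^\beta+1}{3}/\binom{q_1+1}{3}$, as required.

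The paper treats this lemma purely as a citation to the Handbook of Combinatorial Designs and supplies no proof, so your proposal is a genuine addition rather than a variant of an argument in the text. What your route buys is self-containment and a concrete group-theoretic description of the design (useful if one later wants explicit block lists or automorphism information); what the paper's citation buys is brevity and immediate access to the surrounding classification literature, including the case $\beta=2$ (Miquelian inversive planes) that your remark alludes to.
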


Similarly, based on Steiner systems from spherical geometries and
Construction~\ref{cons_rearr_h=r-v}, we have the following conclusion
for GSD codes.

\begin{corollary}\label{coro_GSD_SG}
Let $\beta\geq 2$ be an integer and $q_1$ a prime power.  Set
$q_1+1=r+\delta-1$, $n=(q_1^{\beta}+1)\frac{\binom{q_1^{\beta}}{2}}{\binom{q_1}{2}}$, $\delta\geq 2$,
$k=(\frac{(q_1^{\beta}+1)\binom{q_1^{\beta}}{2}}{(q_1+1)\binom{q_1}{2}}-1)r+v$ with $1\leq v\leq r-1$, and
$h=r-v=q_1-\delta-v+2$.  Let $\cC$ be the $\frac{\binom{q_1^{\beta}}{2}}{\binom{q_1}{2}}\times (q_1^\beta+1)$ array code
generated by Construction~\ref{cons_rearr_h=r-v} using a
$(2,q_1+1,\frac{q_1^{\beta+1}-1}{q_1-1})$-Steiner system from
Lemma~\ref{lem:steiner_ag}.  If $h\leq \delta^2$, then the code $\cC$
is an $[n,k,h+\delta-1]_{q}$ optimal locally repairable code with
$(r,\delta)_i$-locality, where $q\geq q^{\beta}_1+1+h$
is a prime power. Furthermore:
\begin{itemize}
  \item [(I)] If $y\leq 2$ and $y(\frac{\binom{q_1^{\beta}}{2}}{\binom{q_1}{2}}-2)>\delta$, then the code $\cC$ is
    a $(y,h-2y-1)$-GSD code.
  \item [(II)] If $\binom{y}{2}\leq \delta$ and
    $y\frac{\binom{q_1^{\beta}}{2}}{\binom{q_1}{2}}-1-\binom{y}{2}-y>\delta$, then the code $\cC$ is a
    $(y,h-2-\binom{y}{2}-y)$-GSD code.
  \item [(III)] If $y<\frac{(\delta+1)\delta}{2}-1$ and
    $y\frac{\binom{q_1^{\beta}}{2}}{\binom{q_1}{2}}+\gamma>h+\delta-1$, then the code $\cC$ is a $(y,\gamma)$-GSD
    code, where
    $\gamma=\min\{\frac{(\delta+1)\delta}{2}-2y-1,h+\delta-1-2y\}$
    erasures.
\end{itemize}
\end{corollary}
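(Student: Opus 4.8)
The plan is to prove this exactly as Corollaries~\ref{coro_GSD_AG} and~\ref{coro_GSD_PG} were proved: feed the spherical‑geometry Steiner system of Lemma~\ref{SG} into Construction~\ref{cons_rearr_h=r-v}, invoke Corollary~\ref{coro_h=r-v} to get the array code together with its recovery guarantees and optimality, and then translate those guarantees into the language of Definition~\ref{def_GSD}. Only the first step (checking the hypotheses) and the last step (the GSD dictionary) need any thought; the middle step is a direct citation.

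First I would turn the $(3,q_1+1,q_1^\beta+1)$‑Steiner system into a regular packing of the shape demanded by Construction~\ref{cons_rearr_h=r-v}. By the lemma identifying a $(\tau,t,n)$‑Steiner system with a $\binom{n-1}{\tau-1}/\binom{t-1}{\tau-1}$‑regular $\tau$‑$(n,t,1)$‑packing, this system is a $\bigl(\binom{q_1^\beta}{2}/\binom{q_1}{2}\bigr)$‑regular packing on $q_1^\beta+1$ points whose blocks have size $q_1+1$. Setting $q_1+1=r+\delta-1$ matches the block size, so the packing may serve as $\cB$ with regularity parameter $t=\binom{q_1^\beta}{2}/\binom{q_1}{2}$ and $\rho=q_1^\beta+1$ points; its number of blocks is $\ell+1=(q_1^\beta+1)\binom{q_1^\beta}{2}/\bigl((q_1+1)\binom{q_1}{2}\bigr)$. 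Since $h=r-v=q_1-\delta-v+2\le\delta^2$ by hypothesis, Corollary~\ref{coro_h=r-v} applies and yields a $t\times\rho$ array code $\cC$ with $(r,\delta)_i$‑locality, length $n=t\rho=(q_1^\beta+1)\binom{q_1^\beta}{2}/\binom{q_1}{2}$, dimension $k=\bigl((q_1^\beta+1)\binom{q_1^\beta}{2}/\bigl((q_1+1)\binom{q_1}{2}\bigr)-1\bigr)r+v$, and the three families of recovery guarantees (I)--(III) of Corollary~\ref{coro_h=r-v}. The evaluation points only need $q\ge\rho+h=q_1^\beta+1+h$ field elements, so any prime power that large works. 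Since $\cC$ recovers any $h+\delta-1$ erasures, $d\ge h+\delta$; the Singleton‑type bound of Lemma~\ref{lemma_bound_i} (available because the array rearrangement preserves $n$ and $k$) gives $d\le h+\delta$, so $d=h+\delta$ and $\cC$ is optimal.

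It then remains to recast each recovery guarantee as a GSD statement. In each of the three regimes the relevant clause of Corollary~\ref{coro_h=r-v} is verbatim condition~(I) of Definition~\ref{def_GSD} with $b=\binom{q_1^\beta}{2}/\binom{q_1}{2}$ rows, $s=y$ erased columns, and $\gamma$ the stated number of extra cells; condition~(II), namely $sb+\gamma>d-1=h+\delta-1$, becomes after substitution exactly the ``second restriction'' printed in items~(I)--(III) --- for instance $y\bigl(\binom{q_1^\beta}{2}/\binom{q_1}{2}-2\bigr)>\delta$ in item~(I) upon using $\gamma=h-2y-1$, and analogously $y\binom{q_1^\beta}{2}/\binom{q_1}{2}-1-\binom{y}{2}-y>\delta$ in item~(II) and $y\binom{q_1^\beta}{2}/\binom{q_1}{2}+\gamma>h+\delta-1$ in item~(III). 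That the array columns and cells are genuine array entries is immediate from the layout prescribed in Construction~\ref{cons_rearr_h=r-v}.

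The one step I expect to be a real obstacle, rather than bookkeeping, is the very first one: a $(3,q_1+1,q_1^\beta+1)$‑Steiner system is a priori only a $3$‑$(q_1^\beta+1,q_1+1,1)$‑packing, so it guarantees $|B_i\cap B_j|\le 2$ for $i\ne j$, whereas the column‑erasure counting underlying Corollary~\ref{coro_h=r-v} (through Theorem~\ref{thm_array}) was written for honest $(m,r+\delta-1,1)$‑packings, where $|B_i\cap B_j|\le 1$ and each pair of points lies in at most one block. So the genuine work is either to verify that the spherical‑geometry blocks do satisfy these stronger pairwise conditions in the parameter range actually used, or, failing that, to rerun the column‑erasure counting of Theorem~\ref{thm_array} with the weaker intersection bound $|B_i\cap B_j|\le 2$ (and the consequent control on how often a pair of evaluation points recurs among the repair sets) and confirm that it still produces the thresholds quoted in items~(I)--(III). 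Everything past that point is substitution into the formulas of Corollary~\ref{coro_h=r-v} and simplification of the binomial coefficients.
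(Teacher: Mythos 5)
Your proposal matches the paper's one-line proof: Corollary~\ref{coro_GSD_SG} is presented there as following ``directly from Corollary~\ref{coro_h=r-v}, Lemma~\ref{SG}, and Definition~\ref{def_GSD}'', which is exactly the citation chain you outline. Your reading of which Steiner system is intended is also right: the statement's references to Lemma~\ref{lem:steiner_ag} and to a $(2,q_1+1,\frac{q_1^{\beta+1}-1}{q_1-1})$-Steiner system are copy--paste errors; the array shape, the regularity $t=\binom{q_1^\beta}{2}/\binom{q_1}{2}$, and the formulas for $n$ and $k$ all come from the $(3,q_1+1,q_1^\beta+1)$-Steiner system of Lemma~\ref{SG}, which is also what the proof invokes.

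The obstacle you flag at the end is a genuine gap, and it is a gap in the corollary as written, not merely in your proposal. Construction~\ref{cons_rearr_h=r-v} and Corollary~\ref{coro_h=r-v} require a $t$-regular $(m,r+\delta-1,1)$-packing, which by Definition~\ref{def_packing} means $\tau=2$: every pair of points lies in at most one block, so $|B_i\cap B_j|\leq 1$. The $(3,q_1+1,q_1^\beta+1)$-Steiner system is only a $3$-packing: every triple lies in exactly one block, but every pair lies in $\frac{q_1^\beta-1}{q_1-1}>1$ blocks, and distinct blocks can meet in two points. So the hypothesis of Corollary~\ref{coro_h=r-v} is not satisfied, and the counting in the proof of Theorem~\ref{thm_array} (which explicitly uses ``$A_i\cap A_j\leq 1$'' in case~(III) and ``each $2$-subset of $\Theta$ appears in at most one repair set'' in case~(II)) does not go through verbatim. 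Concretely, optimality would have to be recomputed through Theorem~\ref{theorem_optimal_code_cond_a} with $a=2$, giving the tighter requirement $h\leq\lceil\delta/2\rceil\delta$ rather than $h\leq\delta^2$, and the erasure thresholds in items~(I)--(III) would shrink, since the intersection losses (the $\binom{y}{2}$ term in~(II) and the $\binom{\delta+1}{2}$ term in~(III)) roughly double under $|B_i\cap B_j|\leq 2$. Your suggested fix --- rerun the column-erasure counting with the weaker bound $|B_i\cap B_j|\leq 2$ --- is the right move, but it is not mere bookkeeping, and neither you nor the paper carry it out; as stated, the proof is incomplete.
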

\begin{IEEEproof}
  The proof follows directly from Corollary~\ref{coro_h=r-v},
  Lemma~\ref{SG}, and Definition~\ref{def_GSD}.
\end{IEEEproof}

For regular packings, we have the following lemma due to a recursive
construction from \cite{J}. A direct construction is also supplied for
the reader's convenience in Appendix~\ref{app:pack}.

\begin{lemma}[\cite{J}]
Let $n_2=p_1^{m_1}p_2^{m_2}\cdots p_u^{m_u}$, where $p_i$ are distinct
primes, and $m_i>0$, for all $i$. If
$e|\gcd(p^{m_1}_1-1,p^{m_2}_2-1,\cdots,p^{m_u}_u-1)$, then there
exists a $\frac{1}{e^u}\prod_{1\leq i\leq u} (p^{m_i}_1-1)$-regular
$(en_2,e,1)$-packing.
\end{lemma}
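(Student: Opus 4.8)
The plan is to give the \emph{direct} construction promised for the appendix; the recursive route of \cite{J} then amounts to iterating the $u=1$ case of it. Write $F_i=\F_{p_i^{m_i}}$ and form the commutative ring $R=F_1\times\cdots\times F_u$, so $\abs{R}=n_2$ and its group of units is $R^{*}=F_1^{*}\times\cdots\times F_u^{*}$ with $\abs{R^{*}}=\prod_{i=1}^{u}(p_i^{m_i}-1)$. Since $F_i^{*}$ is cyclic of order $p_i^{m_i}-1$ and $e\mid p_i^{m_i}-1$, fix an element $\zeta_i\in F_i^{*}$ of exact order $e$ and put $\zeta=(\zeta_1,\dots,\zeta_u)\in R^{*}$; for $j\in\Z_e$ write $\zeta^{j}=(\zeta_1^{j},\dots,\zeta_u^{j})$. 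The point set of the packing will be $\Z_e\times R$, which has exactly $en_2$ points, and for $c\in R^{*}$, $d\in R$ the candidate blocks are the affine images of $\langle\zeta\rangle$,
\[
B_{c,d}=\mathset{\,(j,\ \zeta^{j}c+d)\ :\ j\in\Z_e\,},
\]
each of which has exactly $e$ elements since its first coordinates are pairwise distinct.

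The one nontrivial point is the packing property, and it reduces to the observation that $\zeta^{j}-\zeta^{j'}\in R^{*}$ whenever $j\ne j'$ in $\Z_e$: each $\zeta_i$ has order exactly $e$, so $j-j'\not\equiv0\pmod e$ forces $\zeta_i^{j}\ne\zeta_i^{j'}$ in the field $F_i$, hence every coordinate of $\zeta^{j}-\zeta^{j'}$ is nonzero and the vector is a unit of $R$. Now if $B_{c,d}$ and $B_{c',d'}$ met in two points, at indices $j\ne j'$, then subtracting $\zeta^{j}c+d=\zeta^{j}c'+d'$ from $\zeta^{j'}c+d=\zeta^{j'}c'+d'$ gives $(\zeta^{j}-\zeta^{j'})(c-c')=0$, so $c=c'$ after cancelling the unit (this is exactly the step where $R$ having zero divisors would otherwise bite), and then $d=d'$. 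Thus any two distinct blocks share at most one point, so no $2$-subset of $\Z_e\times R$ lies in more than one block; hence \emph{any} family $\cB=\mathset{B_{c,d}:c\in C,\ d\in R}$ with $C\subseteq R^{*}$ is an $(en_2,e,1)$-packing. (One should also note the blocks are pairwise distinct when $e\ge2$, which is immediate from the same intersection analysis.)

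Finally, the regularity: given a point $(j_0,y_0)\in\Z_e\times R$, for each $c\in R^{*}$ there is precisely one $d\in R$ with $\zeta^{j_0}c+d=y_0$, namely $d=y_0-\zeta^{j_0}c$, so $(j_0,y_0)$ lies in exactly $\abs{C}$ blocks of $\cB$. From $e\mid p_i^{m_i}-1$ for every $i$ we get $e^{u}\mid\prod_i(p_i^{m_i}-1)=\abs{R^{*}}$, so we may take $C$ to be a set of coset representatives of the subgroup $H_1\times\cdots\times H_u\le R^{*}$, where $H_i$ is the unique order-$e$ subgroup of $F_i^{*}$; then $\abs{C}=\frac{1}{e^{u}}\prod_i(p_i^{m_i}-1)$, and $\cB$ is a $\bigl(\frac{1}{e^{u}}\prod_i(p_i^{m_i}-1)\bigr)$-regular $(en_2,e,1)$-packing, as claimed. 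I do not expect a real obstacle: everything is direct verification once one has the right block shape; the only thing to get exactly right is that $\zeta^{j}-\zeta^{j'}$ is a unit, which is precisely where the hypothesis $e\mid\gcd(p_1^{m_1}-1,\dots,p_u^{m_u}-1)$ enters. The recursive formulation is equivalent: from $w_s$-regular $(N_s,e,1)$-packings one builds a $\prod_s w_s$-regular packing on $\bigl(\prod_s N_s\bigr)/e^{\,(\#\text{factors})-1}$ points, and applying this to the $u=1$ instances over $\F_{p_i^{m_i}}$ recovers the statement.
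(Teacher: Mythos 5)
Your construction is exactly the paper's direct construction from Appendix~\ref{app:pack}: your ring $R$ is its $T$, your order-$e$ element $\zeta$ is its $\bm\beta_e$, your coset representatives $C\subseteq R^*$ correspond to its index set $A$ (via $c=\bm\alpha^J$), and your blocks $B_{c,d}$ are its $B_{J,\epsilon}$, with the packing property proved by the same unit-cancellation argument ($\zeta^j-\zeta^{j'}$ has all coordinates nonzero). Your write-up is in fact slightly more complete, since you also verify block-distinctness and the regularity count explicitly, which the paper's proof leaves implicit.
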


Based on regular packings we can also generate GSD codes as follows.

\begin{corollary}\label{coro_GSD_RP}
Let $n_2=p_1^{m_1}p_2^{m_2}\cdots p_u^{m_u}$ and
$e|\gcd(p^{m_1}_1-1,p^{m_2}_2-1,\cdots,p^{m_u}_u-1)$, where $p_i$ are
distinct primes, and $m_i>0$, for all $i$. Define $p= \prod_{1\leq
  i\leq u} (p^{m_i}_1-1)$. Set $e=r+\delta-1$, $n=n_2p/e^{u-1}$,
$\delta\geq 2$, $k=(n_2p/e^u-1)r+v$ with $1\leq v\leq r-1$, and
$h=r-v=e-\delta+1-v$.  Let $\cC$ be the $p/e^u\times en_2$ array code
generated by Construction~\ref{cons_rearr_h=r-v}.  If $h\leq \delta^2$, then
the code $\cC$ is an $[n,k,h+\delta-1]_{q}$ optimal locally
repairable code with $(r,\delta)_i$-locality, where $q\geq en_2+h$ is
a prime power. Furthermore:
\begin{itemize}
  \item [(I)] If $y\leq 2$ and $y(p/e^{u-1}-2)>\delta$, then the code
    $\cC$ is a $(y,h-2y-1)$-GSD code.
  \item [(II)] If $\binom{y}{2}\leq \delta$ and
    $yp/e^{u-1}-1-\binom{y}{2}-y>\delta$, then the code $\cC$ is a
    $(y,h-2-\binom{y}{2}-y)$-GSD code.
\item [(III)] If $y<\frac{(\delta+1)\delta}{2}-1$ and
  $yp/e^{u-1}+\gamma>h+\delta-1$, then the code $\cC$ is a
  $(y,\gamma)$-GSD code, where
  $\gamma=\min\{\frac{(\delta+1)\delta}{2}-2y-1,h+\delta-1-2y\}$
  erasures.
\end{itemize}
\end{corollary}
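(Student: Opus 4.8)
The plan is to specialize the GSD machinery exactly as in the proofs of Corollaries~\ref{coro_GSD_AG}--\ref{coro_GSD_SG}; only the combinatorial ingredient changes. First I would put $e=r+\delta-1$ and apply Johnson's regular-packing lemma quoted just above to obtain a $t$-regular $(en_2,e,1)$-packing on $en_2$ points, where $t=\frac{1}{e^u}\prod_{1\le i\le u}(p_i^{m_i}-1)=p/e^u$; the hypothesis $e\mid\gcd(p_1^{m_1}-1,\dots,p_u^{m_u}-1)$ is exactly what makes $t$ an integer. Since $q\ge en_2+h$, the field $\F_q$ has room for the $en_2$ packing points together with a disjoint $h$-set $S$, $h=r-v=e-\delta+1-v$, so this packing and $S$ can be fed into Construction~\ref{cons_rearr_h=r-v}, producing the $(p/e^u)\times en_2$ array code $\cC$.

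Next I would read off the parameters. Counting point--block incidences in the $t$-regular packing gives $(\ell+1)(r+\delta-1)=t\cdot en_2$, hence $\ell+1=tn_2=n_2p/e^u$, so $k=\ell r+v=(n_2p/e^u-1)r+v$ and $n=(p/e^u)(en_2)=n_2p/e^{u-1}$, in agreement with the statement; the array is full because each of the $en_2$ columns carries exactly $p/e^u$ symbols, the $h$ columns $x_1,\dots,x_{r-v}$ merely trading one packing symbol for one of the $h$ global parities. Invoking Corollary~\ref{coro_h=r-v} (whose hypothesis $h\le\delta^2$ is the standing assumption) now gives everything at once: $\cC$ has $(r,\delta)_i$-locality, it recovers every erasure of $h+\delta-1$ coordinates --- which, compared against Lemma~\ref{lemma_bound_i}, makes it optimal --- and it recovers each of the three families of ``$y$ erased columns together with a bounded number of extra erased cells'' listed in items~(I)--(III) of that corollary.

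It remains only to restate these guarantees in the language of Definition~\ref{def_GSD}. For each of the three cases the recoverability conclusion of Corollary~\ref{coro_h=r-v} is precisely clause~(I) of Definition~\ref{def_GSD} for the displayed pair $(s,\gamma)=(y,\gamma)$, while the second inequality imposed in each of items~(I)--(III) of the present corollary is, after substituting the stated value of $\gamma$, the corresponding instance of clause~(II), $sb+\gamma>d-1$, with $b$ the number of rows of the array; this clause is exactly what forbids the recovered pattern from being already covered by the minimum distance. Putting the two clauses together shows that $\cC$ is a $(y,\gamma)$-GSD code in each case, which is the assertion of the corollary.

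There is no new mathematical content: the argument is simply the composition of Corollary~\ref{coro_h=r-v}, Johnson's regular-packing lemma above, and Definition~\ref{def_GSD}. The only place that demands attention --- which I expect to be entirely mechanical --- is the bookkeeping: checking that the array dimensions $p/e^u$ and $en_2$ emerging from the packing coincide with the numbers in the statement, that $S$ can be chosen disjoint from the point set once $q\ge en_2+h$, and that $h\le\delta^2$ and $\binom{y}{2}\le\delta$ are exactly the side conditions required by Corollary~\ref{coro_h=r-v} and Definition~\ref{def_GSD}.
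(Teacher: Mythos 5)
Your proposal is correct and matches the paper's own (implicit) proof, which — like the three preceding corollaries — simply composes the quoted regular-packing lemma from \cite{J} with Corollary~\ref{coro_h=r-v} and Definition~\ref{def_GSD}. The bookkeeping you defer to is indeed routine: the $t$-regular $(en_2,e,1)$-packing with $t=p/e^u$, together with an $h$-subset $S$ disjoint from the $en_2$ points (possible since $q\ge en_2+h$), yields via Construction~\ref{cons_rearr_h=r-v} exactly the $p/e^u\times en_2$ array with the stated $n$ and $k$, and the three GSD claims are the three recovery statements of Corollary~\ref{coro_h=r-v} paired with the second inequality in each item to certify clause~(II) of Definition~\ref{def_GSD}.
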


\begin{table}
  \caption{A comparison of MR-codes, SD-codes, and GSD-codes}
  \label{tab:comp}
  \begin{center}
    \begin{tabular}{|c|c|c|c|c|c|}
      \hline
      $s$ & $\gamma$ & Type & $n$ & Ref. & Comment \\
      \hline\hline
      any & $1$ & MR & $\Theta(q)$ & \cite{BHH} & \\
      $1,2,3$ & $2$ & SD & $\Theta(q)$ & \cite{PB} &\\
      any & $2$ & SD & $\Theta(q)$ & \cite{BPSY} &\\
      any & $3$ & MR & $\Theta(q^{1/3})$ & \cite{GGY} &\\
      $1$ & any & MR & $\Theta(q^{1/(\gamma-1)})$ & \cite{GHJY} & $r|(k+\gamma)$ \\
      any & any & MR & $\Theta(\log q)$ & \cite{CK} & $s=\delta-1$ \\
      any & any & MR & $\Theta(q^{1/\gamma})$ & \cite{GYBS} & $s=\delta-1$ \\
      any & any & MR & $\Theta(q^{1/r})$ & \cite{MK} & $s=\delta-1$ \\
      any & any & GSD & $\Theta(q^{2-1/\beta})$ & Cor.~\ref{coro_GSD_AG},\ref{coro_GSD_PG} & $\beta\geq 2$ is a constant \\
      any & any & GSD & $\Theta(q^{3-2/\beta})$ & Cor.~\ref{coro_GSD_SG} & $\beta\geq 2$ is a constant\\
      any & any & GSD & $\Theta(q^{2})$ & Cor.~\ref{coro_GSD_AG},\ref{coro_GSD_PG} & $q_1$ is a constant \\
      any & any & GSD & $\Theta(q^{3})$ & Cor.~\ref{coro_GSD_SG} & $q_1$ is a constant\\
      any & any & GSD & $\Theta(q^2)$ & Cor.~\ref{coro_GSD_RP} & $e,u,h$ are constants\\
      any & any & GSD & $\Theta(q^{\tau})$ & Cor.~\ref{coro_h=r-v} & Remark~\ref{rem:agpg} (non-explicit) \\
      any & any & GSD & $\Theta(q^{\tau})$ & Cor.~\ref{coro_h=r-v} & Remark~\ref{rem:rp} (non-explicit)\\
      \hline
    \end{tabular}
  \end{center}
\end{table}

Table~\ref{tab:comp} lists some known results about SD codes and MR
code (PMDS codes) as a comparison with the GSD codes we have
constructed. The main point of comparison is the asymptotics of the
length of the code with respect to the field size. For this table,
$n=m(r+\delta-1)$ is the total number of sectors for a codeword, $k$
is the number of sectors for information symbols, $r+\delta-1$ is the
number of columns (i.e., the code has $(r,\delta)_a$-locality), and
$q$ is the field size. For a fair comparison with our results in
Corollaries \ref{coro_GSD_AG}--\ref{coro_GSD_RP}, we consider $r$,
$\delta$, and $\gamma$, as constants when we consider the relationship
between $n$ and $q$. We further make the following remarks:

\begin{remark}
  \label{rem:agpg}
  By Corollaries~\ref{coro_GSD_AG} and~\ref{coro_GSD_PG}, there exist
GSD codes with $n=\Theta(q^2)$, where $h$, $r$, and $\delta$ are regarded as constants, i.e., $q_1$ is a constant,
as already written in Table~\ref{tab:comp}.
Note that if we regard $\beta\geq 2$ as a
constant then $n=\Theta(q^{\frac{2\beta-1}{\beta}})$ with $q=\Theta(q_1^{\beta})$.
In addition, for general cases by
using Steiner systems with parameters $(\tau,r+\delta-1,n_1)$,
Steiner systems are capable of yielding optimal locally repairable
codes (similarly, GSD codes) with length $n=\Theta(q^\tau)$ as shown
in Corollary \ref{coro_steiner} and Remark
\ref{coro_steiner}. Here we apply the fact that Steiner systems are regular packings,
which means that the locally repairable codes in Corollary \ref{coro_steiner} and
Remark \ref{coro_steiner} can yield GSD codes by Construction \ref{cons_rearr_h=r-v} and
Corollary \ref{coro_h=r-v}.
For example, in Corollary \ref{coro_GSD_SG}, we have
$n=\Theta(q^3)$ for the case $\tau=3$, where $q_1+1=r+\delta-1$ is regarded as a constant.
However, the problem of constructing Steiner
systems in general is widely open in combinatorics. For a summary of
combinatorial designs and linear codes, the reader may refer to
\cite{D} for example.
\end{remark}

\begin{remark}
  \label{rem:rp}
According to Corollary~\ref{coro_GSD_RP}, there exist GSD codes with
$n=\frac{\prod_{1\leq i\leq u}(p^{m_i}_i(p^{m_i}_i-1))}{e^u}=\Theta(q^2)$,
where $q\geq e\prod_{1\leq i\leq u}p^{m_i}_i+h$ and we consider
$e=r+\delta-1$, $u$, and $h$ as
constants. In particular, for the case $\delta=2$, this code has
order-optimal length with respect to the bound in Theorem
\ref{theorem_bound_delta>2}. Similarly, for $\tau>2$, to generate
codes with length $n=\Theta(q^\tau)$ we need
regular $\tau$-$(n_1,r+\delta-1,1)$-packings with $\tau>2$ (see Definition \ref{def_packing}),
where we also need to apply Construction \ref{cons_rearr_h=r-v} to rearrange
the locally repairable codes into GSD codes.
\end{remark}

\begin{example}
Set $n=9\times 73=657$, $k=7\times 72+1=505$, $\delta=3$, $r=7$, and
$h=6$. Let $\cA=\{A_i~:~ i\in[73]\}$ be a $(2,9,73)$-Steiner system.
According to Construction \ref{cons_rearr_h=r-v}, we can generate a
$9\times 73$ array code, which forms a $(2,1)$-GSD code (or a
$(1,3)$-GSD code). This code is an optimal $[657,505,9]_{q\geq 79}$
locally repairable code with $(7,3)_i$-locality when viewed as a one
dimensional linear code.
\end{example}

\section{Locally Repairable Codes via Classical Goppa Codes}\label{sec-Goppa}

In this section, inspired by the classical Goppa code \cite{G70}, we
apply a similar method to construct locally repairable codes.

\begin{construction}\label{cons_Goppa}
Let $G_1(x)$ and $G_2(x)$ be two polynomials over $\F_{q^m}$ with
degree $\delta-1$ and $h$, respectively. Let
$S=(\gamma_1,\gamma_2,\cdots,\gamma_n)$ be a sequence of length $n$
over $\F_{q^m}$. Also, let $S_1,\dots,S_{\ell+1}\subseteq\F_{q^m}$ be
subsets such that $|S_i|=r+\delta-1$ for $1\leq i\leq \ell$,
$|S_{\ell+1}|\leq h$, as well as,
\begin{equation*}
\bigcup_{1\leq i\leq \ell+1}S_{i}= \{\gamma_i~:~1\leq i\leq n\}.
\end{equation*}
and
\begin{equation*}
G_1(\gamma_i)G_2(\gamma_i)\ne 0 \quad \text{for }1\leq i\leq n.
\end{equation*}
Define the code $\Gamma_{q^m}(\cS=\{S,S_1,\dots,S_\ell\},\cG=\{G_1,G_2\})$ as a set of vectors $V=(v_1,v_2,\dots,v_n)\in \F^n_{q^m}$
such that
\begin{equation*}
\sum_{1\leq j\leq r+\delta-1}\frac{v_{i,j}}{x-\gamma_{i,j}}\equiv 0\pmod{G_1(x)}\quad \text{for }1\leq i\leq \ell
\end{equation*}
and
\begin{equation*}
\sum_{1\leq j\leq n}\frac{v_{j}}{x-\gamma_{j}}\equiv 0\pmod{G_2(x)},
\end{equation*}
where for $1\leq i\leq \ell$ and $1\leq j\leq r+\delta-1$, we
denote $v_{(i-1)(r+\delta-1)+j}$ as $v_{i,j}$, and
$\gamma_{(i-1)(r+\delta-1)+j}$ as $\gamma_{i,j}$, and
$S_i=\{\gamma_{i,j}\}_{j=1}^{r+\delta-1}$.
\end{construction}

\begin{lemma}\label{lemma_Gamma_k}
The code $\Gamma_{q^m}(\cS,\cG)$ generated by
Construction~\ref{cons_Goppa} is an $[n,k]_{q^m}$ linear code with
$k\geq n-\ell(\delta-1)-h$, whose code symbol $v_{i,j}$ has
$(r,\delta)$-locality for $1\leq i\leq \ell$ and $1\leq j\leq
r+\delta-1$.
\end{lemma}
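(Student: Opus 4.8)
The plan is to check the three assertions of the lemma in turn: that $\Gamma_{q^m}(\cS,\cG)$ is $\F_{q^m}$-linear, that its dimension $k$ satisfies $k\ge n-\ell(\delta-1)-h$, and that each symbol $v_{i,j}$ with $1\le i\le\ell$ has $(r,\delta)$-locality. Linearity is essentially free: since $G_1(\gamma)\ne0$ and $G_2(\gamma)\ne0$ for every $\gamma$ appearing in $\cS$, the element $x-\gamma$ is invertible modulo $G_1(x)$ and modulo $G_2(x)$, so $(x-\gamma)^{-1}$ is a well-defined unit in $\F_{q^m}[x]/(G_1(x))$ and in $\F_{q^m}[x]/(G_2(x))$; each defining congruence of Construction~\ref{cons_Goppa} is then an $\F_{q^m}$-linear equation in the coordinates of $V$, so $\Gamma_{q^m}(\cS,\cG)$ is a subspace of $\F_{q^m}^n$ and hence an $[n,k]_{q^m}$ code for some $k$.

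For the dimension bound I would count the linear constraints defining the code. The $i$-th local congruence ($1\le i\le\ell$) is the requirement that the image of $(v_{i,1},\dots,v_{i,r+\delta-1})$ under the $\F_{q^m}$-linear map $(v_{i,1},\dots,v_{i,r+\delta-1})\mapsto\sum_{j=1}^{r+\delta-1}v_{i,j}(x-\gamma_{i,j})^{-1}$ vanish in $\F_{q^m}[x]/(G_1(x))$, a quotient ring of $\F_{q^m}$-dimension $\deg G_1=\delta-1$; so it contributes at most $\delta-1$ scalar linear constraints. Similarly the single global congruence takes values in $\F_{q^m}[x]/(G_2(x))$, of dimension $h=\deg G_2$, and contributes at most $h$ scalar constraints. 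Altogether $\Gamma_{q^m}(\cS,\cG)$ is the solution set of at most $\ell(\delta-1)+h$ homogeneous linear equations over $\F_{q^m}$ in $n$ unknowns, so $k\ge n-\ell(\delta-1)-h$.

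For locality, fix $i\in[\ell]$ and take the set of coordinates indexed by $S_i=\{\gamma_{i,j}:1\le j\le r+\delta-1\}$ as the repair set of $v_{i,j}$; it has cardinality $r+\delta-1$ and contains the index of $v_{i,j}$, so it remains only to show that the punctured code $\Gamma_{q^m}(\cS,\cG)|_{S_i}$ has minimum distance at least $\delta$. The projection onto $S_i$ of any codeword still satisfies the $i$-th local congruence, which involves only these coordinates, so $\Gamma_{q^m}(\cS,\cG)|_{S_i}$ is a subcode of the classical Goppa code on support $S_i$ with Goppa polynomial $G_1$, and a subcode has minimum distance at least that of the ambient code. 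So suppose such a projection is supported on a set $T\subseteq S_i$ with $|T|\le\delta-1$; multiplying the $i$-th congruence by $\prod_{j\in T}(x-\gamma_{i,j})$, which is coprime to $G_1$ since $G_1(\gamma_{i,j})\ne0$, yields $G_1(x)\mid P(x)$, where $P(x)=\sum_{j\in T}v_{i,j}\prod_{j'\in T\setminus\{j\}}(x-\gamma_{i,j'})$ has $\deg P\le|T|-1\le\delta-2$. Since $\deg G_1=\delta-1>\deg P$ we get $P\equiv0$; evaluating $P$ at $\gamma_{i,j_0}$ for each $j_0\in T$ and using that the $\gamma_{i,j}$ are distinct forces $v_{i,j_0}=0$, so the projection is the zero vector. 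Hence every nonzero word of $\Gamma_{q^m}(\cS,\cG)|_{S_i}$ has weight at least $\delta$, giving $(r,\delta)$-locality.

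I expect the routine parts (linearity, the constraint count, and the textbook fact that a degree-$t$ Goppa polynomial yields minimum distance at least $t+1$) to cause no trouble; the only place that needs a little care is the locality step — specifically, identifying $\Gamma_{q^m}(\cS,\cG)|_{S_i}$ as a subcode of a Goppa code governed by the single congruence indexed by $i$, and invoking the coprimality $\gcd(G_1(x),\prod_{j\in T}(x-\gamma_{i,j}))=1$ at the right moment so that $G_1(x)\mid P(x)$ may be concluded.
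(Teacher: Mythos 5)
Your proof is correct and follows essentially the same route as the paper: count the $\ell(\delta-1)+h$ linear constraints to get the dimension bound, and observe that the $i$-th local congruence makes $\cC|_{S_i}$ a (sub)code governed by the Goppa polynomial $G_1$ of degree $\delta-1$, hence of minimum distance at least $\delta$. The only cosmetic difference is that the paper writes out the explicit block parity-check matrix (in Cauchy/Goppa form, which it reuses in later lemmas) and cites MacWilliams--Sloane for the Goppa distance bound, whereas you argue directly from the congruences and reprove that bound via the degree argument on $P(x)$.
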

\begin{IEEEproof}
By the properties of classical Goppa codes (refer to \cite{MS} Chapter
12.3 for more details), the parity-check matrix of
$\Gamma_{q^m}(\cS,\cG)$ may be written as
\begin{equation*}
P=
\begin{pmatrix}
P_{1,1}&0&\cdots&0&0\\
0&P_{1,2}&\cdots&0&0\\
\vdots&\vdots&\ddots&\vdots&\vdots\\
0&0&\cdots&P_{1,\ell}&0\\
P_{2,1}&P_{2,2}&\cdots&P_{2,\ell}&P_{2,\ell+1}\\
\end{pmatrix},
\end{equation*}
where
\begin{equation*}
P_{1,i}=\parenv{
\begin{matrix}
G_1^{-1}(\gamma_{i,1})&G_1^{-1}(\gamma_{i,2})&G_1^{-1}(\gamma_{i,3})&\cdots&G_1^{-1}(\gamma_{i,r+\delta-1})\\
G_1^{-1}(\gamma_{i,1})\gamma_{i,1}&G_1^{-1}(\gamma_{i,2})\gamma_{i,2}&G_1^{-1}(\gamma_{i,3})\gamma_{i,3}&\cdots&G_1^{-1}(\gamma_{i,r+\delta-1})\gamma_{i,r+\delta-1}\\
\vdots&\vdots&\vdots& &\vdots\\
G_1^{-1}(\gamma_{i,1})\gamma^{\delta-3}_{i,1}&G_1^{-1}(\gamma_{i,2})\gamma^{\delta-3}_{i,2}&G_1^{-1}(\gamma_{i,3})\gamma^{\delta-3}_{i,3}&\cdots
&G_1^{-1}(\gamma_{i,r+\delta-1})\gamma^{\delta-3}_{i,r+\delta-1}\\
G_1^{-1}(\gamma_{i,1})\gamma^{\delta-2}_{i,1}&G_1^{-1}(\gamma_{i,2})\gamma^{\delta-2}_{i,2}&G_1^{-1}(\gamma_{i,3})\gamma^{\delta-2}_{i,3}&\cdots
&G_1^{-1}(\gamma_{i,r+\delta-1})\gamma^{\delta-2}_{i,r+\delta-1}\\
\end{matrix}}
\end{equation*}
for $1\leq i\leq \ell$, and
\begin{equation*}
\begin{split}
&\parenv{
\begin{matrix}
P_{2,1}&P_{2,2}&P_{2,3}&\cdots&P_{2,\ell}&P_{2,\ell+1}\\
\end{matrix}}\\
=&\parenv{
\begin{matrix}
G_2^{-1}(\gamma_{1})&G_2^{-1}(\gamma_{2})&G_2^{-1}(\gamma_{3})&\cdots&G_2^{-1}(\gamma_{n})\\
G_2^{-1}(\gamma_{1})\gamma_{1}&G_2^{-1}(\gamma_{2})\gamma_{2}&G_2^{-1}(\gamma_{3})\gamma_{3}&\cdots&G_2^{-1}(\gamma_{n})\gamma_{n}\\
\vdots&\vdots&\vdots& &\vdots\\
G_2^{-1}(\gamma_{1})\gamma^{h-2}_{1}&G_2^{-1}(\gamma_{2})\gamma^{h-2}_{2}&G_2^{-1}(\gamma_{3})\gamma^{h-2}_{3}&\cdots
&G_2^{-1}(\gamma_{n})\gamma^{h-2}_{n}\\
G_2^{-1}(\gamma_{1})\gamma^{h-1}_{1}&G_2^{-1}(\gamma_{2})\gamma^{h-1}_{2}&G_2^{-1}(\gamma_{3})\gamma^{h-1}_{3}&\cdots
&G_2^{-1}(\gamma_{n})\gamma^{h-1}_{n}\\
\end{matrix}}
\end{split}
\end{equation*}
and in particular,
\begin{equation*}
P_{2,i}=\parenv{
\begin{matrix}
G_2^{-1}(\gamma_{i,1})&G_2^{-1}(\gamma_{i,2})&G_2^{-1}(\gamma_{i,3})&\cdots&G_2^{-1}(\gamma_{i,r+\delta-1})\\
G_2^{-1}(\gamma_{i,1})\gamma_{i,1}&G_2^{-1}(\gamma_{i,2})\gamma_{i,2}&G_2^{-1}(\gamma_{i,3})\gamma_{i,3}&\cdots&G_2^{-1}(\gamma_{i,r+\delta-1})\gamma_{i,r+\delta-1}\\
\vdots&\vdots&\vdots& &\vdots\\
G_2{-1}(\gamma_{i,1})\gamma^{\delta-3}_{i,1}&G_2^{-1}(\gamma_{i,2})\gamma^{\delta-3}_{i,2}&G_2^{-1}(\gamma_{i,3})\gamma^{\delta-3}_{i,3}&\cdots
&G_2^{-1}(\gamma_{i,r+\delta-1})\gamma^{\delta-3}_{i,r+\delta-1}\\
G_2^{-1}(\gamma_{i,1})\gamma^{\delta-2}_{i,1}&G_2^{-1}(\gamma_{i,2})\gamma^{\delta-2}_{i,2}&G_2^{-1}(\gamma_{i,3})\gamma^{\delta-2}_{i,3}&\cdots
&G_2^{-1}(\gamma_{i,r+\delta-1})\gamma^{\delta-2}_{i,r+\delta-1}\\
\end{matrix}}
\end{equation*}
for $1\leq i\leq \ell$. Thus, by the fact $G_1(\gamma_{i,j})\ne 0$ for
$1\leq i\leq \ell$ and $1\leq j\leq r+\delta-1$, we have the code
symbol $v_{i,j}$ has $(r,\delta)$-locality, i.e., the matrix $P_{1,i}$
is a parity-check matrix of a code with minimum Hamming distance at
least $\delta$. Now the desired result follows from the fact that the
code determined by $P$ has parameters $[n,k\geq
  n-h-\ell(\delta-1)]_{q^m}$.
\end{IEEEproof}

To bound the Hamming distance of $\Gamma_{q^m}(\cS,\cG)$, we define an
auxiliary code over the splitting field of $G_1(x)G_2(x)$. Let
$\F_{q^{m_1}}$ be the splitting field of $G_1(x)G_2(x)$ and let
$B_1=\{b_{1,1},b_{1,2},\dots,b_{1,\delta-1}\}$ and
$B_2=\{b_{2,1},b_{2,2},\dots,b_{2,h}\}$ be the roots of $G_1(x)$ and
$G_2(x)$ over $\F_{q^{m_1}}$, respectively.  Define the code
$\Gamma_{q^{m_1}}(\cS,\cG)$ as a set of vectors
$V^*=(v^*_1,v^*_2,\dots,v^*_n)\in \F^n_{q^{m_1}}$ such that
\begin{equation}\label{eqn_v*_G1}
\sum_{1\leq j\leq r+\delta-1}\frac{v^*_{i,j}}{x-\gamma_{i,j}}\equiv 0\pmod{G_1(x)}
\end{equation}
and
\begin{equation}\label{eqn_v*_G2}
\sum_{1\leq j\leq n}\frac{v^*_{j}}{x-\gamma_{j}}\equiv 0\pmod{G_2(x)},
\end{equation}
where for $1\leq i\leq \ell$ and $1\leq j\leq r+\delta-1$, we denote
$v^*_{(i-1)(r+\delta-1)+j}$ as $v^*_{i,j}$.

\begin{lemma}\label{lemma_parity_check}
For the code $\Gamma_{q^{m_1}}(\cS,\cG)$, if $G_1(x)G_2(x)$ has $\delta-1+h$ distinct roots
over $\F_{q^{m_1}}$, then
its parity-check matrix can be written as
\begin{equation*}
P^*=
\begin{pmatrix}
P^*_{1,1}&0&\cdots&0&0\\
0&P^*_{1,2}&\cdots&0&0\\
\vdots&\vdots&\ddots&\vdots&\vdots\\
0&0&\cdots&P^*_{1,\ell}&0\\
P^*_{2,1}&P^*_{2,2}&\cdots&P^*_{2,\ell}&P^*_{2,\ell+1}\\
\end{pmatrix},
\end{equation*}
where for $1\leq i\leq \ell$ $P^*_{1,i}=(p^{(i)}_{t,j})$ is a
$(\delta-1)\times (r+\delta-1)$ Cauchy matrix with
$p^{(i)}_{t,j}=\frac{1}{b_{1,t}-\gamma_{i,j}}$ for $1\leq t\leq
\delta-1$ and $1\leq j\leq r+\delta-1$ and
$(P^*_{2,1},P^*_{2,2},\dots, P^*_{2,\ell+1})=(p_{t,j})$ is an $h\times
n$ Cauchy matrix with $p_{t,j}=\frac{1}{b_{2,t}-\gamma_{j}}$ for
$1\leq t\leq h$ and $1\leq j\leq n$. In particular,
$P^*_{2,i}=(p^{(i)}_{t,j})$ is a $(\delta-1)\times (r+\delta-1)$
Cauchy matrix with $p^{(i)}_{t,j}=\frac{1}{b_{2,t}-\gamma_{i,j}}$ for
$1\leq i\leq \ell$, $1\leq t\leq \delta-1$ and $1\leq j\leq
r+\delta-1$.
\end{lemma}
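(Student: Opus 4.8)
The plan is to reduce the two defining congruences of $\Gamma_{q^{m_1}}(\cS,\cG)$ to systems of pointwise evaluation equations, via the classical correspondence from Goppa code theory (see \cite{MS}, Ch.~12.3), which is available here precisely because the Goppa polynomials are squarefree. So the first step is to record the following elementary fact. Let $g(x)\in\F_{q^{m_1}}[x]$ be squarefree with distinct roots $b_1,\dots,b_s$, $s=\deg g$, and let $\gamma_1,\dots,\gamma_N\in\F_{q^{m_1}}$ satisfy $g(\gamma_j)\ne 0$ for all $j$. In the ring $R=\F_{q^{m_1}}[x]/(g(x))$, each $x-\gamma_j$ is a unit since $\gcd(x-\gamma_j,g(x))=1$, so $\sum_j v_j(x-\gamma_j)^{-1}$ is a well-defined element of $R$. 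By the Chinese Remainder Theorem, squarefreeness gives $R\cong\prod_{t=1}^{s}\F_{q^{m_1}}[x]/(x-b_t)\cong\F_{q^{m_1}}^{s}$, and the image of $(x-\gamma_j)^{-1}$ in the $t$-th factor is $(b_t-\gamma_j)^{-1}$, which makes sense because $b_t\ne\gamma_j$. Hence $\sum_j v_j(x-\gamma_j)^{-1}\equiv 0\pmod{g(x)}$ if and only if $\sum_j\frac{v_j}{b_t-\gamma_j}=0$ for every $t\in[s]$, i.e.\ if and only if $(v_1,\dots,v_N)$ lies in the null space of the $s\times N$ matrix with $(t,j)$-entry $\frac{1}{b_t-\gamma_j}$; whenever the $b_t$ are pairwise distinct and the relevant $\gamma_j$ are pairwise distinct, this is by definition a Cauchy matrix.

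Then I would apply this fact twice. By hypothesis $G_1(x)G_2(x)$ has $\delta-1+h$ distinct roots, so both $G_1$ and $G_2$ are squarefree; write $b_{1,1},\dots,b_{1,\delta-1}$ and $b_{2,1},\dots,b_{2,h}$ for their respective roots, as in the excerpt. Applying the fact with $g=G_1$ separately on each repair set $S_i=\{\gamma_{i,1},\dots,\gamma_{i,r+\delta-1}\}$ for $i\in[\ell]$ — using $G_1(\gamma_{i,j})\ne 0$ from Construction~\ref{cons_Goppa} — shows that \eqref{eqn_v*_G1} is equivalent to requiring that $(v^*_{i,1},\dots,v^*_{i,r+\delta-1})$ be annihilated by the $(\delta-1)\times(r+\delta-1)$ Cauchy matrix $P^*_{1,i}$ with entries $\frac{1}{b_{1,t}-\gamma_{i,j}}$. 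Since this condition couples only the coordinates inside $S_i$, assembling it over $i\in[\ell]$ yields the block-diagonal upper part of $P^*$, with a zero block over the coordinates of $S_{\ell+1}$. Applying the fact with $g=G_2$ on all $n$ coordinates $\gamma_1,\dots,\gamma_n$ shows that \eqref{eqn_v*_G2} is equivalent to annihilation by the $h\times n$ Cauchy matrix with entries $\frac{1}{b_{2,t}-\gamma_j}$, whose restriction to the columns indexed by $S_i$ is exactly $P^*_{2,i}$. Stacking the $G_1$-part above the $G_2$-part produces the displayed matrix $P^*$, and since $V^*\in\Gamma_{q^{m_1}}(\cS,\cG)$ if and only if it satisfies both \eqref{eqn_v*_G1} and \eqref{eqn_v*_G2}, the matrix $P^*$ is indeed a parity-check matrix of the code.

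The one step that genuinely uses the distinct-roots hypothesis — and hence the crux — is the Chinese Remainder decomposition: squarefreeness of $G_1(x)G_2(x)$ is exactly what allows ``$\equiv 0\pmod{g(x)}$'' to be replaced by ``vanishes at each root of $g$'', while the condition $G_1(\gamma_i)G_2(\gamma_i)\ne 0$ from Construction~\ref{cons_Goppa} keeps every denominator $b_t-\gamma_j$ nonzero, so that the matrices above are well-defined Cauchy matrices. Everything else is routine bookkeeping of which coordinates each congruence couples together.
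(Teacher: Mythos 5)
Your proposal is correct and takes essentially the same approach as the paper's proof: both reduce the congruence $\sum_j v_j/(x-\gamma_j)\equiv 0\pmod{g(x)}$ to a linear equation per root of $g$, using the hypothesis that $G_1(x)G_2(x)$ splits into $\delta-1+h$ distinct linear factors over $\F_{q^{m_1}}$. The paper states the forward direction as obvious and spells out the converse by passing from congruences modulo each $x-b_{1,i}$ (resp.\ $x-b_{2,j}$) to the congruence modulo the product; you package both directions into a single equivalence via an explicit CRT isomorphism, which is a cleaner way to say the same thing.
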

\begin{proof}
Obviously, if $V^*\in \Gamma_{q^{m_1}}(\cS,\cG)$ is a codeword, then
\eqref{eqn_v*_G1} and \eqref{eqn_v*_G2} imply that $P^*V^*={\bm 0}$.
For any vector $V'\in \F^n_{q^{m_1}}$ with $P^*V'={\bm 0}$, we have
\begin{equation*}
\sum_{1\leq j\leq r+\delta-1}\frac{v'_{i,j}}{x-\gamma_{i,j}}\equiv 0\pmod{x-b_{1,i}}
\end{equation*}
for $1\leq i\leq \delta-1$
and
\begin{equation*}
\sum_{1\leq j\leq n}\frac{v'_{j}}{x-\gamma_{j}}\equiv 0\pmod{x-b_{2,j}},
\end{equation*}
for $1\leq j\leq h$.
Now the fact that $G_1(x)G_2(x)$ has $h+\delta-1$ distinct roots means that
\begin{equation*}
\sum_{1\leq j\leq r+\delta-1}\frac{v'_{i,j}}{x-\gamma_{i,j}}\equiv 0\pmod{G_1(x)=\prod_{1\leq i\leq \delta-1}(x-b_{1,i})}
\end{equation*}
and
\begin{equation*}
\sum_{1\leq j\leq n}\frac{v'_{j}}{x-\gamma_{j}}\equiv 0\pmod{G_2(x)=\prod_{1\leq i\leq h}(x-b_{2,i})},
\end{equation*}
i.e., $V'\in \Gamma_{q^{m_1}}(\cS,\cG)$.
This completes the proof.
\end{proof}

\begin{theorem}\label{theorem_distance}
Assume $G_1(x)G_2(x)$ has $\delta-1+h$ distinct roots over $\F_{q^{m_1}}$.
For any $t+1$-subset $D$ of $[\ell]$, if
\begin{equation}\label{eqn_cond_S_i}
\abs{S_i\cap \parenv{\bigcup_{j\ne i,j\in D} S_j}}\leq \delta-1 \quad \text{for }i\in D
\end{equation}
and
\begin{equation}\label{eqn_cond_ell}
S_{\ell+1}\cap S_i=\varnothing\quad \text{for }1\leq i\leq \ell,
\end{equation}
then the code $\Gamma_{q^{m_1}}(\cS,\cG)$ has minimum Hamming distance
$d\geq \min\{(t+1)\delta,h+\delta\}$.
\end{theorem}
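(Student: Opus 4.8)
The plan is to adapt the proof of Corollary~\ref{cor_LRC_poly} (which in turn rests on Theorem~\ref{theorem_e_pattern}) to the rational‑function language of Goppa codes. Since the code lives over the splitting field, Lemma~\ref{lemma_parity_check} supplies an explicit parity‑check matrix $P^*$ whose blocks $P^*_{1,i}$ are $(\delta-1)\times(r+\delta-1)$ Cauchy matrices and whose bottom strip $(P^*_{2,1},\dots,P^*_{2,\ell+1})$ is an $h\times n$ Cauchy matrix; throughout I will use that every square submatrix of a Cauchy matrix is nonsingular. It suffices to show $\Gamma_{q^{m_1}}(\cS,\cG)$ has no nonzero codeword of Hamming weight below $\min\{(t+1)\delta,h+\delta\}$, and I note that any such weight is simultaneously $<(t+1)\delta$ and $\le h+\delta-1$.

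So let $V\ne0$ be such a codeword with support $E'=\supp(V)$. Because each $P^*_{1,i}$ is Cauchy, $\ker P^*_{1,i}$ is an MDS $[r+\delta-1,r,\delta]$ code, hence $E'$ meets every $S_i$ ($i\in[\ell]$) in either $0$ or at least $\delta$ coordinates; since the $S_i$ are disjoint coordinate blocks and $\abs{E'}<(t+1)\delta$, at most $t$ of them are met, say $S_{i_1},\dots,S_{i_{w'}}$ with $w'\le t$. Put $D_s=E'\cap S_{i_s}$ and $E'_{\ell+1}=E'\cap S_{\ell+1}$; extending $\mathset{i_1,\dots,i_{w'}}$ to a $(t+1)$‑subset, \eqref{eqn_cond_S_i} gives $\abs{S_{i_s}\cap\bigcup_{s'\ne s}S_{i_{s'}}}\le\delta-1$, so the ``private'' evaluation‑point sets $A^*_{i_s}=S_{i_s}\setminus\bigcup_{s'\ne s}S_{i_{s'}}$ satisfy $\abs{A^*_{i_s}}\ge r$ and, by \eqref{eqn_cond_ell}, lie outside $S_{\ell+1}$. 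Encode the local structure by message polynomials: $V|_{S_{i_s}}$ lies in the Cauchy/GRS code, so $\sum_j v_{i_s,j}/(x-\gamma_{i_s,j})=G_1(x)f_{i_s}(x)/g_{i_s}(x)$ with $g_{i_s}(x)=\prod_{\theta\in S_{i_s}}(x-\theta)$, $\deg f_{i_s}\le r-1$, and $f_{i_s}$ vanishing at the $\ge r+\delta-1-\abs{D_s}$ points of $S_{i_s}$ lying outside $E'$; thus $V$ is parametrised by the $f_{i_s}$ and the $v_j$ $(j\in E'_{\ell+1})$, a total of $\sum_s(\abs{D_s}-\delta+1)+\abs{E'_{\ell+1}}=\abs{E'}-w'(\delta-1)\le h$ free scalars. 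Writing $\cU=\bigcup_s S_{i_s}$ and clearing denominators, the global constraint $\sum_j v_j/(x-\gamma_j)\equiv0\pmod{G_2}$ becomes $G_2(x)\mid G_1(x)f_E(x)\prod_{\theta\in E'_{\ell+1}}(x-\theta)+u(x)\prod_{\theta\in\cU}(x-\theta)$, where $f_E=\sum_s f_{i_s}\prod_{\theta\in\cU\setminus S_{i_s}}(x-\theta)$ has $\deg f_E\le\abs{\cU}-\delta$ — the exact counterpart of $\deg f^*_{i_t}\le\abs{A}-\delta$ in Theorem~\ref{theorem_e_pattern} — and $\deg u<\abs{E'_{\ell+1}}$. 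Mimicking that proof, $f_E$ is forced to vanish at every evaluation point of $\cU$ not erased in any hit block, and the effective degree of the displayed numerator is governed by $\abs{\cU\cap E'}+\abs{E'_{\ell+1}}\le\abs{E'}\le h+\delta-1$; combining this with divisibility by $G_2$ (degree $h$) shows the resulting homogeneous system in the $\le h$ free scalars has at least as many independent constraints as unknowns — here $\abs{E'}\le h+\delta-1$ plays the role of \eqref{eqn_e_num} and \eqref{eqn_cond_S_i} that of \eqref{eqn_local_repair}. Since $f_E\equiv0$ together with all $v_j=0$ is a solution, if the system has full column rank it is the only solution, and unwinding the $f_{i_s}$ through the private coordinates (each $A^*_{i_s}$ has $\ge r$ points, enough to pin down the local codeword) yields $V=0$, a contradiction.

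The hard part — precisely as in Theorem~\ref{theorem_e_pattern} — is this full‑column‑rank, partial‑MDS‑type claim for the linear system obtained by folding the local relations into the $h$ global parity checks. Making it go through requires: keeping track that it is the number $\abs{\bigcup_s D_s^{\mathrm{ev}}}$ of \emph{distinct} erased evaluation points, not the total $\sum_s\abs{D_s}$ of erased coordinates, that controls the degree; handling the evaluation points shared between several $S_{i_s}$ (the overlaps bounded by \eqref{eqn_cond_S_i}), which are exactly what push the distance past the trivial bound $d\ge h+1$ coming from the bottom Cauchy strip on its own; and a double appeal to Cauchy nonsingularity — once for the local $[r+\delta-1,r,\delta]$ reconstruction at the private coordinates, and once to invert the folded global system. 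The easy bookkeeping ($k\ge n-\ell(\delta-1)-h$, and that the $h+\delta-1+h$ hypothesis guarantees genuine Cauchy structure) is inherited from Lemmas~\ref{lemma_Gamma_k} and~\ref{lemma_parity_check}.
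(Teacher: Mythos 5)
You set up the right framework: passing to the splitting‑field parity‑check matrix, parametrizing the codeword by the local message polynomials $f_{i_s}$ plus the free symbols in $E'_{\ell+1}$, bounding the number of hit blocks by $t$, counting $\le h$ free scalars, and deriving the divisibility $G_2(x)\mid G_1(x)\prod_{\theta\in E'_{\ell+1}}(x-\theta)f_E(x)+\prod_{\theta\in\cU^{\mathrm{ev}}}(x-\theta)u'(x)$. All of this is a faithful translation of the paper's setup into rational‑function language, and your treatment of the private sets $A^*_{i_s}$ and the observation that $f_{i_s}\equiv 0$ on $A^*_{i_s}$ forces $V|_{S_{i_s}}=0$ is also correct.

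The difficulty is that the proof is not actually finished. You explicitly write that ``the hard part \ldots\ is this full‑column‑rank, partial‑MDS‑type claim,'' and then describe what would be needed without proving it. The analogy with Theorem~\ref{theorem_e_pattern} does not carry the claim automatically. Concretely, the direct degree/interpolation count you sketch fails: with $\Phi(x)=G_1(x)\prod_{\theta\in E'_{\ell+1}}(x-\theta)f_E(x)+\prod_{\theta\in\cU^{\mathrm{ev}}}(x-\theta)u'(x)$ one has $\deg\Phi\leq \abs{\cU^{\mathrm{ev}}}+\abs{E'_{\ell+1}}-1$ and known vanishing at the $h$ roots of $G_2$ plus the $\abs{\cU^{\mathrm{ev}}}-\abs{\bigcup_s D_s}$ unerased points of $\cU^{\mathrm{ev}}$, which forces $\Phi\equiv 0$ only when $\abs{\bigcup_s D_s}+\abs{E'_{\ell+1}}\leq h$. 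But the hypothesis only gives $\abs{\bigcup_s D_s}+\abs{E'_{\ell+1}}\leq\abs{E'}\leq h+\delta-1$, so there is a deficit of up to $\delta-1$ roots, and the finer structure of $f_E$ (as a sum of $f_{i_s}$'s of degree $\leq r-1$, each vanishing on the unerased private points, rather than an arbitrary polynomial of degree $\abs{\cU^{\mathrm{ev}}}-\delta$) must be invoked to recover those $\delta-1$ missing constraints. That recovery is precisely what the paper's Lemma~\ref{lemma_M_dia} (the block column‑reduction that produces the generalized‑Cauchy matrix $M(\cE^*,\Theta_3)$) and Lemma~\ref{lemma_Rank_M} (full rank of $M(\cE,\Theta)$ under the covering conditions coming from \eqref{eqn_cond_S_i} and \eqref{eqn_cond_ell}) accomplish, and it is the technical core of the theorem. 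Until you supply an argument of comparable strength in the polynomial language — not merely a counting of constraints versus unknowns plus an appeal to ``Cauchy nonsingularity'' — the proof has a genuine gap at exactly the point where the paper invests the bulk of its work.
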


However, before proving the theorem, we first prove two technical
lemmas which will be used in the proof.

\begin{lemma}\label{lemma_Rank_M}
Let $\cE=\{E_1,E_2,\dots, E_{\tau}\}$ be a set of subsets of
$\F_{q^{m_1}}$, and let $\Theta=\{\theta_i~:~1\leq i\leq
h_1\}\subseteq \F_{q^{m_1}}\backslash (\bigcup_{1\leq j\leq\tau}E_j)$.
Define an $h_1\times \tau$ matrix, $M(\cE,\Theta)$, whose $(i,j)$ entry is
\begin{equation}\label{eqn_M_E_Theta}
M(\cE,\Theta)_{i,j}=\frac{1}{f_{E_i}(\theta_j)},
\end{equation}
where
\begin{equation*}
f_{E_i}(x)=\prod_{\theta\in E_i}(x-\theta)\quad \text{for }1\leq i\leq \tau.
\end{equation*}
If, for any $t-1$ sets $E_{i_1},E_{i_2},\cdots, E_{i_{t-1}}\in \cE$,
$|\bigcup_{1\leq j\leq t-1}E_{i_j}|< h_1$, and any $E_i\in\cE$ cannot
be covered by $t\leq \tau-1$ other elements of $\cE$, i.e.,
\begin{equation}\label{eqn_cond_E_i}
E_i\not\subseteq \bigcup_{1\leq j\leq t, i_j\ne i}E_{i_j}\quad \text{for any }\{i_j~:~1\leq j\leq t\}\subseteq [\tau]\backslash \{i\},
\end{equation}
then any $t$ columns of $M(\cE,\Theta)$ are linearly independent over $\F_{q^{m_1}}$.
\end{lemma}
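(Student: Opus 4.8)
The plan is to convert the claimed linear independence of columns into the statement that a single polynomial cannot have more distinct roots than its degree allows. Suppose, for contradiction, that some $t$ columns — say those indexed by $E_1,\dots,E_t$ — satisfy a nontrivial relation: there are scalars $a_1,\dots,a_t\in\F_{q^{m_1}}$, not all zero, with $\sum_{l=1}^{t}a_l/f_{E_l}(\theta_i)=0$ for every $i\in[h_1]$. Let $T=\{l\in[t]:a_l\neq 0\}$, so $\varnothing\neq T$ and $|T|\le t$, and put $U=\bigcup_{l\in T}E_l$, $\widetilde F(x)=\prod_{\theta\in U}(x-\theta)$, and $\widetilde g_l(x)=\widetilde F(x)/f_{E_l}(x)=\prod_{\theta\in U\setminus E_l}(x-\theta)$, so that $\deg\widetilde g_l=|U|-|E_l|$ (using $E_l\subseteq U$). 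Since $\Theta$ is disjoint from $\bigcup_{1\le j\le\tau}E_j$, in particular from $U$, we have $\widetilde F(\theta_i)\neq 0$ for all $i$; multiplying the relation through by $\widetilde F(\theta_i)$ gives $\widetilde P(\theta_i)=0$ for all $i\in[h_1]$, where $\widetilde P(x)\triangleq\sum_{l\in T}a_l\widetilde g_l(x)$. Thus $\widetilde P$ has the $h_1$ pairwise distinct roots $\theta_1,\dots,\theta_{h_1}$.

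Next I would show $\deg\widetilde P<h_1$, which forces $\widetilde P\equiv 0$. Choose $l^\ast\in T$ with $|E_{l^\ast}|$ minimal. The subfamily $\{E_l:l\in T\setminus\{l^\ast\}\}$ has at most $t-1$ members, so the hypothesis that any $t-1$ sets of $\cE$ have union of size less than $h_1$ (together with monotonicity of unions) yields $\bigl|\bigcup_{l\in T\setminus\{l^\ast\}}E_l\bigr|\le h_1-1$, whence $|U|\le|E_{l^\ast}|+h_1-1$. Consequently $\deg\widetilde P\le\max_{l\in T}\bigl(|U|-|E_l|\bigr)=|U|-|E_{l^\ast}|\le h_1-1$. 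A polynomial of degree less than $h_1$ with $h_1$ distinct roots is the zero polynomial, so $\widetilde P\equiv 0$.

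It remains to contradict $\widetilde P\equiv 0$ using \eqref{eqn_cond_E_i}. Fix any $l_0\in T$ and enlarge $T\setminus\{l_0\}$ to an index set $J\subseteq[\tau]\setminus\{l_0\}$ with $|J|=t$ (possible since $t\le\tau-1$). Applying \eqref{eqn_cond_E_i} with $i=l_0$ gives $E_{l_0}\not\subseteq\bigcup_{l\in J}E_l$, hence there is $\theta^\ast\in E_{l_0}\setminus\bigcup_{l\in T\setminus\{l_0\}}E_l$. For $l\in T\setminus\{l_0\}$ we have $\theta^\ast\notin E_l$, so $\theta^\ast\in U\setminus E_l$ and $\widetilde g_l(\theta^\ast)=0$; for $l=l_0$, the element $\theta^\ast\in E_{l_0}$ is not among the roots of $\widetilde g_{l_0}=\prod_{\theta\in U\setminus E_{l_0}}(x-\theta)$, so $\widetilde g_{l_0}(\theta^\ast)\neq 0$. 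Therefore $\widetilde P(\theta^\ast)=a_{l_0}\widetilde g_{l_0}(\theta^\ast)\neq 0$, contradicting $\widetilde P\equiv 0$, and the lemma follows.

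The only delicate point is the degree estimate: the naive bound $\deg\widetilde P\le|U|-1$ is useless because $|U|$, being a union of up to $t$ sets, may well be $\ge h_1$; the first hypothesis controls only unions of $t-1$ sets, so one must peel off the smallest member $E_{l^\ast}$ of the support before invoking it. I note that \eqref{eqn_cond_E_i} forces every $E_i\neq\varnothing$ (the empty set lies in every union), so $|E_{l^\ast}|\ge 1$ and no degenerate case arises; indeed, when the support is a singleton $T=\{l_0\}$ one has $U=E_{l_0}$ and $\widetilde P=a_{l_0}$, a nonzero constant vanishing at $h_1\ge 1$ points — the same contradiction in miniature, so the general argument subsumes all cases.
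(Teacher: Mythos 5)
Your proof is correct and takes essentially the same route as the paper: clear denominators against the union of the involved $E_l$'s to obtain a polynomial vanishing at all $h_1$ points of $\Theta$, use the ``unions of $t-1$ sets have size $<h_1$'' hypothesis to bound its degree below $h_1$ (both arguments in effect peel off one set, as you note), conclude it is identically zero, and then invoke \eqref{eqn_cond_E_i} to exhibit an evaluation point in some $E_{l_0}$ that annihilates every term but one, a contradiction. Your restriction to the support $T$ and the explicit choice of $\theta^\ast$ make the bookkeeping slightly cleaner than the paper's (which, e.g., misprints $\theta\in\Theta$ where it means a point of $E_{i_{s_1}}$), but the underlying argument is the same.
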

\begin{IEEEproof}
We assume to the contrary that there exist $t$ columns of
$M(\cE,\Theta)$ that are linearly dependent over $\F_{q^{m_1}}$, which
form a sub-matrix of $M(\cE,\theta)$ given by
\begin{equation*}
M'\triangleq \parenv{
\begin{matrix}
\frac{1}{f_{E_{i_1}}(\theta_1)} &\frac{1}{f_{E_{i_2}}(\theta_1)}&\cdots &\frac{1}{f_{E_{i_{t_1}}}(\theta_1)}\\
\frac{1}{f_{E_{i_1}}(\theta_2)} &\frac{1}{f_{E_{i_2}}(\theta_2)}&\cdots &\frac{1}{f_{E_{i_{t_1}}}(\theta_2)}\\
\vdots &\vdots&  &\vdots\\
\frac{1}{f_{E_{i_1}}(\theta_{h_1})} &\frac{1}{f_{E_{i_2}}(\theta_{h_1})}&\cdots &\frac{1}{f_{E_{i_{t_1}}}(\theta_{h_1})}\\
\end{matrix}},
\end{equation*}
where $\Theta\subseteq \F_{q^{m_1}}\backslash (\bigcup_{1\leq j\leq\tau}E_j)$
means that $M'$ is well defined.
Since $\rank(M')<t$, there exists a function
$f(x)=\sum_{1\leq j\leq t_1}e_{i}\frac{1}{f_{E_{i_j}}(x)}$ where
$\theta_{1},\dots,\theta_{h_1}$ are roots of $f(x)=0$, and where
$(e_1,e_2,\dots,e_{t_1})\ne \bm 0$.
Denote $E=\bigcup_{1\leq j\leq t}E_{i_j}$.
It follows that
\begin{equation*}
f^*(x)\triangleq f_{E}(x)\parenv{\sum_{1\leq j\leq t}\frac{e_j}{f_{E_{i_j}}(x)}}=0
\end{equation*}
with degree at most $\max\{|\bigcup_{1\leq j\ne s_1\leq t}E_{i_j}|~:~1\leq s_1\leq t\}<h_1$ has $h_1$ roots over $\F_{q^{m_1}}$,
which means $f^*(x)=0$.
However, by \eqref{eqn_cond_E_i}, for any given $1\leq s_1\leq t$
there exists a $\theta\in \Theta$ such that
$$f_{E\backslash {E_{i_{s_1}}}}(\theta)\ne 0$$
and
$$f_{E\backslash {E_{i_{s_2}}}}(\theta)= 0\quad \text{for all }1\leq s_2\ne s_1\leq t.$$
Thus, $f_{E\backslash {E_{i_{j}}}}(\theta)$ for
$1\leq j\leq t$ are linearly independent
over $\F_{q^{m_1}}$, which is a contradiction with $(e_1,e_2,\dots,e_{t_1})\ne \bm 0$.
\end{IEEEproof}
\begin{remark}
When $\delta=1$, $M(\cE,\Theta)$ is exactly the well-known Cauchy matrix
and the result in Lemma~\ref{lemma_Rank_M} is just the known property of
Cauchy matrices.
\end{remark}

\begin{lemma}\label{lemma_M_dia}
Let $W=(\alpha_1,\alpha_2,\cdots,\alpha_n)\in \F^n_{q^{m_1}}$ and let
$W^*=\{\alpha_i~:~1\leq i\leq n\}$. Denote $W_i=
\{\alpha_{i,j}\triangleq\alpha_{(i-1)(\tau)+j}\,:1\leq j\leq \tau\,\}$
for $1\leq i\leq \frac{n}{\tau}$, where $\tau$ is an integer factor of
$n$. Let $\delta$ be an integer with $\delta\leq m_1$,
$\Theta_1=\{\theta_{1,i}~:~1\leq i\leq \delta-1\}\subseteq
\F_{q^{m_1}}\backslash W^*$, $\Theta_2=\{\theta_{2,i}~:~1\leq i\leq
h\}\subseteq \F_{q^{m_1}}\backslash (W^*\cup \Theta_1)$, and let $M$
be a matrix satisfying
\begin{equation*}
M=\parenv{
\begin{matrix}
M_{1,1}&0&\cdots&0\\
0&M_{1,2}&\cdots&0\\
\vdots&\vdots&\ddots&\vdots\\
0&0&\cdots&M_{1,\frac{n}{\tau}}\\
M_{2,1}&M_{2,2}&\cdots&M_{2,\frac{n}{\tau}}\\
\end{matrix}},
\end{equation*}
where for $1\leq i\leq \frac{n}{\tau}$ $M_{1,i}=(m^{(i)}_{t,j})$ is a
$(\delta-1)\times \tau$ Cauchy matrix with
$m^{(i)}_{t,j}=\frac{1}{\theta_{1,t}-\alpha_{i,j}}$ for $1\leq t\leq
\delta-1$ and $1\leq j\leq \tau$ and $(M_{2,1},M_{2,2},\dots,
M_{2,\frac{n}{\tau}})=(m_{t,j})$ is an $h\times n$ Cauchy matrix with
$m_{t,j}=\frac{1}{\theta_{2,t}-\alpha_{j}}$ for $1\leq t\leq h$ and
$1\leq j\leq n$.  If $|\Theta_1\cup\Theta_2|=h+\delta-1$ and
$|W_i|=\tau$, then the matrix $M$ can be rewritten as
\begin{equation*}
M=LM^*R=\parenv{
\begin{matrix}
L_1&0&\cdots&0&0\\
0&L_2&\cdots&0&0\\
\vdots&\vdots&\ddots&\vdots&\vdots\\
0&0&\cdots&L_{\frac{n}{\tau}}&0\\
0&0&\cdots&0&I_{h}\\
\end{matrix}}\parenv{
\begin{matrix}
M^*_{1,1}&0&\cdots&0\\
0&M^*_{1,2}&\cdots&0\\
\vdots&\vdots&\ddots&\vdots\\
0&0&\cdots&M^*_{1,\frac{n}{\tau}}\\
M^*_{2,1}&M^*_{2,2}&\cdots&M^*_{2,\frac{n}{\tau}}\\
\end{matrix}}
\parenv{
\begin{matrix}
R_1&0&\cdots&0\\
0&R_2&\cdots&0\\
\vdots&\vdots&\ddots&\vdots\\
0&0&\cdots&R_{\frac{n}{\tau}}\\
\end{matrix}},
\end{equation*}
where, for $1\leq i\leq \frac{n}{\tau}$, $|A_i|\ne 0$, $M^*_{1,i}=(I_{\delta-1},0_{(\delta-1)\times (\tau-\delta+1)})$
and $M^*_{2,i}=(M_i,M(\cE_i,\Theta_2))$ with
\begin{equation*}
\cE_i=\{E_{i,j}=\{\alpha_{i,1},\dots,\alpha_{i,\delta-1},\alpha_{i,j}\}~:~\delta\leq j\leq \tau\}
\end{equation*} and
$M(\cE_i,\Theta_2)$ defined by \eqref{eqn_M_E_Theta} and \eqref{lemma_Rank_M}.
\end{lemma}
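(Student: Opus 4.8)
The plan is to strip off the block structure and then realize the claimed factorization as explicit column operations dictated by the partial‑fraction calculus behind Cauchy matrices. Since the prospective factors $L=\operatorname{diag}(L_1,\dots,L_{n/\tau},I_h)$ and $R=\operatorname{diag}(R_1,\dots,R_{n/\tau})$ are block diagonal, multiplying out $LM^{*}R$ block by block shows that $M=LM^{*}R$ is equivalent to the two identities $L_iM^{*}_{1,i}R_i=M_{1,i}$ and $M^{*}_{2,i}R_i=M_{2,i}$ holding for every $1\le i\le n/\tau$ (the vanishing off‑diagonal blocks and the trailing $I_h$ impose nothing further). So it suffices to fix $i$ and set $Q_i:=R_i^{-1}=(q_1\mid\dots\mid q_\tau)$. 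One checks that, with the choices $L_i:=M_{1,i}(q_1\mid\dots\mid q_{\delta-1})$ and $M_i:=M_{2,i}(q_1\mid\dots\mid q_{\delta-1})$, the identity $L_iM^{*}_{1,i}R_i=M_{1,i}$ becomes $M_{1,i}\bigl(Q_i\operatorname{diag}(I_{\delta-1},0_{\tau-\delta+1})Q_i^{-1}\bigr)=M_{1,i}$, i.e.\ $M_{1,i}$ composed with the projection onto $\operatorname{span}(q_1,\dots,q_{\delta-1})$ along $\operatorname{span}(q_\delta,\dots,q_\tau)$; this holds exactly when $q_\delta,\dots,q_\tau$ lie in $\ker M_{1,i}$ (a space of dimension $\tau-\delta+1$, since $M_{1,i}$ is a $(\delta-1)\times\tau$ Cauchy matrix of full row rank), and then $L_i$ is invertible because $M_{1,i}$ is injective on the complement $\operatorname{span}(q_1,\dots,q_{\delta-1})$ of its kernel. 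The identity $M^{*}_{2,i}R_i=M_{2,i}$ then just requires $M_{2,i}q_j$ to be the $j$‑th column of $M(\cE_i,\Theta_2)$ for $\delta\le j\le\tau$.

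The core of the argument is the explicit construction of $q_\delta,\dots,q_\tau$. I would invoke the standard dictionary identifying a column $q=(q_1,\dots,q_\tau)^{\intercal}$ with the rational function $\phi_q(x)=\sum_{l=1}^{\tau}q_l/(x-\alpha_{i,l})$, under which the coordinates of $M_{1,i}q$ are the values of $\phi_q$ on $\Theta_1$ and those of $M_{2,i}q$ are its values on $\Theta_2$; writing $\phi_q=p/\prod_l(x-\alpha_{i,l})$ with $\deg p\le\tau-1$, membership in $\ker M_{1,i}$ is precisely divisibility of $p$ by $G_1(x):=\prod_{\theta\in\Theta_1}(x-\theta)$. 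For $\delta\le j\le\tau$ I would take $\phi_j(x):=G_1(x)N_j(x)\big/\prod_{l=1}^{\tau}(x-\alpha_{i,l})$ and fix $N_j$ by asking that $\phi_j$ agree on $\Theta_2$ with $1/f_{E_{i,j}}(x)=1\big/\bigl(\prod_{l=1}^{\delta-1}(x-\alpha_{i,l})\cdot(x-\alpha_{i,j})\bigr)$; clearing common factors reduces this to the congruence $G_1(x)N_j(x)\equiv\prod_{\delta\le l\le\tau,\,l\ne j}(x-\alpha_{i,l})\pmod{G_2(x)}$, where $G_2(x):=\prod_{\theta\in\Theta_2}(x-\theta)$. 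Because $\abs{\Theta_1\cup\Theta_2}=h+\delta-1$ forces $\Theta_1\cap\Theta_2=\varnothing$, the polynomials $G_1$ and $G_2$ are coprime, so $G_1$ is invertible modulo $G_2$ and $N_j$ is the reduction of $G_1(x)^{-1}\prod_{l\ne j}(x-\alpha_{i,l})$ modulo $G_2$; the partial‑fraction coefficient vector of $\phi_j$ is the desired $q_j$, which by construction lies in $\ker M_{1,i}$ and satisfies $M_{2,i}q_j=\bigl(1/f_{E_{i,j}}(\theta)\bigr)_{\theta\in\Theta_2}$, the $j$‑th column of $M(\cE_i,\Theta_2)$.

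It then remains to verify that $q_\delta,\dots,q_\tau$ are linearly independent — hence a basis of $\ker M_{1,i}$, extendable to a basis $q_1,\dots,q_\tau$ of $\F_{q^{m_1}}^{\tau}$, so that $R_i:=(q_1\mid\dots\mid q_\tau)^{-1}$ is invertible — after which $R_i,L_i,M_i$ are all defined, the two block identities above hold, and assembling the $L_i,R_i$ into the block‑diagonal $L,R$ yields $M=LM^{*}R$. Linear independence I would obtain from full‑rank properties of Cauchy matrices: a dependence $\sum_j c_jq_j=0$ translates, via the dictionary, into $\sum_j c_jN_j\equiv0$, which (reducing at the roots of $G_2$, where the relevant evaluation matrix is Cauchy and hence of full row rank, plus a short completion step if $h<\tau-\delta+1$) forces every $c_j=0$. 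The step I expect to be the main obstacle is exactly the middle one: exhibiting column operations that \emph{simultaneously} annihilate the local Cauchy block $M_{1,i}$ and convert the global Cauchy block into the prescribed structured matrix $M(\cE_i,\Theta_2)$ — this is precisely what the coprime, Chinese‑Remainder choice of $N_j$ delivers, and the one delicate point is the degree bookkeeping that keeps $\deg N_j$ within $[0,\tau-\delta]$ (which is automatic once the number $h$ of global parity symbols is at most $\tau-\delta+1=r$, as it is in all the constructions of this paper).
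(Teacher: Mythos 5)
Your route to the factorization is genuinely different from the paper's. The paper argues by induction on $\delta$: it asserts a base case for $\delta=2$ and, in the inductive step, peels off one evaluation point $\theta_{1,1}$ and one $\alpha_{i,1}$, folding the resulting diagonal scaling $T^{(u)}_i=\mathrm{diag}\bigl(1/(\theta_{2,t}-\alpha_{i,1})\bigr)_t$ into the definition of $M^*_{2,i}$. You instead reduce directly to the two block identities $L_iM^{*}_{1,i}R_i=M_{1,i}$, $M^{*}_{2,i}R_i=M_{2,i}$, set $Q_i=R_i^{-1}$, and construct the required columns $q_\delta,\dots,q_\tau$ of $Q_i$ by identifying them with rational functions and solving a congruence $G_1N_j\equiv\prod_{\delta\le l\le\tau,\,l\ne j}(x-\alpha_{i,l})\pmod{G_2}$. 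This is cleaner than the paper's proof because it makes explicit the two facts that the induction only implicitly uses, namely (i) the kernel of $M_{1,i}$ is cut out by the divisibility condition $G_1\mid p$, and (ii) the target column of $M(\cE_i,\Theta_2)$ is obtained by a Chinese-Remainder lift.

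However, the degree constraint you flag at the end is not a formality, and "automatic\dots in all the constructions of this paper" is not a safe dismissal. Your construction requires $\deg N_j\le\tau-\delta$, which after the congruence forces $h\le\tau-\delta+1=r$. The lemma as stated does not impose this, and without it the lemma is in fact \emph{false}: take $\tau=2$, $\delta=2$, $h=2$, so $M_{1,1}=\bigl(\tfrac{1}{\theta_1-\alpha_1},\tfrac{1}{\theta_1-\alpha_2}\bigr)$ and $M_{2,1}$ is the $2\times 2$ Cauchy block on $\{\theta_{2,1},\theta_{2,2}\}\times\{\alpha_1,\alpha_2\}$. A vector $q=(q_1,q_2)^\intercal$ in $\ker M_{1,1}$ satisfies, writing $f(x)=q_1(x-\alpha_2)+q_2(x-\alpha_1)$, the relation $f(\theta_1)=0$; the requirement $M_{2,1}q=\bigl(1/\prod_l(\theta_{2,t}-\alpha_l)\bigr)_t$ is equivalent to $f(\theta_{2,1})=f(\theta_{2,2})=1$. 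A linear polynomial cannot satisfy all three conditions with $\theta_{2,1}\ne\theta_{2,2}$ and $\theta_1\notin\Theta_2$, so no such $q$ exists and $M\ne LM^*R$ for any $L,R$ of the prescribed shape. (The paper's own proof asserts the $\delta=2$ base case without exhibiting $L^{(2)}_i,R^{(1)}_i$, so it has the same unstated hypothesis; in the one application, Theorem~\ref{theorem_distance}, the "$\tau$'' is $|E_{i_j}|\le r+\delta-1$ and the "$h$'' is $h_1=h-|E_{\ell+1}|$, and the constraint $h_1\le|E_{i_j}|-\delta+1$ is not obviously guaranteed, e.g.\ in Corollary~\ref{corollary_gamma_inf} where only $h+\delta\le(t+1)\delta$ is imposed.) A further wrinkle at the other end: if $h<\tau-\delta+1$, the $N_j$ you define as the reduction of $G_1^{-1}\prod_{l\ne j}(x-\alpha_{i,l})$ modulo $G_2$ all have degree $<h$, and $\tau-\delta+1>h$ such polynomials must be linearly dependent, so $q_\delta,\dots,q_\tau$ would fail to span $\ker M_{1,i}$. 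Your parenthetical "short completion step'' is exactly where the work lies — one must add multiples $G_2\cdot r_j(x)$ with $\deg r_j\le\tau-\delta-h$ to restore independence while preserving the congruence, and this should be spelled out rather than waved at. In short: the approach is sound and arguably more transparent than the paper's, but the lemma needs an explicit hypothesis of the form $h\le\tau-\delta+1$, and the $h<\tau-\delta+1$ case needs the completion argument to be written out.
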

\begin{IEEEproof}
We prove this lemma by induction on $\delta$.  For the base case we
consider the case $\delta=2$. Note that
$\begin{pmatrix}M_{1,i}\\ M_{2,i}\end{pmatrix}$ is a Cauchy matrix for
$1\leq i\leq \frac{n}{\tau}$.  This, together with the facts
$|\Theta_1\cup\Theta_2|=h+\delta-1$, $|W_i|=\tau$, and $\Theta_1\cup
\Theta_2\subseteq \F_{q^{m_1}}\backslash W^*$, means that
$\begin{pmatrix} M_{1,i}\\ M_{2,i}\end{pmatrix}$ can be rewritten as
\begin{equation*}
\begin{pmatrix}
  M_{1,i}\\ M_{2,i}
\end{pmatrix}
=
\begin{pmatrix}
L^{(2)}_i &0\\
0& I_h\\
\end{pmatrix}
\begin{pmatrix}
  M^{(2)}_{1,i}\\ M^{(2)}_{2,i}
\end{pmatrix}
R^{(1)}_i
\quad \text{for }1\leq i\leq \frac{n}{\tau},
\end{equation*}
where
\begin{equation*}
M^{(2)}_{1,i}=\parenv{
\begin{matrix}
1&0&0&\dots&0\\
\end{matrix}}_{1\times \tau}
\end{equation*}
and
\begin{equation*}
M^{(2)}_{2,i}=\parenv{
\begin{matrix}
\frac{1}{\theta_{2,1}-\alpha_{i,1}}&\frac{1}{(\theta_{2,1}-\alpha_{i,1})(\theta_{2,1}-\alpha_{i,2})}
&\frac{1}{(\theta_{2,1}-\alpha_{i,1})(\theta_{2,1}-\alpha_{i,3})}&\dots
&\frac{1}{(\theta_{2,1}-\alpha_{i,1})(\theta_{2,1}-\alpha_{i,\tau})}\\
\frac{1}{\theta_{2,2}-\alpha_{i,1}}&\frac{1}{(\theta_{2,2}-\alpha_{i,1})(\theta_{2,2}-\alpha_{i,2})}
&\frac{1}{(\theta_{2,2}-\alpha_{i,1})(\theta_{2,2}-\alpha_{i,3})}&\dots
&\frac{1}{(\theta_{2,2}-\alpha_{i,1})(\theta_{2,2}-\alpha_{i,\tau})}\\
\vdots&\vdots&\vdots& &\vdots\\
\frac{1}{\theta_{2,h}-\alpha_{i,1}}&\frac{1}{(\theta_{2,h}-\alpha_{i,1})(\theta_{2,h}-\alpha_{i,2})}
&\frac{1}{(\theta_{2,h}-\alpha_{i,1})(\theta_{2,h}-\alpha_{i,3})}&\dots
&\frac{1}{(\theta_{2,h}-\alpha_{i,1})(\theta_{2,h}-\alpha_{i,\tau})}\\\end{matrix}}
\end{equation*}
and the lemma follows for this case.

For the induction hypothesis we assume that the desired results hold
for $2\leq \delta\leq u$. For the induction step, namely, the case
$\delta=u+1$, similarly, $\Theta_1\cup \Theta_2\subseteq
\F_{q^{m_1}}\backslash W^*$ means that
$\begin{pmatrix}M_{1,i}\\ M_{2,i}\end{pmatrix}$ can be rewritten as
\begin{equation}\label{eqn_M_i}
\begin{pmatrix}
  M_{1,i}\\ M_{2,i}
\end{pmatrix}
=
\begin{pmatrix}
L^{(u+1)}_i &0\\
0& I_h\\
\end{pmatrix}
\begin{pmatrix}
  M^{(u+1)}_{1,i}\\ M^{(u+1)}_{2,i}
\end{pmatrix}
R^{(u)}_i
\quad \text{for }1\leq i\leq \frac{n}{\tau},
\end{equation}
where
\begin{equation*}
M^{(u+1)}_{1,i}=\parenv{
\begin{matrix}
1&0&0&\dots&0\\
0&\frac{1}{\theta_{1,2}-\alpha_{i,2}}&\frac{1}{\theta_{1,2}-\alpha_{i,3}}&\dots&\frac{1}{\theta_{1,2}-\alpha_{i,\tau}}\\
\vdots&\vdots&\vdots& &\vdots\\
0&\frac{1}{\theta_{1,\delta-1}-\alpha_{i,2}}&\frac{1}{\theta_{1,\delta-1}-\alpha_{i,3}}&\dots&\frac{1}{\theta_{1,\delta-1}-\alpha_{i,\tau}}\\
\end{matrix}}=
\parenv{
\begin{matrix}
1&0\\
0&M^{(u)}_{1,i}
\end{matrix}}
\end{equation*}
and
\begin{equation*}
M^{(u+1)}_{2,i}=\parenv{
\begin{matrix}
\frac{1}{\theta_{2,1}-\alpha_{i,1}}&\frac{1}{(\theta_{2,1}-\alpha_{i,1})(\theta_{2,1}-\alpha_{i,2})}
&\frac{1}{(\theta_{2,1}-\alpha_{i,1})(\theta_{2,1}-\alpha_{i,3})}&\dots
&\frac{1}{(\theta_{2,1}-\alpha_{i,1})(\theta_{2,1}-\alpha_{i,\tau})}\\
\frac{1}{\theta_{2,2}-\alpha_{i,1}}&\frac{1}{(\theta_{2,2}-\alpha_{i,1})(\theta_{2,2}-\alpha_{i,2})}
&\frac{1}{(\theta_{2,2}-\alpha_{i,1})(\theta_{2,2}-\alpha_{i,3})}&\dots
&\frac{1}{(\theta_{2,2}-\alpha_{i,1})(\theta_{2,2}-\alpha_{i,\tau})}\\
\vdots&\vdots&\vdots& &\vdots\\
\frac{1}{\theta_{2,h}-\alpha_{i,1}}&\frac{1}{(\theta_{2,h}-\alpha_{i,1})(\theta_{2,h}-\alpha_{i,2})}
&\frac{1}{(\theta_{2,h}-\alpha_{i,1})(\theta_{2,h}-\alpha_{i,3})}&\dots
&\frac{1}{(\theta_{2,h}-\alpha_{i,1})(\theta_{2,h}-\alpha_{i,\tau})}\\\end{matrix}}=
T^{(u)}_i\parenv{
\begin{matrix}
{\bm 1}&M^{(u)}_{2,i}
\end{matrix}}
\end{equation*}
with $T^{(u)}_i=\mathrm{diag}(\frac{1}{\theta_{2,1}-\alpha_{i,1}},\frac{1}{\theta_{2,2}-\alpha_{i,1}},\dots,\frac{1}{\theta_{2,h}-\alpha_{i,1}})$.
By the induction hypothesis,
\begin{equation}\label{eqn_M_u}
\begin{split}
M^{(u)}=&\parenv{
\begin{matrix}
M^{(u)}_{1,1}&0&\cdots&0\\
0&M^{(u)}_{1,2}&\cdots&0\\
\vdots&\vdots&\ddots&\vdots\\
0&0&\cdots&M^{(u)}_{1,\frac{n}{\tau}}\\
M^{(u)}_{2,1}&M^{(u)}_{2,2}&\cdots&M^{(u)}_{2,\frac{n}{\tau}}\\
\end{matrix}}\\
=&\parenv{
\begin{matrix}
L'_1&0&\cdots&0&0\\
0&L'_2&\cdots&0&0\\
\vdots&\vdots&\ddots&\vdots&\vdots\\
0&0&\cdots&L'_{\frac{n}{\tau}}&0\\
0&0&\cdots&0&I_{h}\\
\end{matrix}}\parenv{
\begin{matrix}
M'_{1,1}&0&\cdots&0\\
0&M'_{1,2}&\cdots&0\\
\vdots&\vdots&\ddots&\vdots\\
0&0&\cdots&M'_{1,\frac{n}{\tau}}\\
M'_{2,1}&M'_{2,2}&\cdots&M'_{2,\frac{n}{\tau}}\\
\end{matrix}}
\parenv{
\begin{matrix}
R'_1&0&\cdots&0\\
0&R'_2&\cdots&0\\
\vdots&\vdots&\ddots&\vdots\\
0&0&\cdots&R'_{\frac{n}{\tau}}\\
\end{matrix}}
\end{split}
\end{equation}
where for $1\leq i\leq \frac{n}{\tau}$, $M'_{1,i}=(I_{u},0_{u\times (\tau-\delta+1)})$
and $M'_{2,i}=(M'_i,M(\cE'_i,\Theta_2))$ with
\begin{equation*}
\cE'_i=\{E'_{i,j}=\{\alpha_{i,2},\dots,\alpha_{i,\delta-1},\alpha_{i,j}\}~:~u+1\leq j\leq \tau\}.
\end{equation*}

Combining \eqref{eqn_M_i} and \eqref{eqn_M_u}, we have

\begin{equation*}
\begin{split}
M=&\parenv{
\begin{matrix}
M_{1,1}&0&\cdots&0\\
0&M_{1,2}&\cdots&0\\
\vdots&\vdots&\ddots&\vdots\\
0&0&\cdots&M_{1,\frac{n}{\tau}}\\
M_{2,1}&M_{2,2}&\cdots&M_{2,\frac{n}{\tau}}\\
\end{matrix}}\\
=&\parenv{
\begin{matrix}
L_1&0&\cdots&0&0\\
0&L_2&\cdots&0&0\\
\vdots&\vdots&\ddots&\vdots&\vdots\\
0&0&\cdots&L_{\frac{n}{\tau}}&0\\
0&0&\cdots&0&I_{h}\\
\end{matrix}}\parenv{
\begin{matrix}
M^*_{1,1}&0&\cdots&0\\
0&M^*_{1,2}&\cdots&0\\
\vdots&\vdots&\ddots&\vdots\\
0&0&\cdots&M^*_{1,\frac{n}{\tau}}\\
M^*_{2,1}&M^*_{2,2}&\cdots&M^*_{2,\frac{n}{\tau}}\\
\end{matrix}}
\parenv{
\begin{matrix}
R_1&0&\cdots&0\\
0&R_2&\cdots&0\\
\vdots&\vdots&\ddots&\vdots\\
0&0&\cdots&R_{\frac{n}{\tau}}\\
\end{matrix}}.
\end{split}
\end{equation*}
Here, for $1\leq i\leq \frac{n}{\tau}$, $R_i=R'_{i}R^{(u)}_i$,
$L_i=L^{(u)}_i\parenv{\begin{matrix} 1&0\\ 0&L'_i\\
\end{matrix}}
$, $M^*_{1,i}=\parenv{\begin{matrix}
1&0\\
0&M'_{1,i}\\
\end{matrix}}$
and
$$M^*_{2,i}=T^{(u)}_i\parenv{
{\bm 1},M'_{2,i}
}=\parenv{
T^{(u)}_i{\bm 1},T^{(u)}_iM'_i,T^{(u)}_i M(\cE'_i,\Theta_2)
}=\parenv{
  M_i,M(\cE_i,\Theta_2)}$$
with $M_i=(T^{(u)}_i{\bm 1},T^{(u)}_iM'_i)$ and
\begin{equation*}
\cE_i=\{E_{i,j}=\{\alpha_{i,1},\dots,\alpha_{i,\delta-1},\alpha_{i,j}\}~:~u+1\leq j\leq \tau\}.
\end{equation*}
By induction, this completes the proof.
\end{IEEEproof}

\begin{IEEEproof}[Proof of Theorem~\ref{theorem_distance}]
By Lemma~\ref{lemma_parity_check} the parity-check matrix can be given as
\begin{equation*}
P^*=\parenv{
\begin{matrix}
P^*_{1,1}&0&\cdots&0&0\\
0&P^*_{1,2}&\cdots&0&0\\
\vdots&\vdots&\ddots&\vdots&\vdots\\
0&0&\cdots&P^*_{1,\ell}&0\\
P^*_{2,1}&P^*_{2,2}&\cdots&P^*_{2,\ell}&P^*_{2,\ell+1}\\
\end{matrix}},
\end{equation*}
where for $1\leq i\leq \ell$ $P^*_{1,i}=(p^{(i)}_{u,j})$ is a
$(\delta-1)\times (r+\delta-1)$ Cauchy matrix with
$p^{(i)}_{u,j}=\frac{1}{b_{1,u}-\gamma_{i,j}}$ for $1\leq u\leq
\delta-1$ and $1\leq j\leq r+\delta-1$ and
$(P^*_{2,1},P^*_{2,2},\dots, P^*_{2,\ell+1})=(p_{u,j})$ is an $h\times
n$ Cauchy matrix with $p_{u,j}=\frac{1}{b_{2,u}-\gamma_{j}}$ for
$1\leq u\leq h$ and $1\leq j\leq n$.

We consider the case that there are at most $e\leq\min\{(t+1)\delta-1,h+\delta-1\}$ erasures in
total, i.e., $e=\sum_{1\leq i\leq \ell+1}|E_i|\leq
\min\{t\delta,h+\delta-1\}$.  To bound the Hamming distance we only
need to consider erasure patterns such that $E_i\subseteq S_i$ for
$1\leq i\leq \ell+1$ and $|E_i|\geq \delta$ for $1\leq i\leq \ell$.
Let $P^*(\cE)$ be the sub-matrix formed by the columns corresponding
to $E_i$ $1\leq i\leq \ell+1$, that is the column
$(0,\dots,0,\frac{1}{b_{1,1}-\gamma_{i,j}},\dots,\frac{1}{b_{1,\delta-1}-\gamma_{i,j}},
0\dots,0,\frac{1}{b_{2,1}-\gamma_{i,j}},\dots,\frac{1}{b_{2,h}-\gamma_{i,j}})^\intercal$
is chosen if $\gamma_{i,j}\in E_i\subseteq S_i$. It is easy to check
that $P^*(\cE)$ can be written as
\begin{equation*}
P^*(\cE)=\parenv{
\begin{matrix}
P^{\cE}_{1,i_1}&0&\cdots&0&0\\
0&P^{\cE}_{1,i_2}&\cdots&0&0\\
\vdots&\vdots&\ddots&\vdots&\vdots\\
0&0&\cdots&P^{\cE}_{1,i_{t_1}}&0\\
P^{\cE}_{2,i_1}&P^{\cE}_{2,i_2}&\cdots&P^{\cE}_{2,i_{t_1}}&P^{\cE}_{2,\ell+1}\\
\end{matrix}},
\end{equation*}
by deleting the all zero rows.  For the case $t_1=0$,
$\rank(P^*(\cE))=\rank(P^{\cE}_{2,\ell+1})=|E_{\ell+1}|$ the erasure
pattern can be recovered.  For the case $t_1\geq 1$, the fact that
$(P^*_{2,1},P^*_{2,2},\dots, P^*_{2,\ell+1})=(p_{u,j})$ is an $h\times
n$ Cauchy matrix with $p_{u,j}=\frac{1}{b_{2,u}-\gamma_{j}}$ for
$1\leq u\leq h$ and $1\leq j\leq n$ means that
\begin{equation*}
\begin{split}
\rank(P^*(\cE))=&\rank\parenv{
\begin{matrix}
P^{\cE}_{1,i_1}&0&\cdots&0&0\\
0&P^{\cE}_{1,i_2}&\cdots&0&0\\
\vdots&\vdots&\ddots&\vdots&\vdots\\
0&0&\cdots&P^{\cE}_{1,i_{t_1}}&0\\
P^{\cE}_{2,i_1}&P^{\cE}_{2,i_2}&\cdots&P^{\cE}_{2,i_{t_1}}&P^{\cE}_{2,\ell+1}\\
\end{matrix}}\\
\geq &\rank
\parenv{\begin{matrix}
P^{\cE}_{1,i_1}&0&\cdots&0&0\\
0&P^{\cE}_{1,i_2}&\cdots&0&0\\
\vdots&\vdots&\ddots&\vdots&\vdots\\
0&0&\cdots&P^{\cE}_{1,i_{t_1}}&0\\
P^{\cE,h_1}_{2,i_1}&P^{\cE,h_1}_{2,i_2}&\cdots&P^{\cE,h_1}_{2,i_{t_1}}&0\\
0&0&\cdots&0&I_{|E_{\ell+1}|}\\
\end{matrix}},
\end{split}
\end{equation*}
where $h_1=h-|E_{\ell+1}|$ and $P^{\cE,h_1}_{2,i_j}$ is the sub-matrix formed by
the first $h_1$ rows of $P^{\cE}_{2,i_j}$ for $1\leq j\leq t_1$.

Recall that $e=\sum_{E\in \cE}|E|\leq \min\{(t+1)\delta-1,h+\delta-1\}$, which means
\begin{equation}\label{eqn_value_t_1}
t_1\leq
\begin{cases}
\floorenv{\frac{(t+1)\delta-1-|E_{\ell+1}|}{\delta}}\leq t, &\text{if }h_1+\delta-1\geq (t+1)\delta-1-|E_{\ell+1}|,\\
\floorenv{\frac{h_1+\delta-1}{\delta}}<t, &\text{if }h_1+\delta-1<(t+1)\delta-1-|E_{\ell+1}|.\\
\end{cases}
\end{equation}
According to \eqref{eqn_cond_S_i}, for $i\in \{i_j~:~1\leq j\leq t_1\}$
$$\left|E_i\cap \left(\bigcup_{\substack{1\leq j\leq t_1\\ i_j\ne i}} E_{i_j}\right)\right|\leq \left|S_i\cap \left(\bigcup_{\substack{1\leq j\leq t_1\\ i_j\ne i}} S_{i_j}\right)\right|\leq \delta-1,$$
which means that the elements of each $E_{i}$ may be indexed $E_{i}=\{\alpha_{i,u} ~:~ 1\leq u\leq \tau_i\}$ such that
\begin{equation*}
\{\alpha_{i,t} ~:~ \delta\leq t\leq \tau_i\}\cap E_{i_j}=\varnothing\text{ for }1\leq j\leq t_1,\,\,i_j\ne i.
\end{equation*}
For $1\leq j\leq t_1$, let $\cE_{i_j}=\{\{\alpha_{i_j,1},\dots,\alpha_{i_j,\delta-1},\alpha_{i_j,u}\}~:~\delta\leq u\leq \tau_{i_j}\}$ and
$\cE^*=\bigcup_{1\leq j\leq t_1}\cE_{i_j}$.
By Lemma~\ref{lemma_M_dia},
$$\rank(P^*(\cE))\geq t_1(\delta-1)+|E_{\ell+1}|+\rank(M(\cE^*,\Theta_3)),$$
where $\Theta_3=\{\gamma_{2,i}~:~1\leq i\leq h_1\}.$
Thus, by Lemma~\ref{lemma_Rank_M}, \eqref{eqn_cond_S_i}, \eqref{eqn_cond_ell}, and \eqref{eqn_value_t_1},
$M(\cE^*,\Theta_3)$ has full rank,
i.e., $\rank(P^*(\cE))\geq |E_{\ell+1}|+\sum_{1\leq i\leq t_1}|E_i|.$
This is to say, the erasure pattern can be recovered, which means
$d\geq \min\{(t+1)\delta,h+\delta\}$.
\end{IEEEproof}

\begin{corollary}\label{corollary_gamma_inf}
Assume $G_1(x)G_2(x)$ has $\delta-1+h$ distinct roots over $\F_{q^{m_1}}$.
Let $\cS$ be a set system of $\F_{q^{m}}$ such that for any $t+1$-subset $D$ of $[\ell]$
\[
\abs{S_i\cap \parenv{\bigcup_{j\ne i,j\in D} S_j}}\leq \delta-1 \quad \text{for }i\in D
\]
and
\[
S_{\ell+1}\cap S_i=\varnothing\quad \text{for }1\leq i\leq \ell.
\]
If $h+\delta\leq (t+1)\delta$ and $S_{\ell+1}\ne \varnothing$, then the
code $\Gamma_{q^m}(\cS,\cG)$ is an optimal $[n,k,d=h+\delta]_{q^m}$ linear code
with $(r,\delta)_i$-locality.
\end{corollary}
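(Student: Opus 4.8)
The plan is to combine three facts: Theorem~\ref{theorem_distance} for a lower bound on the minimum distance of the code, Lemma~\ref{lemma_Gamma_k} for its dimension and its locality, and Lemma~\ref{lemma_bound_i} for a matching upper bound on the distance, which simultaneously certifies optimality.

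First I would transfer the conclusion of Theorem~\ref{theorem_distance} from the auxiliary code $\Gamma_{q^{m_1}}(\cS,\cG)$ over the splitting field to the code $\Gamma_{q^m}(\cS,\cG)$ itself. Since $\F_{q^{m_1}}$ is the splitting field of $G_1(x)G_2(x)\in\F_{q^m}[x]$ we have $\F_{q^m}\subseteq\F_{q^{m_1}}$, and, after clearing denominators, each defining relation of $\Gamma_{q^m}(\cS,\cG)$ is a polynomial congruence with coefficients in $\F_{q^m}$; hence $\Gamma_{q^m}(\cS,\cG)=\Gamma_{q^{m_1}}(\cS,\cG)\cap\F_{q^m}^n$, so any nonzero codeword of $\Gamma_{q^m}(\cS,\cG)$ is a nonzero codeword of $\Gamma_{q^{m_1}}(\cS,\cG)$ and the minimum distance does not decrease. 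The hypotheses \eqref{eqn_cond_S_i} and \eqref{eqn_cond_ell} here are exactly those of Theorem~\ref{theorem_distance}, and $G_1(x)G_2(x)$ has $\delta-1+h$ distinct roots over $\F_{q^{m_1}}$ by assumption, so Theorem~\ref{theorem_distance} yields $d\geq\min\{(t+1)\delta,h+\delta\}$, which equals $h+\delta$ because $h+\delta\leq(t+1)\delta$.

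Next I would use Lemma~\ref{lemma_Gamma_k}, which gives that $\Gamma_{q^m}(\cS,\cG)$ is an $[n,k]_{q^m}$ code with $k\geq n-\ell(\delta-1)-h$ in which each of the $\ell(r+\delta-1)$ symbols $v_{i,j}$ ($1\leq i\leq\ell$, $1\leq j\leq r+\delta-1$) has $(r,\delta)$-locality. Since $d\geq h+\delta$ exceeds the number of coordinates outside those blocks, no nonzero codeword is supported on the latter alone, so the projection of $\Gamma_{q^m}(\cS,\cG)$ onto the $\ell(r+\delta-1)$ locality-bearing coordinates is injective; thus an information set can be found among them and $\Gamma_{q^m}(\cS,\cG)$ has $(r,\delta)_i$-locality. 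Moreover each block of $r+\delta-1$ locality-bearing coordinates carries a punctured code of minimum distance $\geq\delta$, hence of rank $\leq r$, which forces $k\leq\ell r$; combining this with $k\geq n-\ell(\delta-1)-h$, with $S_{\ell+1}\neq\varnothing$, and with the parameter relations of Construction~\ref{cons_Goppa}, one obtains $k=n-\ell(\delta-1)-h$ and $\ceilenv{k/r}=\ell$. Lemma~\ref{lemma_bound_i} then gives
\[
d\leq n-k+1-\parenv{\ceilenv{\tfrac{k}{r}}-1}(\delta-1)=\ell(\delta-1)+h+1-(\ell-1)(\delta-1)=h+\delta ,
\]
so $d=h+\delta$ meets this bound with equality and $\Gamma_{q^m}(\cS,\cG)$ is optimal.

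The main obstacle is the exact determination of $k$ (equivalently, the full row rank $\ell(\delta-1)+h$ of the block-structured parity-check matrix $P$ of Lemma~\ref{lemma_Gamma_k}) and of $\ceilenv{k/r}$, and it is precisely here that the assumptions that $G_1(x)G_2(x)$ has $\delta-1+h$ distinct roots and that $S_{\ell+1}\neq\varnothing$ are needed: the first $\ell(\delta-1)$ rows of $P$ form $\ell$ full-rank Cauchy blocks with pairwise disjoint supports, while the remaining $h$ (generalized-Vandermonde/Cauchy) rows are made independent of the local ones---and of one another---by the distinctness of the roots together with the coordinates supplied by the nonempty $S_{\ell+1}$ outside every $S_i$. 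The accompanying counting that guarantees $(\ell-1)r<k\leq\ell r$ (so that $\ceilenv{k/r}=\ell$) is the only other point requiring care; the rest is routine.
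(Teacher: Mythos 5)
Your proposal is correct and follows essentially the same route as the paper: deduce $d\geq h+\delta$ from Theorem~\ref{theorem_distance} via the containment $\Gamma_{q^m}(\cS,\cG)\subseteq\Gamma_{q^{m_1}}(\cS,\cG)$ and the hypothesis $h+\delta\leq(t+1)\delta$, invoke Lemma~\ref{lemma_Gamma_k} for the dimension lower bound $k\geq n-\ell(\delta-1)-h=\ell r$ and for the $(r,\delta)$-locality of the first $\ell(r+\delta-1)$ symbols, and then close the gap with the Singleton-type bound of Lemma~\ref{lemma_bound_i}. The only organizational difference is that you try to pin down $k=n-\ell(\delta-1)-h$ and $\ceilenv{k/r}=\ell$ before applying the bound (via the punctured-code rank argument and a sketched full-row-rank claim about $P$), whereas the paper simply feeds the inequality $k\geq k_1=\ell r$ (hence $\ceilenv{k/r}\geq\ceilenv{k_1/r}=\ell$) into the bound and recovers $k=k_1$ a posteriori from the forced equality chain; the paper's version is slightly cleaner since it sidesteps the need to establish the rank of $P$ directly. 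Your added observation that $d\geq h+\delta>|S_{\ell+1}|$ makes the projection onto the locality-bearing coordinates injective is a useful, more explicit justification of why an information set with $(r,\delta)$-locality exists, a point the paper states more tersely.
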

\begin{proof}
By Theorem~\ref{theorem_distance}, the facts
$\Gamma_{q^m}(\cS,\cG)\subseteq \Gamma_{q^{m_1}}(\cS,\cG)$ and
$h+\delta\leq (t+1)\delta$ show that $\Gamma_{q^m}(\cS,\cG)$ has minimum
Hamming distance at least $h+\delta$. Thus, by
Lemma~\ref{lemma_Gamma_k}, $\Gamma_{q^m}(\cS,\cG)$ is an $[n,k,d\geq
  h+\delta]_{q^m}$ with $k\geq n-\ell(\delta-1)-h$ and those symbols
with $(r,\delta)$-locality have rank at least
$k_1=n-\ell(\delta-1)-h=\ell r$.  By
Lemma~\ref{lemma_bound_i},
$$d\leq n-k+1-\parenv{\ceilenv{\frac{k_1}{r}}-1}(\delta-1)\leq
n-k+1-(\ell-1)(\delta-1)\leq h+\delta,$$ which together with the fact
$d\geq h+\delta$ show that $d=h+\delta$ and $k=k_1$.  This is also to
say that $\Gamma_{q^m}(\cS,\cG)$ is an optimal linear code with
$(r,\delta)_i$-locality with respect to the bound in
Lemma~\ref{lemma_bound_i}.

\end{proof}

For the case $S_{\ell+1}=\varnothing$, the following corollary follows
directly from Theorem~\ref{theorem_distance} and
Lemma~\ref{lemma_Gamma_k}.

\begin{corollary}
Let $n=\ell(r+\delta-1)$ and $0<h\leq r$.  Assume $G_1(x)G_2(x)$ has
$\delta-1+h$ distinct roots over $\F_{q^{m_1}}$.  Let $\cS$ be a set
system of $\F_{q^{m}}$ such that for any $t+1$-subset $D$ of $[\ell]$
\begin{equation*}
\abs{S_i\cap \parenv{\bigcup_{j\ne i,j\in D} S_j}}\leq \delta-1 \quad \text{for }i\in D
\end{equation*}
If $h+\delta\leq (t+1)\delta$ and $S_{\ell+1}=\varnothing$, then the
code $\Gamma_{q^m}(\cS,\cG)$ is an $[n=\ell(r+\delta-1),k,d\geq
  h+\delta]_{q^m}$ code with $(r,\delta)_a$-locality, where
\begin{equation*}
n-\ell(\delta-1)=\ell r\geq k\geq \ell r-h.
\end{equation*}
Furthermore, if $k=\ell r-h$, then $\Gamma_{q^m}(\cS,\cG)$ is an
$[n,k,d=h+\delta]$ optimal locally repairable code with
$(r,\delta)_a$-locality.
\end{corollary}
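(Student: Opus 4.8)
The plan is to read off the statement from three already-available facts: Lemma~\ref{lemma_Gamma_k} gives the locality and the lower bound on $k$, Theorem~\ref{theorem_distance} (applied after passing to the splitting field) gives the distance, and Lemma~\ref{lemma_bound_i} gives optimality. First I would fix the structural parameters. Since $S_{\ell+1}=\varnothing$ and $\bigcup_{1\le i\le\ell}S_i=\{\gamma_1,\dots,\gamma_n\}$ with each $|S_i|=r+\delta-1$, the repair sets $S_1,\dots,S_\ell$ partition $[n]$, so $n=\ell(r+\delta-1)$ and $n-\ell(\delta-1)=\ell r$. Every code symbol is a $v_{i,j}$ with $1\le i\le\ell$, so by Lemma~\ref{lemma_Gamma_k} every symbol has $(r,\delta)$-locality, i.e. $\Gamma_{q^m}(\cS,\cG)$ has $(r,\delta)_a$-locality, and moreover $k\ge n-\ell(\delta-1)-h=\ell r-h$. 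For the upper bound $k\le\ell r$: projecting a codeword onto the coordinates of $S_i$ yields a vector annihilated by $P_{1,i}$, and in the proof of Lemma~\ref{lemma_Gamma_k} this $P_{1,i}$ is shown to be a parity-check matrix of a length-$(r+\delta-1)$ code of minimum distance at least $\delta$, hence of dimension at most $r$; since the $S_i$ partition $[n]$, the code embeds into the product of its $\ell$ projections, so $k\le\ell r$.

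Next, the distance. The key point is the subfield-subcode inclusion: since $G_1G_2\in\F_{q^m}[x]$, its splitting field $\F_{q^{m_1}}$ contains $\F_{q^m}$, and a vector over $\F_{q^m}$ satisfying the defining congruences of Construction~\ref{cons_Goppa} still satisfies them over $\F_{q^{m_1}}$, so $\Gamma_{q^m}(\cS,\cG)\subseteq\Gamma_{q^{m_1}}(\cS,\cG)$ and hence $d(\Gamma_{q^m}(\cS,\cG))\ge d(\Gamma_{q^{m_1}}(\cS,\cG))$. Now apply Theorem~\ref{theorem_distance} to $\Gamma_{q^{m_1}}(\cS,\cG)$: hypothesis \eqref{eqn_cond_ell} holds vacuously because $S_{\ell+1}=\varnothing$, and \eqref{eqn_cond_S_i} is assumed, so $d(\Gamma_{q^{m_1}}(\cS,\cG))\ge\min\{(t+1)\delta,h+\delta\}=h+\delta$, where the last equality uses $h+\delta\le(t+1)\delta$. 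This already gives the claimed $[n=\ell(r+\delta-1),k,d\ge h+\delta]_{q^m}$ code with $\ell r-h\le k\le\ell r$.

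Finally, optimality when $k=\ell r-h$. Since $(r,\delta)_a$-locality implies $(r,\delta)_i$-locality, Lemma~\ref{lemma_bound_i} applies; with $0<h<r$ we have $\lceil k/r\rceil=\ell$, so
\[
d\le n-k+1-(\ell-1)(\delta-1)=\ell(\delta-1)+h+1-(\ell-1)(\delta-1)=h+\delta .
\]
Together with $d\ge h+\delta$ this forces $d=h+\delta$, meeting the bound of Lemma~\ref{lemma_bound_i} with equality, so $\Gamma_{q^m}(\cS,\cG)$ is an optimal locally repairable code with $(r,\delta)_a$-locality. I expect no substantial obstacle here; the only delicate points are bookkeeping: one must check that the proof of Theorem~\ref{theorem_distance} specializes correctly to $S_{\ell+1}=\varnothing$ (then $E_{\ell+1}=\varnothing$, $|E_{\ell+1}|=0$, and the block $P^*_{2,\ell+1}$ is simply absent, which only shortens the rank estimate), and that at the boundary value $h=r$ one has $\lceil k/r\rceil=\ell-1$, so the optimality conclusion should be stated for $h<r$ (or one must verify $d=h+\delta$ directly in that case).
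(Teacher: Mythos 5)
Your proposal follows exactly the route the paper intends (the paper only says the corollary ``follows directly from Theorem~\ref{theorem_distance} and Lemma~\ref{lemma_Gamma_k}'' without spelling out the details), and you have filled in those details correctly: the all-symbol locality and $k\ge \ell r-h$ come from Lemma~\ref{lemma_Gamma_k}, the upper bound $k\le\ell r$ from the fact that each $P_{1,i}$ has full row rank $\delta-1$, the distance from the subfield-subcode inclusion $\Gamma_{q^m}(\cS,\cG)\subseteq\Gamma_{q^{m_1}}(\cS,\cG)$ together with Theorem~\ref{theorem_distance} (with \eqref{eqn_cond_ell} vacuous), and optimality from Lemma~\ref{lemma_bound_i}. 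Your observation about the boundary $h=r$ is a genuine catch that the paper glosses over: when $h=r$ one has $k=(\ell-1)r$ so $\ceilenv{k/r}=\ell-1$, and the Singleton-type bound reads $d\le h+2\delta-1>h+\delta$ for $\delta\ge2$, so the construction's guarantee $d\ge h+\delta$ does \emph{not} meet the bound. Note, though, that your suggested remedy (``verify $d=h+\delta$ directly'') would not rescue optimality in that case — establishing $d=h+\delta$ would in fact confirm the code is \emph{sub}optimal against Lemma~\ref{lemma_bound_i}; the statement should either be restricted to $0<h<r$, or one would have to prove the stronger bound $d\ge h+2\delta-1$ for $h=r$, which Theorem~\ref{theorem_distance} does not provide.
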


\begin{remark}
For the case $k=\ell r-h$ and $h\leq r$, i.e., $S_{\ell+1}=\emptyset$, the codes
generated by Construction \ref{cons_Goppa} share similar parameters
with those constructed in \cite{CMST_SL}. However, Construction
\ref{cons_Goppa} may also work for the case of
$S_{\ell+1}\ne\emptyset$ in which we may construct optimal locally
repairable codes with new parameters as shown in Corollary
\ref{corollary_gamma_inf}.
\end{remark}

\section{Conclusion}\label{sec-conclusion}
In this paper, we first introduced a construction of locally
repairable codes with $(r,\delta)_i$-locality. To analyze the
performance of our construction, an upper bound was derived for the
length of optimal locally repairable codes with
$(r,\delta)_i$-locality. Our main goal, with this bound, is to find a
connection between the length of the code and the field size over
which the code is constructed. Using combinatorial structures
(packings in general, and Steiner systems in particular) we arrive at
the conclusion that, in some cases, the optimal locally-repairable
codes we constructed have order-optimal length, which is super-linear
in the field size. We also suggested another construction for optimal
locally repairable codes, this time, taking inspiration from Goppa
codes. The construction share a similarity in the combinatorial
structures they require. Finally, we defined generalized sector-disk
codes. We showed that the locally repairable codes of our
constructions are capable of yielding GSD codes, and compared their
parameters with sector-disk (SD) codes, and maximally recoverable (MR)
codes.

In general, it seems that constructions of locally repairable codes
have focused mainly on $(r,\delta)_a$-locality, perhaps due to their
symmetry. We believe our constructions and bound show that codes with
$(r,\delta)_i$-locality are also of theoretical and applicative
interest. Several open questions remain, including finding SD/MR/GSD
codes for all possible parameters, and finding more codes that are
capable of correcting special erasure patterns beyond what is
guaranteed due to their Hamming distance.

\appendices

\section{Proof of Theorem~\ref{theorem_bound_delta>2}}
\label{app:proof1}

\begin{lemma}\label{lemma_find_V_1}
  Let $\cC$ be an $[n,k]_q$ linear code with $(r,\delta)_i$-locality
  and $r|k$. Let $\cA$ be the set of all the repair sets of
  information symbols, where we highlight that there may exist some
  information symbols that share the same repair set. For any $1\leq
  j\leq \frac{k}{r}$, if there exists a $j$-subset $\cV\subseteq\cA$
  and $\Delta>0$ is an integer such that for any $A\in \cV$
  \begin{equation}\label{eqn_cond_full_rank}
  \abs{A\cap\parenv{\bigcup_{A'\in \cV\setminus \{A\}}A'}}\leq |A|-\delta+1
  \end{equation}
  and
  \begin{equation*}
    |\cV|(r+\delta-1)-\left|\bigcup_{A\in \cV}A\right|\geq \Delta>0,
  \end{equation*}
 then there exists a set $S\subseteq [n]$ with $\rank(S)=k-1$ and
  \begin{equation*}
    |S|\geq
    k-1+\frac{k}{r}(\delta-1).
  \end{equation*}
\end{lemma}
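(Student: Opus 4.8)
The plan is the familiar ``grow a coordinate set by adjoining repair sets while tracking the gap between its size and its rank'' argument, but seeded with the prescribed collection $\cV$; it is this seed that buys the extra $\delta-1$ over the naive Singleton-type estimate. For $T\subseteq[n]$ write $\phi(T):=|T|-\rank(T)$. I will construct $S$ with $\rank(S)=k-1$ and $\phi(S)\ge\frac{k}{r}(\delta-1)$, so that $|S|=(k-1)+\phi(S)\ge k-1+\frac{k}{r}(\delta-1)$.

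\emph{The seed.} Put $S_0:=\bigcup_{A\in\cV}A$ and $j:=|\cV|$. By \eqref{eqn_cond_full_rank}, each $A\in\cV$ contains at least $\delta-1$ points lying in no other member of $\cV$; choosing $\delta-1$ such private points from each member yields $j(\delta-1)$ pairwise disjoint points of $S_0$ whose deletion leaves a set $Q$ with $\spn(Q)=\spn(S_0)$ — indeed, deleting at most $\delta-1$ coordinates from any $\cC|_A$ (which has minimum distance $\ge\delta$) does not decrease $\rank(A)$. Hence $\rank(S_0)=\rank(Q)\le|Q|=|S_0|-j(\delta-1)$, i.e.\ $\phi(S_0)\ge j(\delta-1)$. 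Combining this with the second hypothesis $|S_0|\le j(r+\delta-1)-\Delta$ gives $\rank(S_0)\le jr-\Delta\le k-\Delta\le k-1$.

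\emph{Adjoining repair sets.} Next I would adjoin $\frac{k}{r}-j$ further repair sets of information symbols, greedily. The enabling observations are: (i) as long as the current set $S$ has $\rank(S)<k$, some information column lies outside $\spn(S)$, so its repair set $A$ has $\rho_A:=\rank(S\cup A)-\rank(S)\ge1$, and moreover $\rho_A\le\rank(A)\le|A|-\delta+1\le r$; and (ii) whenever $\rho_A\ge1$ one has $|A\setminus S|\ge\rho_A+\delta-1$, so adjoining $A$ raises $\phi$ by at least $\delta-1$. Point (ii) follows from $\rho_A\le\rank(A)-\rank(A\cap S)$ (the rank increment contributed by $A\setminus S$ cannot grow when the base set grows from $A\cap S$ to $S$) together with $\rank(A)-\rank(A\cap S)\le\max\{0,|A\setminus S|-\delta+1\}$ (again because $\cC|_A$ has minimum distance $\ge\delta$). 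Since each of the $\frac{k}{r}-j$ adjunctions raises the rank by at most $r$, after all of them $\rank(S)\le(jr-\Delta)+(\frac{k}{r}-j)r=k-\Delta\le k-1$; in particular every intermediate stage has rank $<k$, so the greedy choice is always available, and $\phi(S)\ge j(\delta-1)+(\frac{k}{r}-j)(\delta-1)=\frac{k}{r}(\delta-1)$.

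\emph{Finishing.} If $\rank(S)<k-1$, I would repeatedly adjoin a single column outside the current span; each such step increases $|S|$ and $\rank(S)$ by $1$, leaving $\phi(S)$ unchanged, and I stop once $\rank(S)=k-1$. This yields the desired $S$. I expect the crux to be observation (ii): one must guarantee that the ``locality surplus'' of $\delta-1$ per adjoined repair set survives no matter how that repair set overlaps the coordinates already accumulated; the rest is bookkeeping, with the hypothesis $\Delta\ge1$ serving precisely to keep the rank from reaching $k$ before all $\frac{k}{r}$ repair sets have been incorporated.
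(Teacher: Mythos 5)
Your proof is correct and follows essentially the same route as the paper's: seed with $\cV$, harvest $\delta-1$ private coordinates from each member to establish the initial $j(\delta-1)$ gap between size and rank, then greedily adjoin repair sets with strictly increasing rank and show each adjunction widens the gap by at least $\delta-1$. The only cosmetic difference is that you package the per-step gain via submodularity $\rho_A\le\rank(A)-\rank(A\cap S)$ together with the Singleton-type bound on the kernel of $\cC|_A\to\cC|_{A\cap S}$, while the paper explicitly deletes a $(\delta-1)$-subset $A^*$ from the newly adjoined repair set and uses $\rank(B\cup C)\le\rank(B)+|C\setminus B|$; both give the same $\delta-1$ surplus per step.
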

\begin{IEEEproof}
Let $\cV=\{A_{i_1},A_{i_2},\dots,A_{i_{j}}\}$ and
\begin{equation}\label{eqn_A*}
A^*_{i_t}\subseteq A_{i_t}\setminus \parenv{\bigcup_{A'\in \cV\setminus \{A_{i_t}\}}A'}
\end{equation}
with $|A_{i_t}^*|=\delta-1$ for $1\leq t\leq j$, which is possible in
light of \eqref{eqn_cond_full_rank}. Define a
$(|A_{i_t}|-\delta+1)$-subset $A'_{i_t}\triangleq A_{i_t}\setminus
A^*_{i_t}$ for $1\leq t\leq j$.  By definition~\ref{def_locality}, we
have $\rank(A_{i_t}')=\rank(A_{i_t})$ for $1\leq t\leq j$.
Note that \eqref{eqn_A*} implies that $A^*_{i_t}$ for $1\leq t\leq j$
are pairwise disjoint, which also means that
\begin{equation}\label{eqn_rank_j}
\begin{split}
\rank\parenv{\bigcup_{1\leq t\leq j}A_{i_t}}
=&\rank\parenv{\bigcup_{1\leq t\leq j}(A_{i_t}\setminus A^*_{i_t})}\\
\leq &\abs{\bigcup_{1\leq t\leq j}A_{i_t}\setminus A^*_{i_t}}\\
=&\abs{\bigcup_{1\leq t\leq j}A_{i_t}}-\abs{\bigcup_{1\leq t\leq j}A^*_{i_t}}\\
=&\abs{\bigcup_{1\leq t\leq j}A_{i_t}}-j(\delta-1),
\end{split}
\end{equation}
where the second equality holds by \eqref{eqn_A*}.
This is to
say that
\begin{equation*}
\begin{split}
jr-\rank\parenv{\bigcup_{1\leq t\leq j}A_{i_t}}\geq &j(r+\delta-1)-\abs{\bigcup_{1\leq t\leq j}A_{i_t}}=\Delta>0,
\end{split}
\end{equation*}
i.e., $\rank\parenv{\bigcup_{1\leq t\leq j}A_{i_t}}\leq jr-1$.

 For the case
$j<\frac{k}{r}$, the fact that $\cC$ has $(r,\delta)_i$-locality,
i.e., $\rank\parenv{\bigcup_{A\in \cA}A}=k$ means that there exists an
$A_{i_{j+1}}$ such that $\rank\parenv{\bigcup_{1\leq t\leq
    j+1}A_{i_t}}>\rank\parenv{\bigcup_{1\leq t\leq j+1}A_{i_t}}$.
This is to say $\left|A_{j+1}\cap \parenv{\bigcup_{1\leq t\leq
    j+1}A_{i_t}}\right|\leq |A_{j+1}|-\delta+1$.  Let
$A^*_{i_{j+1}}\subseteq A_{i_{j+1}}\setminus \parenv{\bigcup_{1\leq
    t\leq j+1}A_{i_t}}$ with $|A^*_{i_{j+1}}|=\delta-1$ and
$A'_{i_{j+1}}=A_{i_{j+1}}\setminus A^*_{i_{j+1}}$. Note that
$A_{i_{j+1}}$ is a repair set of $\cC$. Thus,
$\rank(A'_{i_{j+1}})=\rank(A_{i_{j+1}})$ by Definition \ref{def_locality} and
\begin{equation*}
\begin{split}
\rank\parenv{\bigcup_{1\leq t\leq j+1}A_{i_t}}=&\rank\parenv{A'_{i_{j+1}}\cup\parenv{\bigcup_{1\leq t\leq j}A_{i_t}}}\\
\leq& \rank\parenv{\bigcup_{1\leq t\leq j}A_{i_t}}+\abs{A'_{i_{j+1}}\setminus\parenv{\bigcup_{1\leq t\leq j}A_{i_t}}}\\
\leq& \rank\parenv{\bigcup_{1\leq t\leq j}A_{i_t}}+\abs{A_{i_{j+1}}\setminus\parenv{\bigcup_{1\leq t\leq j}A_{i_t}}}-\delta+1\\
=&\abs{\bigcup_{1\leq t\leq j+1}A_{i_t}}-(j+1)(\delta-1),
\end{split}
\end{equation*}
where the last equality holds by \eqref{eqn_rank_j}. Recall that $\rank(\bigcup_{1\leq t\leq j}A_{i_t})<jr$
which means that $\rank(\bigcup_{1\leq t\leq j+1}A_{i_t})<(j+1)r$.

Repeat the preceding analysis $\frac{k}{r}-j$ times, then we can find
$A_{i_t}$ with $1\leq t\leq \frac{k}{r}$ such
that $$\abs{\bigcup_{1\leq
    t\frac{k}{r}}A_{i_t}}-\rank\parenv{\bigcup_{1\leq
    t\frac{k}{r}}A_{i_t}}\geq \frac{k}{r}(\delta-1)$$ and
$\rank\parenv{\bigcup_{1\leq t\frac{k}{r}}A_{i_t}}<k$.  Thus, we can
extend the set $\bigcup_{1\leq t\leq \frac{k}{r}}A_{i_t}$ to be a set
$S$ with $\rank(S)=k-1$ and $$|S|-\rank(S)=|S|-k+1\geq
\abs{\bigcup_{1\leq t\leq
    \frac{k}{r}}A_{i_t}}-\rank\parenv{\bigcup_{1\leq t\leq
    \frac{k}{r}}A_{i_t}}\geq \frac{k}{r}(\delta-1),$$ which means the
desired result follows.
\end{IEEEproof}

\begin{theorem}\label{theorem_repair_sets}
  Let $\cC$ be an optimal $[n,k,d]_q$ linear code with
  $(r,\delta)_i$-locality. If $r|k$ and $r<k$, then there exist
  $\frac{k}{r}$ repair sets
  $\cV=\{A_{i_1},A_{i_2},\dots,A_{i_{\frac{k}{r}}}\}$, such that
  $|A_{i_t}|=r+\delta-1$, $A_{i_t}$ for $1\leq t\leq \frac{k}{r}$ are
  pairwise disjoint and $\rank(\bigcup_{1\leq t\leq
    \frac{k}{r}}A_{i_t})=k$.  Furthermore, the punctured code
  $\cC|_{A_{i_t}}$ for $1\leq t\leq \frac{k}{r}$ is an
  $[r+\delta-1,r,\delta]_q$ MDS code.
\end{theorem}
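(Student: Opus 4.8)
The plan is to use the auxiliary Lemma~\ref{lemma_find_V_1} together with the optimality of $\cC$ to force structure on families of repair sets, and then to build the desired family greedily. Since $r\mid k$, optimality and Lemma~\ref{lemma_bound_i} give $d=n-k+1-\parenv{\tfrac{k}{r}-1}(\delta-1)$. The pivotal consequence I would extract is: \emph{any} $\cV\subseteq\cA$ with $\abs{\cV}\le\tfrac{k}{r}$ satisfying \eqref{eqn_cond_full_rank} consists of pairwise disjoint sets, each of size exactly $r+\delta-1$. Indeed, otherwise $\abs{\cV}(r+\delta-1)-\abs{\bigcup_{A\in\cV}A}\ge 1$, so Lemma~\ref{lemma_find_V_1} yields $S\subseteq[n]$ with $\rank(S)=k-1$ and $\abs{S}\ge k-1+\tfrac{k}{r}(\delta-1)$; a nonzero linear functional vanishing on $\spn(S)$ then produces a nonzero codeword supported on $[n]\setminus S$, of weight at most $n-\abs{S}\le n-k+1-\tfrac{k}{r}(\delta-1)<d$, contradicting that the minimum distance of $\cC$ is $d$.

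For the construction I would proceed greedily, starting from $\cV_1=\{S_{i_1}\}$ for some information coordinate $i_1$ (which satisfies \eqref{eqn_cond_full_rank} vacuously). Two facts are used repeatedly: by Definition~\ref{def_locality}, deleting at most $\delta-1$ coordinates from a repair set preserves its rank (the observation already invoked inside the proof of Lemma~\ref{lemma_find_V_1}), and, by the Singleton bound applied to the punctured code, $\rank(A)\le \abs{A}-\delta+1\le r$ for every repair set $A$. Suppose $\cV_j=\{A_{i_1},\dots,A_{i_j}\}$ with $j<\tfrac{k}{r}$ satisfies \eqref{eqn_cond_full_rank}; by the previous paragraph it is pairwise disjoint with all sets of size $r+\delta-1$, hence $\rank\parenv{\bigcup_{t}A_{i_t}}\le jr<k=\rank(I)=\rank\parenv{\bigcup_{i\in I}S_i}$, so some information coordinate $i$ has $\spn(S_i)\not\subseteq\spn\parenv{\bigcup_{t}A_{i_t}}$; put $A_{i_{j+1}}=S_i$. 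The one thing that needs checking is that $\cV_{j+1}$ still satisfies \eqref{eqn_cond_full_rank}: if it failed for $A_{i_{j+1}}$, or for some $A_{i_t}$ with $t\le j$ (where disjointness of $\cV_j$ reduces the relevant intersection to $A_{i_t}\cap A_{i_{j+1}}$), then the offending set would meet the others in all but at most $\delta-2$ of its (at most $r+\delta-1$) coordinates, and the rank-preservation fact would force $\spn(A_{i_{j+1}})\subseteq\spn\parenv{\bigcup_{t}A_{i_t}}$, contradicting the choice of $A_{i_{j+1}}$. Iterating $\tfrac{k}{r}$ times produces $\cV=\{A_{i_1},\dots,A_{i_{k/r}}\}$ satisfying \eqref{eqn_cond_full_rank}, hence pairwise disjoint with $\abs{A_{i_t}}=r+\delta-1$ for all $t$.

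Finally I would show $\rank\parenv{\bigcup_{t}A_{i_t}}=k$. If not, then $\abs{\bigcup_{t}A_{i_t}}-\rank\parenv{\bigcup_{t}A_{i_t}}=\tfrac{k}{r}(r+\delta-1)-\rank\parenv{\bigcup_{t}A_{i_t}}\ge\tfrac{k}{r}(\delta-1)+1$, so extending $\bigcup_{t}A_{i_t}$ to a set $S$ with $\rank(S)=k-1$ yields $\abs{S}\ge k+\tfrac{k}{r}(\delta-1)$ and, exactly as above, a nonzero codeword of weight $<d$; contradiction. Thus $k=\rank\parenv{\bigcup_{t}A_{i_t}}\le\sum_{t}\rank(A_{i_t})\le\tfrac{k}{r}\cdot r=k$, forcing $\rank(A_{i_t})=r$ for every $t$, so $\cC|_{A_{i_t}}$ is an $[r+\delta-1,r,d_t]_q$ code with $d_t\ge\delta$, and the Singleton bound forces $d_t=\delta$; that is, $\cC|_{A_{i_t}}$ is MDS. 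I expect the main obstacle to be the bookkeeping in the greedy step, namely verifying that adjoining a rank-increasing repair set cannot violate \eqref{eqn_cond_full_rank} — this is where the rank-preservation property of repair sets must be combined with the disjointness supplied by the first paragraph; the remaining arguments are short counting arguments played against the optimal value of $d$.
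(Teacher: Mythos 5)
Your proof is correct and follows the same route as the paper's: greedily pick repair sets so that the rank strictly increases, use Lemma~\ref{lemma_find_V_1} together with the minimum-distance value of an optimal code to force the selected family to be pairwise disjoint of size $r+\delta-1$, show the union has rank $k$, and finish with Singleton for the MDS claim. Your reorganization (first recording that any $\cV$ satisfying \eqref{eqn_cond_full_rank} is automatically pairwise disjoint with all blocks of size $r+\delta-1$, then verifying \eqref{eqn_cond_full_rank} is preserved at each greedy extension) actually makes explicit a bookkeeping point the paper treats tersely: before Lemma~\ref{lemma_find_V_1} can be applied to a pair $\{A_{i_t},A_{i_j}\}$, one needs \emph{both} $|A_{i_t}\cap A_{i_j}|\le|A_{i_j}|-\delta+1$ and $|A_{i_t}\cap A_{i_j}|\le|A_{i_t}|-\delta+1$, and the paper only checks the first. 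One small imprecision in your greedy check to flag: when \eqref{eqn_cond_full_rank} fails for an old $A_{i_t}$ (so $|A_{i_t}\cap A_{i_{j+1}}|>r$), rank preservation applied to the ``offending'' $A_{i_t}$ gives $\spn(A_{i_t})\subseteq\spn(A_{i_{j+1}})$, which is not the containment you assert; the deduction you want instead comes from noting that $|A_{i_t}\cap A_{i_{j+1}}|\ge r+1$ together with $|A_{i_{j+1}}|\le r+\delta-1$ forces $|A_{i_{j+1}}\setminus A_{i_t}|\le\delta-2$, so applying rank preservation to $A_{i_{j+1}}$ yields $\spn(A_{i_{j+1}})\subseteq\spn(A_{i_t})$, contradicting the choice of $A_{i_{j+1}}$ exactly as you intended.
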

\begin{IEEEproof}
Since the code $\cC$ has $(r,\delta)_i$-locality, we have
$\rank(\bigcup_{A\in \cA}A)=k$, where $\cA$ denotes the set of all
repair sets of information symbols. Note that for each repair set
$A\in \cA$, by Definition~\ref{def_locality}, we have $\rank(A)\leq
r$.  This means that we can find $A_{i_t}$ for $1\leq t\leq
\frac{k}{r}$ such that $\rank\left(\bigcup_{1\leq t\leq
  j}A_{i_t}\right)>\rank\left(\bigcup_{1\leq t\leq j-1}A_{i_t}\right)$
for $2\leq j\leq \frac{k}{r}$. We claim that those $\frac{k}{r}$
repair sets are pairwise disjoint and $|A_{i_t}|=r+\delta-1$ for
$1\leq t\leq \frac{k}{r}$. Note that for $j>t$ we have $|A_{i_t}\cap
A_{i_j}|\leq |A_j|-\delta+1$, since $\rank\left(\bigcup_{1\leq t\leq
  j}A_{i_t}\right)>\rank\left(\bigcup_{1\leq t\leq
  j-1}A_{i_t}\right)$.  Now by Lemma~\ref{lemma_find_V_1}, if
$2(r+\delta-1)-|A_{i_t}\cup A_{i_j}|>0$ then we have a set $S$ with
rank $k-1$ and $|S|=k-1+\frac{k}{r}(\delta-1)$, which contradicts with
the fact that $\cC$ is optimal, i.e.,
$d=n-k+1-(\frac{k}{r}-1)(\delta-1)$.  Thus, for $j>t$ and $1\leq
j,t\leq \frac{k}{r}$, we have $2(r+\delta-1)-|A_{i_t}\cup A_{i_j}|=0$,
i.e., $A_{i_t}\cap A_{i_j}=\varnothing$ and
$|A_{i_{t}}|=|A_{i_{j}}|=r+\delta-1$, since $|A_{i_{t}}|\leq
r+\delta-1$ and $|A_{i_{j}}|\leq r+\delta-1$.

Now, we only need to prove that $\rank\parenv{\bigcup_{1\leq t\leq
    \frac{k}{r}}A_{i_t}}=k$.  If that is not the case, then we have
$\rank\parenv{\bigcup_{1\leq t\leq \frac{k}{r}}A_{i_t}}\leq k-1$.
Note that $\abs{\bigcup_{1\leq t\leq
    \frac{k}{r}}A_{i_t}}=k+\frac{k}{r}(\delta-1)$, which is also a
contradiction with $d=n-k+1-(\frac{k}{r}-1)(\delta-1)$.  Therefore,
the desired result follows. Finally, for $1\leq t\leq \frac{k}{r}$,
the fact that $\rank(A_{i_t})=r$, $|A_{i_t}|=r+\delta-1$, and
$d(\cC|_{A_{i_t}})\geq \delta$, shows that $\cC|_{A_{i_t}}$ is an
$[r+\delta-1,r,\delta]_q$ MDS code.
\end{IEEEproof}

We are now in a position to prove Theorem~\ref{theorem_bound_delta>2}.
\begin{IEEEproof}
By Theorem~\ref{theorem_repair_sets}, and up to a rearrangement of the code
coordinates, the parity-check matrix $P$ of code $\cC$ can be arranged
in the following form,
\begin{equation*}
P=\begin{pmatrix}
      L^{(1)} & 0 & 0 & \dots & 0 &0\\
      0 &  L^{(2)} & 0 & \dots & 0 &0\\
      0 & 0 & L^{(3)} & \dots & 0 &0\\
      \vdots & \vdots & \vdots & \ddots & \vdots &\vdots \\
      0 & 0 & 0 & \dots & L^{(\ell)}&0 \\
      H_1 & H_2 & H_3 & \dots & H_\ell & H_{\ell+1}\\
    \end{pmatrix},
\end{equation*}
where $L^{(i)}=(I_{\delta-1}, P_i)$ is a $(\delta-1)\times
(r+\delta-1)$ matrix for all $1\leq i\leq w$ and we do row linear
transformations to make sure each $L^{(i)}$ has canonical form.
Define
\begin{equation}\label{eqn_matrix_p_1}
M_1\triangleq \begin{pmatrix}
I_{\delta-1}&0 & 0 &0& \dots & 0 &0& 0 \\
0 &0&  I_{\delta-1} & 0 & \dots & 0& 0 & 0\\
0 & 0&0 & 0&\dots & 0& 0&0 \\
\vdots&\vdots &\vdots& \vdots &  &\vdots& \vdots&\vdots \\
0 &0& 0 & 0 & \dots &I_{\delta-1}&0&0 \\
0&H^{(1)}_{1} &0& H^{(1)}_{2} & \dots & 0 &H^{(1)}_{\ell}& H_{\ell+1} \\
\end{pmatrix},
\end{equation}
where $I_{\delta-1}$ denotes the $(\delta-1)\times(\delta-1)$ identity
matrix and $H^{(1)}_i=H_{i,2}-H_{i,1}P_i$ with $H_i=(H_{i,1},H_{i,2})$
and $1\leq i\leq \ell$.  For any integer $0\leq a\leq h$,
let \begin{equation*}
M_{2,a}=
\begin{pmatrix}
  H^{(1)}_1 & H^{(1)}_2 & H^{(1)}_3 & \dots &H^{(1)}_{\ell}& H^{(a)}_{\ell+1} \\
\end{pmatrix},
\end{equation*}
where $H^{(a)}_{\ell+1}$ denotes the matrix generated by deleting any $a$ columns
from $H_{\ell+1}$.

Now, for any $0\leq a\leq h$, the fact that any $d-1$ columns of $P$
are linearly independent over $\F_q$ means that any
$T(a)=\lfloor\frac{d-a-1}{\delta}\rfloor$ columns of $M_2$ are
linearly independent over $\F_q$. This is because any $T(a)$ columns
of $M_{2,a}$ correspond to at most $T(a)\delta$ columns of $P$ by
adding the first $\delta-1$ columns in related blocks, and by
\eqref{eqn_matrix_p_1} they have full column rank.  Therefore,
$M_{2,a}$ is the parity-check matrix of a linear code $\cC_{1,a}$,
with parameters $[\ell r+h-a,k'\geq k=\ell r,d_2\geq {T(a)}+1]_q$.

In what follows, we distinguish between two cases, depending on the
parity of $T(a)$.

Case 1: $T(a)$ is odd.  In this case, we consider the shortened code
$\cC_{2,a}$ of $\cC_{1,a}$ with parameters $[\ell r+h-a-1,k'\geq \ell
  r,d_2\geq t]_q$. By the Hamming bound \cite{MS} we have
\begin{equation*}
q^{\ell r}\leq \frac{q^{\ell r+h-a-1}}{\sum_{0\leq i\leq \frac{T(a)-1}{2}}\binom{\ell r+h-a-1}{i}(q-1)^i}\leq \frac{q^{\ell r+h-a-1}}{\binom{\ell r+h-1}{\frac{T(a)-1}{2}}(q-1)^{\frac{T(a)-1}{2}}}\leq \frac{q^{\ell r+h-a-1}}{\left(\frac{\ell r+h-a-1}{\frac{T(a)-1}{2}}\right)^{\frac{T(a)-1}{2}}(q-1)^{\frac{T(a)-1}{2}}},
\end{equation*}
which means
\begin{equation*}
\ell r+h-a-1\leq \frac{T(a)-1}{2(q-1)}q^{\frac{2(h-a-1)}{T(a)-1}}.
\end{equation*}
This is to say,
\begin{equation*}
n\leq \frac{r+\delta-1}{r}\parenv{\frac{T(a)-1}{2(q-1)}q^{\frac{2(h-a-1)}{T(a)-1}}-h+a+1}+h
=\frac{r+\delta-1}{r}\parenv{\frac{T(a)-1}{2(q-1)}q^{\frac{2(h-a-1)}{T(a)-1}}+a+1}-\frac{h(\delta-1)}{r}.
\end{equation*}

Case 2: $T(a)$ is even.  Similarly, by the Hamming bound, we have
\begin{equation*}
q^{\ell r}\leq \frac{q^{\ell r+h-a}}{\sum_{1\leq i\leq \frac{T(a)}{2}}\binom{\ell r+h-a}{i}(q-1)^i}\leq \frac{q^{\ell r+h-a}}{\binom{\ell r+h-a}{\frac{T(a)}{2}}(q-1)^{\frac{T(a)}{2}}}\leq \frac{q^{\ell r+h-a}}{\left(\frac{\ell r+h-a}{\frac{T(a)}{2}}\right)^{\frac{T(a)}{2}}(q-1)^{\frac{T(a)}{2}}},
\end{equation*}
which means
$$n=\ell(r+\delta-1)+h\leq \frac{r+\delta-1}{r}\parenv{\frac{T(a)}{2(q-1)}q^{\frac{2(h-a)}{T(a)}}+a}-\frac{h(\delta-1)}{r}.$$
Finally, recall that by Lemma~\ref{lemma_bound_i}, $\mathcal{C}$ is
optimal means that $h=d-\delta$. This completes the proof.
\end{IEEEproof}

\section{Regular Packings}
\label{app:pack}

We present a direct construction of regular packings based on a kind
of cyclotomy. The generated regular packings are not new, and may
obtained recursively via \cite{J}, and via generalized cyclotomy
\cite{DWX,ZCTY2013}. Thus, the construction and proof herein are
brought for the reader's convenience only.

According to the unique factorization theorem, a positive integer $n$ has the following unique decomposition
\begin{eqnarray*}\label{eqn-facto}
n = p_1^{m_1}p_2^{m_2}\cdots p_u^{m_u},
\end{eqnarray*}
where $p_1<p_2<\cdots<p_u$ are primes and $m_1,m_2,\ldots,m_u$ are positive integers.
For $1\leq i\leq u$, let $\F_{p^{m_i}_i}$ be the finite field with size $p^{m_i}_i$ and
$\alpha_i$ be one of its primitive elements. Let $e$ be a positive integer with
\begin{equation*}
e\mid\gcd(p^{m_1}_1-1,p^{m_2}_2-1,\cdots,p^{m_u}_u-1).
\end{equation*}
For $e>1$, define
\begin{equation*}
{\bm \beta_e}\triangleq(\alpha_1^{\frac{p^{m_1}-1}{e}},\alpha_2^{\frac{p^{m_2}-2}{e}},\dots,\alpha_u^{\frac{p^{m_u}-1}{e}})\in T\triangleq \F_{p^{m_1}_{1}}\times\F_{p^{m_2}_{2}}\times\dots \times\F_{p^{m_u}_{u}}.
\end{equation*}
It is easy to verify that $D_{0}=\langle{\bm \beta_e}\rangle=\{{\bm \beta^0_e},{\bm \beta^1_e},\cdots,{\bm \beta^{e-1}_e}\}\subseteq T^*\triangleq\F^*_{p^{m_1}_{1}}\times \F^*_{p^{m_2}_{2}}\times\dots \times\F^*_{p^{m_u}_{u}}$ is a subgroup of $(T^*,\cdot)$
with order $e$, where $${\bm \beta^i_e}\triangleq(\alpha_1^{i\frac{p^{m_1}-1}{e}},\alpha_2^{i\frac{p^{m_2}-2}{e}},\dots,\alpha_u^{i\frac{p^{m_u}-1}{e}})\in T.$$

For ${J}\in A\triangleq\Z_{\frac{p^{m_1}-1}{e}}\times \Z_{\frac{p^{m_1}-1}{e}}\times \dots\Z_{\frac{p^{m_u}-1}{e}}$,
define $B_{J}\subseteq \Z_e\times T$ as
\begin{equation}\label{eqn_B_j}
B_{J}\triangleq \{(0,{\bm \alpha^{J}}{\bm \beta_e^0}),(1,{\bm \alpha^{J}}{\bm \beta_e^1}),\dots,(e-1,{\bm \alpha^{J}}{\bm \beta_e^{e-1}})\},
\end{equation}
where $\bm \alpha\triangleq(\alpha_1,\alpha_2,\dots,\alpha_u)$ and $\bm \alpha^{J}\triangleq(\alpha^{j_1}_1,\alpha^{j_2}_2,\dots,\alpha^{j_u}_u)$ for $J=(j_1,j_2,\dots,j_u)$.
Based on $B_{J}$s, we can generate a set system as:
\begin{construction}\label{cons_packings}
Let $X=\Z_e\times T$,
then we may construct a set
\begin{equation}\label{eqn_B_j_eps}
\cB=\{B_{{J},\epsilon}=B_{J}+(0,\epsilon)~:~{J}\in A, \epsilon\in T\}.
\end{equation}
\end{construction}

\begin{theorem}
The set system $(X,\cB)$ generated by Construction~\ref{cons_packings} is a
$\frac{\prod_{1\leq i\leq u} (p^{m_i}_1-1)}{e^u}$-regular packing with parameters $(en,e,1)$.
\end{theorem}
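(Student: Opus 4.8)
The plan is to verify, in turn, the three defining features of a $w$-regular $2$-$(en,e,1)$-packing in the sense of Definition~\ref{def_packing}: block size $e$ on a point set of size $en$; every pair of points in at most one block; and every point in exactly $w=\frac{\prod_{1\le j\le u}(p_j^{m_j}-1)}{e^u}$ blocks. Throughout, $e\ge 2$, as in the construction.

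The first step is to rewrite the blocks transparently. Since $\bm\beta_e^{0}$ is the multiplicative identity of $T^{*}$, \eqref{eqn_B_j}--\eqref{eqn_B_j_eps} say
\[
B_{J,\epsilon}=\mathset{(i,\ \bm\alpha^{J}\bm\beta_e^{\,i}+\epsilon)\ :\ i\in\Z_e},
\]
whose $e$ points have pairwise distinct first coordinates, so $|B_{J,\epsilon}|=e$, while $|X|=|\Z_e|\cdot|T|=e\prod_{j}p_j^{m_j}=en$. Everything else rests on one observation: the $j$-th coordinate of $\bm\beta_e$ is $\alpha_j^{(p_j^{m_j}-1)/e}$, which has multiplicative order \emph{exactly} $e$ in $\F^{*}_{p_j^{m_j}}$ (as $\alpha_j$ is primitive and $e\mid p_j^{m_j}-1$). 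Consequently $\bm\beta_e^{\,i}$ and $\bm\beta_e^{\,i'}$ differ in every coordinate whenever $i\not\equiv i'\pmod e$; that is, $\bm\beta_e^{\,i}-\bm\beta_e^{\,i'}\in T^{*}$, and in particular $\bm\beta_e-\bm 1\in T^{*}$.

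Next I would settle the packing condition for a pair of distinct points $(i,s),(i',s')\in X$. If $i=i'$ then $s\ne s'$, and no block can contain both, since a block meets each level $\mathset{i}\times T$ in a single point. If $i\ne i'$, any block $B_{J,\epsilon}$ containing both must satisfy $\bm\alpha^{J}\bm\beta_e^{\,i}+\epsilon=s$ and $\bm\alpha^{J}\bm\beta_e^{\,i'}+\epsilon=s'$; subtracting and inverting $\bm\beta_e^{\,i}-\bm\beta_e^{\,i'}\in T^{*}$ forces $\bm\alpha^{J}=(s-s')(\bm\beta_e^{\,i}-\bm\beta_e^{\,i'})^{-1}$, a fixed element of $T^{*}$. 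Since $J\mapsto\bm\alpha^{J}$ is injective on $A=\prod_{j}\Z_{(p_j^{m_j}-1)/e}$, at most one $J$ is admissible, and it pins down $\epsilon$; hence at most one block contains the pair.

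Finally, I would obtain regularity from symmetry rather than by enumeration. The maps $\sigma_{\epsilon'}\colon(i,t)\mapsto(i,t+\epsilon')$ and $\phi\colon(i,t)\mapsto(i+1,\bm\beta_e t)$ are bijections of $X$, and a direct substitution gives $\sigma_{\epsilon'}(B_{J,\epsilon})=B_{J,\epsilon+\epsilon'}$ and $\phi(B_{J,\epsilon})=B_{J,\bm\beta_e\epsilon}$, so both permute $\cB$. The group they generate is transitive on $X$ (apply a power of $\phi$ to match first coordinates, then a translation), so the number of blocks through a point does not depend on the point; evaluating at $(0,\bm 0)$, which lies in $B_{J,\epsilon}$ exactly when $\epsilon=-\bm\alpha^{J}$, we get $|A|$ admissible pairs $(J,\epsilon)$, and these yield $|A|$ \emph{distinct} blocks because $(J,\epsilon)\mapsto B_{J,\epsilon}$ is injective --- from the level-$0$ and level-$1$ points of a block one recovers $\bm\alpha^{J}(\bm\beta_e-\bm 1)$, hence $\bm\alpha^{J}$ (using $\bm\beta_e-\bm 1\in T^{*}$), hence $J$ and $\epsilon$. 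Thus $w=|A|=\frac{\prod_{1\le j\le u}(p_j^{m_j}-1)}{e^u}$. I expect the only delicate point to be this algebraic bookkeeping: checking that the relevant differences are units of the \emph{product} ring $T^{*}$ coordinatewise, and keeping in mind that $A$ indexes coset representatives of $\langle\bm\beta_e\rangle$, so $J\mapsto\bm\alpha^{J}$ is injective but far from onto $T^{*}$ --- which is exactly why the pair bound is ``at most one'' rather than ``exactly one''.
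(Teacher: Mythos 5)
Your argument is correct, and on the packing condition it is essentially the same calculation as the paper's, phrased differently: the paper supposes a pair lies in two blocks $B_{J_1,\epsilon_1}$, $B_{J_2,\epsilon_2}$, subtracts the two evaluation equations, cancels the unit $\bm\beta_e^{i}-\bm\beta_e^{i'}\in T^*$, and uses injectivity of $J\mapsto\bm\alpha^J$ on $A$ to force $(J_1,\epsilon_1)=(J_2,\epsilon_2)$ --- which is exactly the ``solve for $\bm\alpha^J$'' step you carry out directly instead of by contradiction. The pivotal algebraic fact in both cases is that $\bm\beta_e^{i}-\bm\beta_e^{i'}$ is a unit of the product ring $T$ when $i\not\equiv i'\pmod e$, because each coordinate of $\bm\beta_e$ has multiplicative order exactly $e$.

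Where you genuinely add something is on regularity. The paper's proof opens with ``it is sufficient to prove that any pair of elements of $X$ appears in at most one of the blocks'' and then verifies only that; it never establishes that every point lies in exactly $\frac{\prod_j(p_j^{m_j}-1)}{e^u}$ blocks, even though that is part of the theorem statement. Your symmetry argument --- observing that the translations $\sigma_{\epsilon'}$ and the twist $\phi:(i,t)\mapsto(i+1,\bm\beta_e t)$ permute $\cB$ and act transitively on $X$, then counting blocks through $(0,\bm 0)$ while using $\bm\beta_e-\bm 1\in T^*$ to show $(J,\epsilon)\mapsto B_{J,\epsilon}$ is injective --- fills that gap cleanly. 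One could alternatively get regularity by a double-count once the injectivity of $(J,\epsilon)\mapsto B_{J,\epsilon}$ is known (each point of $\{i\}\times T$ is covered the same number of times by blocks with a fixed $J$, namely once, over the $|T|$ translates), but your group-theoretic route is tidy and self-contained. In short: correct, same engine for the packing bound, and a more complete treatment than the paper's.
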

\begin{IEEEproof}
By Construction~\ref{cons_packings}, it is sufficient to prove that
any pair of elements of $X$ appears in at most one of the blocks in
$\cB$.  Assume to the contrary that there exists a pair
$\{x_1=(i_1,\gamma_1), x_2=(i_2,\gamma_2)\}\subseteq X$ that appears
in two blocks, i.e., $\{x_1,x_2\}\subseteq B_{J_1,\epsilon_1}$ and
$\{x_1,x_2\}\subseteq B_{J_2,\epsilon_2},$ where $i_1,i_2\in \Z_{e}$ and
$\gamma_1,\gamma_2\in T.$ By \eqref{eqn_B_j} and \eqref{eqn_B_j_eps},
there exist four elements $t_{1,1},t_{1,2},t_{2,1},t_{2,2}\in \Z_e$
such that
\begin{equation}\label{eqn_cond1}
(0,\epsilon_1)+(t_{1,1},{\bm\alpha^{J_1}}{{\bm \beta}^{t_{1,1}}_e})=(i_1,\gamma_1)=(0,\epsilon_2)+(t_{2,1},{\bm\alpha^{J_2}}{{\bm \beta}^{t_{2,1}}_e})
\end{equation}
and
\begin{equation}\label{eqn_cond2}
(0,\epsilon_1)+(t_{1,2},{\bm\alpha^{J_1}}{{\bm \beta}^{t_{1,2}}_e})=(i_2,\gamma_2)=(0,\epsilon_2)+(t_{2,2},{\bm\alpha^{J_2}}{{\bm \beta}^{t_{2,2}}_e}).
\end{equation}
These equalities imply that $t_{1,1}=t_{2,1}$, $t_{1,2}=t_{2,2}$ and
$${\bm\alpha^{J_1}}{\bm \beta}_e^{t_{1,2}-t_{1,1}}={\bm\alpha^{J_2}}{\bm \beta}^{t_{2,2}-t_{2,1}}_e,$$
i.e.,
\begin{equation}\label{eqn_alpha_J_12}
{\bm\alpha^{J_1}}={\bm\alpha^{J_2}}.
\end{equation} Note that ${J_1},{J_2}\in A=\Z_{\frac{p^{m_1}-1}{e}}\times \Z_{\frac{p^{m_1}-1}{e}}\times \dots\Z_{\frac{p^{m_u}-u}{e}}$
and $\bm \alpha=(\alpha_1,\alpha_2,\cdots, \alpha_u)$, where
$\alpha_i$ is a primitive element of $\F_{p^{m_i}_i}$.  Thus, by
\eqref{eqn_alpha_J_12}, we have $J_1=J_2$. Again by
\eqref{eqn_cond1} and \eqref{eqn_cond2}, we have
$\epsilon_1=\epsilon_2$, a contradiction. Thus, the desired result
follows.
\end{IEEEproof}

\end{document}